\newcolumntype{P}[1]{>{\centering\arraybackslash}p{#1}}
\newcommand{\de}[1]{\left( #1 \right)}
\newtheorem{thm}{Theorem}
\newtheorem{conjec}{Conjecture}
\def\diracspacing{0.7pt}
\newcommand{\ketbra}[2]{| \hspace{\diracspacing} #1 \rangle \langle #2 \hspace{\diracspacing} |} % ketbra with different vectors
\newcommand{\norm}[1]{\left\|#1\right\|}
\newcommand{\abs}[1]{\left|#1\right|}
\DeclareMathOperator{\Tr}{Tr}
\newcommand{\id}{\mathbbm{1}}
\newcommand{\BH}{\beta_{\mathrm{H}}}
\newcommand{\BM}{\beta_{\mathrm{M}}}
\newcommand{\im}{\mathbbm{i}}
\newtheorem{Lmm}{Lemma}
\begin{document}

\title{Boosting device-independent cryptography with tripartite nonlocality}

\author{Federico Grasselli}
\email{federico.grasselli@hhu.de}
\orcid{0000-0003-2966-7813}
\author{Gl\'{a}ucia Murta}
\author{Hermann Kampermann}
\author{Dagmar Bru\ss}
\affiliation{Institut f\"ur Theoretische Physik III, Heinrich-Heine-Universit\"at D\"usseldorf, Universit\"atsstraße 1, D-40225 D\"usseldorf, Germany}

\maketitle

\begin{abstract}
Device-independent (DI) protocols, such as DI conference key agreement (DICKA) and DI randomness expansion (DIRE), certify private randomness by observing nonlocal correlations when two or more parties test a Bell inequality. While most DI protocols are restricted to bipartite Bell tests, harnessing multipartite nonlocal correlations may lead to better performance. Here, we consider tripartite DICKA and DIRE protocols based on testing multipartite Bell inequalities, specifically: the Mermin-Ardehali-Belinskii-Klyshko (MABK) inequality, and the Holz and the Parity-CHSH inequalities introduced in the context of DICKA protocols. We evaluate the asymptotic performance of the DICKA (DIRE) protocols in terms of their conference key rate (net randomness generation rate), by deriving lower bounds on the conditional von Neumann entropy of one party's outcome and two parties' outcomes. For the Holz inequality, we prove a tight analytical lower bound on the one-outcome entropy and conjecture a tight lower bound on the two-outcome entropy. We additionally re-derive the analytical one-outcome entropy bound for the MABK inequality with a much simpler method and obtain a numerical lower bound on the two-outcome entropy for the Parity-CHSH inequality. Our simulations show that DICKA and DIRE protocols employing tripartite Bell inequalities can significantly outperform their bipartite counterparts. Moreover, we establish that genuine multipartite entanglement is not a precondition for multipartite DIRE while its necessity for DICKA remains an open question.
\end{abstract}

\section{Background}

The security of practical quantum cryptographic protocols \cite{pir-advances-qcrypto,review-pan} holds as far as the theoretical model of the quantum devices used in the protocol accurately describes their experimental implementation. Indeed, any small deviation from the ideal functionality of a quantum device can be exploited by an eavesdropper to breach the security of the protocol, as demonstrated by several quantum hacking attacks \cite{makarov-nature,makarov-nature2,review-pan}.

This leaves the user(s) with only two possibilities to ensure that their quantum cryptographic protocol is actually secure. They can either thoroughly characterize the devices being used, verifying that every assumption on the device is met in practice. However, this procedure might be challenging and beyond the capabilities of an end user. The other possibility is represented by device-independent (DI) cryptography, whose security is guaranteed independently of the inner workings of the employed devices \cite{YaoMayers,Acin2006,BarrentKent}.

The typical setting of a DI protocol is the Bell scenario \cite{Bellineq}. Here, two (or more) parties hold uncharacterized devices, treated as ``black boxes''. Each party can interact with their device by selecting an input, which prompts the device to return an output. In a quantum realization of the DI protocol, the device corresponds to a quantum system and the party interacts with it by choosing a measurement setting (input) and collecting the measurement outcome (output). By repeating this procedure a sufficient number of times and by revealing a fraction of their input-output pairs, the parties can characterize the probability distribution of the outputs, given the inputs.

To each Bell scenario can be associated a correlation inequality, called Bell inequality \cite{reviewBell}. If the distribution of the outputs observed by the parties violates a Bell inequality, the outputs are said to be nonlocally correlated. This occurs, e.g., if the parties perform appropriate measurements on their share of an entangled state in a loophole-free\footnote{Note that, since we assume quantum mechanics to hold, the Bell experiment does not need to close the locality loophole, as far as the parties' devices are isolated.} Bell experiment \cite{loopholefree1,loopholefree2}.

The intuition behind the security principle of DI protocols is that an eavesdropper cannot have full information on the parties' outputs if they are nonlocally correlated, regardless of the physical implementation of the devices. In fact, if the eavesdropper held a classical variable fully predicting the parties' outputs, that would represent a local explanation of the observed correlations \cite{AcinBrunner2007}. Therefore, the nonlocality of the outcomes, certified in a device-independent manner by the violation of a Bell inequality, can be used to infer randomness and secrecy with respect to an eavesdropper.
In particular, by observing nonlocality in the outcomes, DI quantum key distribution (DIQKD) \cite{AcinBrunner2007,PironioAcin2009,Masanes2011,VidickDIQKD,Arnon-Friedman2018} and its multiparty generalization, DI conference key agreement (DICKA) \cite{SG01,SG_pra_01,Holz2019DICKA,JeremyParityCHSH,CKA-review}, enable a set of parties to share a common secret key, while DI randomness expansion (DIRE) \cite{ColbeckThesis2006,Pironio2010,Colbeck2011,securityDIrandomness1,securityDIrandomness2,securityDIrandomness3,Woodhead2018} expands the initial share of private randomness of one or more parties.

In order to benchmark DI protocols based on different Bell scenarios and Bell inequalities, one needs to quantify the minimum amount of secret randomness in the parties' outcomes, for a given Bell violation. The figure of merit is the conditional von Neumann entropy of the parties’ outcomes, given the eavesdropper’s quantum side information, up to corrections due to finite-size effects. Indeed, the conditional von Neumann entropy of a set of outcomes determines the rate of secret bits generated by DIRE, DIQKD and DICKA protocols.

Although any lower bound on this quantity is also a valid measure of secret randomness, tighter bounds imply higher secret bit generation rates and hence more efficient and robust DI protocols. This is particularly important since today's quantum technology is mature enough to enable the experimental implementation of DI protocols, as testified by recent DIRE \cite{DIREexp1,DIREexp2} and DIQKD experiments \cite{DIQKDexp1,DIQKDexp2,DIQKDexp3}.

The derivation of tight bounds on the conditional von Neumann entropies relevant for the security of DI protocols has been a major theoretical challenge in the field of DI cryptography. In the bipartite DI scenario, tight analytical bounds on one-outcome entropies \cite{PironioAcin2009} were derived for the CHSH inequality \cite{CHSH} and its variants \cite{AsymCHSH-Woodhead,masini-entropy-bounds,Sekatski-entropy-bounds}. Recently, two-outcome entropy bounds were investigated in~\cite{Colbeck2021} for the CHSH inequality. In parallel, reliable numerical lower bounds on the conditional von Neumann entropy can be obtained with the techniques developed in \cite{Brown-numerical-vNentropy-bounds,Tan-numerical-vNentropy-bounds,Renner-SDP}.

\section{Summary of results} \label{sec:summary-results}

In this work we consider a tripartite DI scenario where three unknown quantum systems are individually measured by Alice, Bob and Charlie, respectively. Every party can perform one of two measurements, labelled by inputs $0$ and $1$, each of which yields a binary outcome, either $0$ or $1$. The two measurements are: $A_0$ and $A_1$ for Alice, $B_0$ and $B_1$ for Bob and $C_0$ and $C_1$ for Charlie. In this scenario, the parties can either test a tripartite Bell inequality or a bipartite Bell inequality, in which case one of the parties remains idle.

For brevity of notation, we define: $B_{\pm}:=(B_0 \pm B_1)/2$ and $C_{\pm}:=(C_0 \pm C_1)/2$, while $\overset{L}{\leq}$ ($\overset{Q}{\leq}$) indicates the local (quantum) bound. Moreover, $\left<A_xB_yC_z\right>$ represents the correlation function:
\begin{align}
   \sum_{a,b,c} (-1)^{a+ b+ c} \Pr[A_x=a,B_y=b,C_z=c],
\end{align}
and similarly for the two-party correlators. The Bell inequalities considered in this work are the following:

\begin{itemize}
    \item The tripartite \emph{Holz inequality} \cite{Holz2019DICKA},
    \begin{equation}
    \begin{split}
     \beta_{\mathrm{H}} =& \left<A_1B_{+}C_{+}\right>-\left<A_0B_{-}\right>\\& -\left<A_0C_{-}\right> -\left<B_{-}C_{-}\right> \overset{L}{\leq} 1 \overset{Q}{\leq} 3/2.
     \end{split}\label{timo-ineq}
\end{equation}

    \item The tripartite \emph{Parity-CHSH inequality} \cite{JeremyParityCHSH},
    \begin{equation}
        \beta_{\mathrm{pC}} = \left<A_1B_{-}C_0\right>+\left<A_0B_{+}\right> \overset{L}{\leq} 1 \overset{Q}{\leq} \sqrt{2}. \label{parity-chsh-ineq}
    \end{equation}

    \item The tripartite \emph{Mermin-Ardehali-Belinskii-Klyshko (MABK) inequality} \cite{Mermin,Ardehali,BK93},
    \begin{align}
    \begin{split}
    \BM=&\braket{A_0 B_0 C_1} + \braket{A_0 B_1 C_0} \\
    & +\braket{A_1 B_0 C_0} 
    - \braket{A_1 B_1 C_1} \overset{L}{\leq} 2 \overset{Q}{\leq} 4.\label{mabk-ineq}
    \end{split}
    \end{align}
    \item The bipartite family of \emph{asymmetric Clauser-Horne-Shimony-Holt (CHSH) inequalities} \cite{AsymCHSH,AsymCHSH-Woodhead}, parametrized by $\alpha\in\mathbbm{R}$,
        \begin{align}
        \begin{split}
            \beta_{\alpha\mathrm{C}}&= 2\alpha\braket{A_0 B_+} + 2\braket{A_1 B_-} \\ 
            &\overset{L}{\leq} \left\lbrace \begin{array}{ll}
            2 \abs{\alpha}     &\mbox{if } \abs{\alpha}>1 \\
            2   & \mbox{if } \abs{\alpha}\leq 1
        \end{array}\right. 
        \\ &\overset{Q}{\leq} 2 \sqrt{1+\alpha^2}.
        \end{split} \label{alphachsh-ineq}
        \end{align}
\end{itemize}
The goal of our work is to benchmark the performance of DICKA and DIRE protocols based on the above Bell inequalities and determine which Bell inequality is optimal for each cryptographic task.

The crucial ingredient for our comparison is the derivation of tight analytical and numerical lower bounds on one-outcome conditional entropies, $H(A_0|E)$, and two-outcome conditional entropies, $H(A_0 B_0|E)$, as a function of the violation of the considered Bell inequality. Indeed, the entropy $H(A_0|E)$ determines the conference key rate of DICKA protocols, while the two-outcome entropy $H(A_0 B_0|E)$ determines the net randomness generation rate of our DIRE protocols.

In order to have a fair comparison, we provide the parties with an equivalent entanglement resource in each Bell scenario, which is chosen to be a noisy version of the entangled state which maximally violates each of the Bell inequalities. The parties then perform the measurements that would lead, in the absence of noise, to maximal Bell violation. In particular, when the parties test the Holz, Parity-CHSH and MABK inequality, they share a locally-depolarized GHZ state:
\begin{equation}
    \rho^{(3)}= \mathcal{D}^{\otimes 3}(\ketbra{\mathrm{GHZ}}{\mathrm{GHZ}}), \label{depolGHZ}
\end{equation}
where the map $\mathcal{D}$ acts on every qubit as follows:
\begin{align}
    \mathcal{D}(\sigma) = p \sigma +\frac{1-p}{2} \id, \label{local-depol}
\end{align}
and where $\ket{\mathrm{GHZ}}=(\ket{000} + \ket{111})/\sqrt{2}$ is the GHZ state. Conversely, when the parties test the (bipartite) asymmetric CHSH inequalities, they share the bipartite version of the GHZ state, namely the Bell state $\ket{\Phi^+}=(\ket{00}+\ket{11})/\sqrt{2}$, also subjected to local depolarization:
\begin{equation}
    \rho^{(2)} = \mathcal{D}^{\otimes 2} (\ketbra{\Phi^+}{\Phi^+}). \label{depolBell+}
\end{equation}
The noise parameter, $p$, is linked to the probability that each qubit is depolarized\footnote{The effect of photon loss would be modelled similarly to local depolarization \eqref{local-depol}, as: $\mathcal{L}(\sigma) = p \sigma +(1-p) \ketbra{vac}{vac}$, where $\ket{vac}$ is the vacuum. Since the detection loophole forbids discarding no-detection events, assigning a random measurement outcome when a photon is lost would have the same effect of local depolarization \eqref{local-depol}. Hence, in our simulations $1-p$ can also be seen as the probability that a photon is lost.}, given by $1-p$. We also study the case in which the ideal GHZ and Bell states are globally depolarized:
\begin{align}
    \rho^{(3)}= p \ketbra{\mathrm{GHZ}}{\mathrm{GHZ}} +(1-p) \frac{\id}{8}, \label{global-depol-GHZ}
\end{align}
for three parties testing a tripartite Bell inequality and
\begin{align}
    \rho^{(2)}= p \ketbra{\mathrm{\Phi^+}}{\mathrm{\Phi^+}} +(1-p) \frac{\id}{4}, \label{global-depol-Bell}
\end{align}
for two parties testing a bipartite Bell inequality. In this case, $1-p$ is the probability that the three-qubit (two-qubit) state is depolarized. Further details on the optimal measurement settings of each inequality are given in Appendix~\ref{app:bounds-and-optimal-measurements}.

We compare the performance of DICKA protocols based on the inequalities \eqref{timo-ineq}, \eqref{parity-chsh-ineq} and \eqref{alphachsh-ineq}, by computing their asymptotic conference key rates for the two noise models outlined above. Similarly, we compare DIRE protocols based on each of the four Bell inequalities (for the bipartite Bell inequality we set $\alpha=1$, which recovers the CHSH inequality) in terms of their asymptotic net randomness generation rate, when the randomness is extracted from the outcomes of two parties. 

For both DICKA and DIRE protocols, we observe that tripartite Bell inequalities can provide a performance advantage over the family of bipartite Bell inequalities in \eqref{alphachsh-ineq}, which are currently regarded as being optimal for DI tasks such as DIQKD \cite{AsymCHSH-Woodhead,Sekatski-entropy-bounds}.

The performance comparisons are enabled by bounds on the conditional von Neumann entropy. The derivation of conditional entropy bounds in the multipartite scenario was first addressed in \cite{JeremyMABK} and then more thoroughly in \cite{Grasselli-PRXQuantum}, where one-outcome and two-outcome entropy bounds were derived for the MABK inequality, a full-correlator Bell inequality \cite{Mermin,Ardehali,BK93}.

In this work, we take a significant step further and provide tight analytical entropy bounds as a function of the violation of the Holz inequality. More precisely, we derive a tight analytical bound for the one-outcome entropy $H(A_0|E)$ (see Theorem~\ref{thm:Hbound}) and provide an analytical conjecture of the tight bound for the two-outcome entropy $H(A_0B_0|E)$  (Conjecture~\ref{conj:H2bound}), which is robustly confirmed by numerical data. To the best of our knowledge, our bound on $H(A_0|E)$ is the first tight analytical bound derived for a non-full-correlator Bell inequality, like the Holz inequality. And our conjectured bound on $H(A_0B_0|E)$ is the first multi-outcome analytical bound for a non-full-correlator Bell inequality.

The derivation of the analytical bound on $H(A_0|E)$ for the tripartite Holz inequality builds on an entropic uncertainty relation, similarly to the approach used in \cite{AsymCHSH-Woodhead,masini-entropy-bounds} for the CHSH inequality. However, the increased number of parties and the asymmetry with respect to permutations of parties makes our derivation highly non-trivial. We report the full proof of the bound and of its tightness in Appendix~\ref{app:Holz-one-outcome-proof}. By following the same approach, in Appendix~\ref{app:MABK-one-outcome-proof} we rederive the analytical bound on $H(A_0|E)$ for the MABK inequality with a proof that is considerably simpler than the derivation in \cite{Grasselli-PRXQuantum}. Besides, in Appendix~\ref{app:tightness-parityCHSH-bound} we prove that the analytical lower bound on $H(A_0|E)$ for the Parity-CHSH inequality, originally derived in \cite{JeremyParityCHSH}, is actually tight.

We additionally compute numerical bounds on the two-outcome entropy $H(A_0B_0|E)$ for the Parity-CHSH and CHSH inequalities, which are used to compute the corresponding DIRE rates. A detailed calculation of the bounds is provided in Appendix~\ref{app:two-outcome-numerical}. The numerical bound for the Parity-CHSH inequality is a new result, while the one for the CHSH inequality has been independently derived in \cite{Colbeck2021}.

The remainder of the paper is structured as follows. In Sec.~\ref{sec:entropy-bounds} we present our analytical and numerical entropy bounds. In Sec.~\ref{sec:DICKA} we apply our bounds to DICKA protocols and compare their performance to deduce which Bell inequality is optimal. In Sec.~\ref{sec:DIRE} we perform an analogous comparison for DIRE protocols. We discuss our results and conclude in Sec.~\ref{sec:discussion}, where Table~\ref{tab:summary} provides an overview of all the considered entropy bounds. The analytical and numerical calculations of the entropy bounds are presented in Appendices~\ref{app:Holz-one-outcome-proof} to \ref{app:tightness-parityCHSH-bound}, while in Appendix~\ref{app:bounds-and-optimal-measurements} we summarize the Bell inequalities and their entropy bounds.

\section{One-outcome and two-outcome entropy bounds} \label{sec:entropy-bounds}

In this section, we present our analytical bounds on the conditional von Neumann entropy when the parties test the Holz inequality. Additionally, we compare the analytical and numerical bounds derived in this work with other bounds, when the parties test the inequalities \eqref{timo-ineq}-\eqref{alphachsh-ineq}. The bounds are then used to compute DICKA and DIRE rates, respectively, in Sec.~\ref{sec:DICKA} and \ref{sec:DIRE}.

\subsection{Single party's outcome} \label{sec:one-outcome-entropy}

We obtain a tight analytical lower bound on the conditional entropy of Alice's outcome $A_0$, when Alice, Bob and Charlie test the Holz inequality \eqref{timo-ineq}.

\begin{thm}\label{thm:Hbound}
Let Alice, Bob and Charlie test the Holz inequality \cite{Holz2019DICKA} and let $\beta_{\mathrm{H}}$ be the expected Bell value. Then, the von Neumann entropy of Alice's outcome $A_0$ conditioned on Eve's information $E$ satisfies
\begin{equation}
    H(A_0|E) \geq  1-h\left[\frac{1}{4}\left( \beta_{\mathrm{H}} + 1+ \sqrt{\beta_{\mathrm{H}}^2 + 2\beta_{\mathrm{H}} -3}\right)\right] \label{Hbound},
\end{equation}
where $h(x)=-x \log_2 x + (1-x) \log_2 (1-x)$ is the binary entropy. Moreover, the bound is tight. That is, for every Bell value $\beta_{\mathrm{H}}$ there exists a quantum strategy (state and measurements) which attains that Bell value and whose conditional entropy is given by the r.h.s. of \eqref{Hbound}.
\end{thm}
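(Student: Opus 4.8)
The plan is to adapt the entropic-uncertainty-relation strategy that \cite{AsymCHSH-Woodhead,masini-entropy-bounds} use for CHSH to the tripartite, permutation-asymmetric Holz inequality. I would begin with the standard device-independent reductions: purify Eve's system so that the global state $|\psi\rangle_{ABCE}$ is pure, Naimark-dilate all measurements to be projective, and apply Jordan's lemma to each party's pair of $\pm 1$-valued observables. This simultaneously block-diagonalizes Alice's, Bob's and Charlie's spaces into (at most) qubit blocks; promoting the block labels to a classical register held by Eve, it then suffices — by convexity of the right-hand side of \eqref{Hbound} as a function of $\beta_{\mathrm{H}}$ on $[1,3/2]$, together with the facts that conditioning reduces entropy and the Bell value is linear — to prove the bound conditioned on a single triple of qubit blocks and average. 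Within such a block, take $A_0=\sigma_z$ without loss of generality, parametrize $A_1$ by a Bloch angle, and likewise parametrize $B_0,B_1$ (hence $B_\pm$) and $C_0,C_1$, using the residual freedom to put $|\psi\rangle$ in a real/Schmidt-type form to trim the number of free parameters.

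Next comes the core step. I would apply an entropic uncertainty relation (as in \cite{AsymCHSH-Woodhead,masini-entropy-bounds}) to Alice's key measurement $A_0$ and an auxiliary qubit measurement $\bar A$ on Alice's system, of the form $H(A_0|E) + H(\bar A|BC) \ge \log_2 (1/c)$, where $c$ is the maximal overlap of the two eigenbases. Choosing $\bar A$ complementary to $A_0$ (so $c=1/2$) reduces the task to upper bounding $H(\bar A|BC)$ in terms of $\beta_{\mathrm{H}}$. For this I would exhibit an explicit guessing strategy in which Bob and Charlie each measure (a fixed rotation of) one of their qubit observables and combine their outcomes to estimate $\bar A$; since $\bar A$ aligns with $A_1$, and $A_1$ enters $\beta_{\mathrm{H}}$ only through the correlator $\langle A_1 B_+C_+\rangle$ — which couples Alice's complementary direction precisely to the product of Bob's and Charlie's measurements — the success probability of this guess is governed by the correlators making up $\beta_{\mathrm{H}}$, whence $H(\bar A|BC) \le h(P_{\mathrm{guess}})$ with $P_{\mathrm{guess}}$ a function of the Bell value. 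Carrying out the remaining optimization over the Bloch angles subject to attaining Bell value $\beta_{\mathrm{H}}$ should produce the closed form in \eqref{Hbound}, with $\sqrt{\beta_{\mathrm{H}}^2+2\beta_{\mathrm{H}}-3}=\sqrt{(\beta_{\mathrm{H}}+1)^2-4}$ emerging from this constrained extremization.

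The main obstacle is exactly that constrained optimization. Unlike CHSH, the Holz inequality is not symmetric under permutations of the parties, so the choice of $\bar A$, the guessing measurement for the joint $BC$ system, and the map from the Bloch parametrization to $\beta_{\mathrm{H}}$ must all be made compatibly, and one must either verify that the extremal $\bar A$ really is complementary to $A_0$ or else carry the general overlap $c$ through the whole calculation and optimize over it as well; this is where the bulk of the work lies. Finally, for tightness I would run the construction in reverse: combine the states that saturate the entropic uncertainty relation with a GHZ-like state of Alice, Bob and Charlie and the measurement directions that saturate each inequality invoked above, and check directly that for every $\beta_{\mathrm{H}}\in[1,3/2]$ the resulting quantum strategy achieves Bell value $\beta_{\mathrm{H}}$ with $H(A_0|E)$ equal to the right-hand side of \eqref{Hbound}. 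Verifying that this explicit family is genuinely optimal — and that no choice of blocks or parameters does better — is the nontrivial content of the tightness claim.
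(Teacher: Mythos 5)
Your overall route is the same as the paper's: Jordan's lemma plus convexity to reduce to qubit blocks, the entropic uncertainty relation with $A_0=Z$ and a complementary $X$-type measurement on Alice, data processing to bound $H(X|BC)$ by $h\bigl((1+\abs{\braket{XXX}})/2\bigr)$, and then a relation between that correlator and $\BH$. The genuine gap is that the decisive step is only announced, not performed. The entire difficulty of the theorem is the non-linear inequality
\begin{equation*}
\abs{\braket{XXX}} \;\geq\; \frac{\BH}{2}-\frac{1}{2}+\frac{1}{2}\sqrt{\BH^2+2\BH-3},
\end{equation*}
i.e.\ Eq.~\eqref{sketch2}, which must hold for \emph{every} state and every choice of measurement angles compatible with Bell value $\BH$ — not just for the extremal strategy. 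Your sketch asserts this will "emerge from constrained extremization", but $\braket{XXX}$ is not one of the Holz correlators: in the reduced form \eqref{reduced-bellvalue} it enters only through $\sin a_1\cos b_-\cos c_-\braket{XXX}$, alongside $\braket{ZXX}$ and three $Z$-type correlators with independent coefficients, so the "guessing probability governed by the correlators making up $\BH$" claim does not translate into a bound without substantial work. In the paper this step requires (i) a non-trivial reduction of the shared state to a GHZ-block-diagonal form with purely real coherences (Lemma~\ref{lmm:simpl-state}), proved by a dephasing/conjugation argument showing Eve loses nothing, and (ii) a long chain of algebraic estimates over the state parameters and angles (the system \eqref{I1}--\eqref{I2}, a discriminant argument, and a quartic-polynomial case analysis), none of which your proposal supplies or replaces with an alternative argument. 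Saying "this is where the bulk of the work lies" concedes the gap rather than closing it.

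Two smaller points. First, for tightness you only need to exhibit, for each $\BH\in(1,3/2]$, one explicit strategy attaining equality in \eqref{Hbound} — the paper uses the rank-two GHZ-diagonal family $\tau(\nu)=\nu\proj{\psi_{0,0,0}}+(1-\nu)\proj{\psi_{1,0,0}}$ with optimized angles, giving $\BH=2\nu+1/(2\nu)-1$ and $H(Z|E)=1-h(\nu)$. Your closing remark that one must also verify "no choice of blocks or parameters does better" conflates tightness with the lower bound itself: optimality over all strategies is exactly what the already-proved inequality \eqref{Hbound} provides, so no further global verification is needed. Second, you would need to either prove that the complementary choice $\bar A\perp A_0$ is indeed the right one or carry the general overlap $c$ through; the paper sidesteps this by fixing the local frames so that $A_0=Z$ and bounding $H(Z|E)$ via the $X$-basis directly, with all freedom in $a_1,b_\pm,c_\pm$ retained inside the inequality that you have not yet established.
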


Here we provide a sketch of the proof of Theorem~\ref{thm:Hbound}, a detailed proof is presented in Appendix~\ref{app:Holz-one-outcome-proof}.

\begin{proof}[Proof sketch]
First, we employ Jordan's Lemma \cite{Masanes06,Grasselli-PRXQuantum} to simplify the problem at hand, without loss of generality. In particular, we show that we can focus on deriving a convex lower bound on $H(A_0|E)$ when Alice, Bob and Charlie share a three-qubit state and perform rank-one binary projective measurements on their respective qubits.

We identify the plane induced by the qubit observables of each party to be the $(x,z)$ plane of the Bloch sphere. Then, we choose the local reference frames such that the Bell value \eqref{timo-ineq} is simplified and Alice's measurement $A_0$ corresponds to the Pauli measurement $\sigma_z$: $A_0=Z$. With these choices, we show that the three-qubit state $\rho_{ABC}$ shared by the parties can be assumed to be block-diagonal in the GHZ basis, without loss of generality. We are thus left to derive a lower bound on $H(Z|E)$.

The next step is to employ the uncertainty relation for von Neumann entropies \cite{uncertrel2010} in combination with other properties of the conditional von Neumann entropy to obtain the lower bound:
\begin{align}
    H(Z|E) \geq 1- h\left(\frac{1+\abs{\braket{XXX}}}{2}\right), \label{sketch1}
\end{align}
where $\braket{XXX}$ is the expectation value of a $\sigma_x$ measurement performed by all parties. Note that a similar step to the one above is employed in \cite{AsymCHSH-Woodhead} to derive an entropy bound when two parties test the asymmetric CHSH inequality.

The last decisive step of our proof lies in the ability to link the expectation value $\braket{XXX}$ to the Bell value $\beta_H$. We show that they can be related by the following non-linear inequality:
\begin{equation}
    \abs{\braket{XXX}} \geq \frac{\beta_{\mathrm{H}}}{2} - \frac{1}{2} + \frac{1}{2}\sqrt{\beta_{\mathrm{H}}^2 + 2\beta_{\mathrm{H}} -3} \label{sketch2}.
\end{equation}
By combining \eqref{sketch2} with \eqref{sketch1} and with the fact that $h(1/2 + x) $ is monotonically decreasing for $x>0$, we obtain the result in \eqref{Hbound}.
\end{proof}

The Holz inequality was introduced in \cite{Holz2019DICKA} as a multipartite generalization of the CHSH inequality (i.e., all the parties have two inputs and two outputs) and its construction was tailored for DICKA protocols. In Sec.~\ref{sec:DICKA}, we employ the tight entropy bound we derived in Theorem~\ref{thm:Hbound} to show that the Holz inequality indeed leads to DICKA protocols with the currently best-known performance.

Besides, the technique used to derive the entropy bound of Theorem~\ref{thm:Hbound} can  constitute a simpler alternative to the approach used in \cite{Grasselli-PRXQuantum}, as demonstrated by our re-derivation of the single-outcome entropy bound for the MABK inequality from \cite{Grasselli-PRXQuantum}. We provide the details of the derivation in Appendix~\ref{sec:MABK-bound}.

\begin{figure}[ht] 
	\centering
		\textbf{Local depolarization}\par\medskip
		\includegraphics[width=0.9\linewidth,keepaspectratio]{./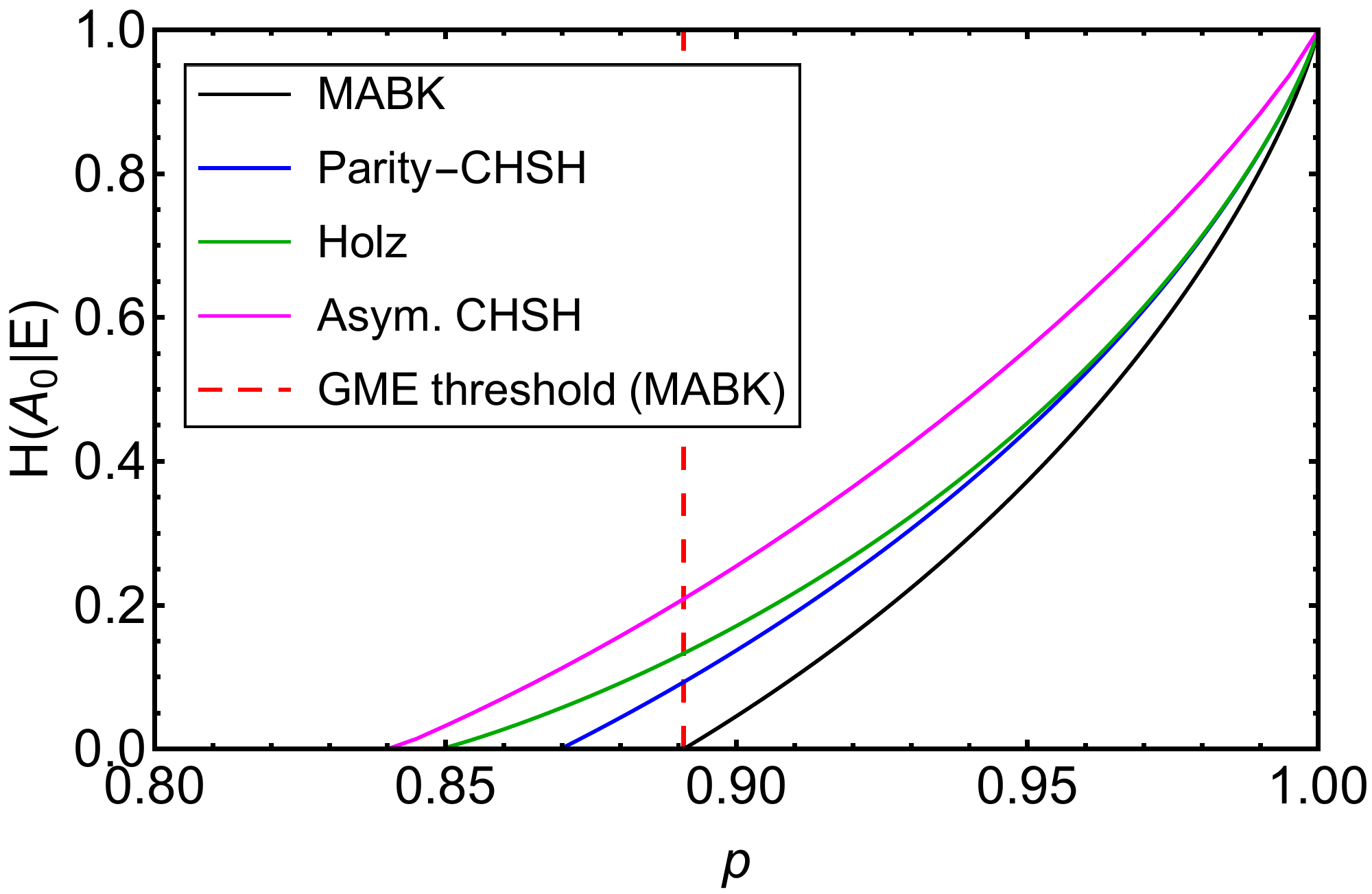}
	    \vspace{1ex}
		\textbf{Global depolarization}\par\medskip
		\includegraphics[width=0.9\linewidth,keepaspectratio]{./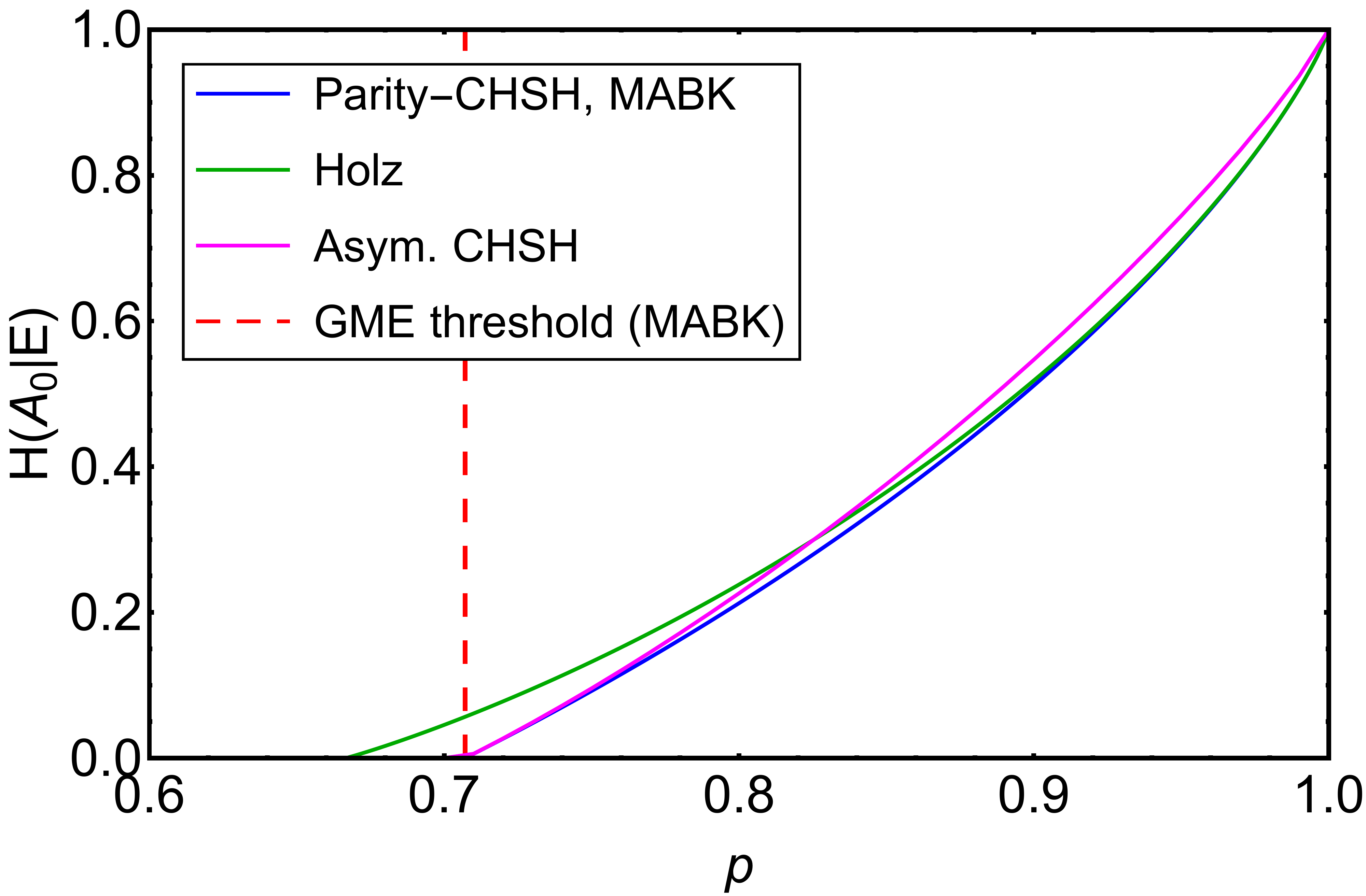}
	\caption{Analytical lower bounds on the conditional entropy $H(A_0|E)$ of Alice's outcome $A_0$ for various Bell inequalities, when three (two) parties are given a GHZ (Bell) state that has been locally and globally depolarized with probability $1-p$. The bound for the Holz inequality is derived in Theorem~\ref{thm:Hbound}, while the bounds for the Parity-CHSH, the asymmetric CHSH, and the MABK inequality are taken from \cite{JeremyParityCHSH}, \cite{AsymCHSH-Woodhead}, and \cite{Grasselli-PRXQuantum} (and re-derived in Appendix~\ref{app:MABK-one-outcome-proof}), respectively. All bounds are reported in Appendix~\ref{app:bounds-and-optimal-measurements}.}
	\label{fig:Aentropy-comparison}
\end{figure}

In Fig.~\ref{fig:Aentropy-comparison} we compare the lower bound on $H(A_0|E)$ derived in Theorem~\ref{thm:Hbound} for the Holz inequality with analogous bounds for the Parity-CHSH and asymmetric CHSH inequality from Refs.~\cite{JeremyParityCHSH,AsymCHSH-Woodhead}, and the bound for the MABK inequality from~\cite{Grasselli-PRXQuantum} re-derived in Appendix~\ref{app:MABK-one-outcome-proof}.
Note that all the bounds are tight (we prove the tightness of the Parity-CHSH bound in Appendix~\ref{app:tightness-parityCHSH-bound}), except for the case of the MABK inequality, and that we maximize the bound for the asymmetric CHSH inequality \eqref{alphachsh-ineq} over the parameter $\alpha$. We plot the entropy bounds as a function of the depolarization parameter $p$, where $1-p$ is the probability of local or global depolarization (see Sec.~\ref{sec:summary-results}). From the plot with local depolarization we observe that,  for a fixed value of $p$, the largest entropy is certified by the bipartite asymmetric CHSH inequality, while the Holz inequality provides the largest bound among the tripartite inequalities. This is expected, since for local noise the Bell violation is proportional to $\sim p^N$, where $N$ is the number of parties testing the inequality. Hence, the violation decreases for increasing number of parties and fixed noise and so does the entropy bound. This fact does not necessarily hold with other noise models, e.g. global depolarization, where the Holz inequality leads to the largest entropy at high noise levels (low $p$).

Interestingly, the entropy bounds for the Holz and the Parity-CHSH inequality in Fig.~\ref{fig:Aentropy-comparison} are non-zero below the genuine multipartite entanglement (GME) threshold of the MABK inequality. This suggests that GME might not be necessary to certify the privacy of a single party's outcome when testing multipartite Bell inequalities. Indeed, this is the case for asymmetric Bell inequalities like the Holz and the Parity-CHSH inequality studied here, while a previous study \cite{Grasselli-PRXQuantum} on the permutationally-invariant MABK inequality showed that GME is necessary to extract private randomness from a party's outcome.

It is straightforward to find a non-GME state that leads to non-zero entropy in the case of the Parity-CHSH inequality, where Charlie's role is trivial. Indeed, depending on the outcome of Charlie's only measurement $C_0$, Alice and Bob are effectively testing two distinct CHSH inequalities, one for outcome $C_0=0$ and another for outcome $C_0=1$. Therefore, one could easily obtain the maximal violation of such inequality by distributing the biseparable state: $\ket{\Phi^+}\otimes \ket{0}$ and selecting the optimal CHSH measurements for Alice and Bob and setting Charlie's measurement to be $Z$. Because the Parity-CHSH inequality can be seen as a special case of the Holz inequality where Charlie's measurements coincide ($C_0=C_1$), we can obtain non-zero entropy with non-GME states also for the Holz inequality by choosing the same setup as above, up to some change of sign.

\subsection{Two parties' outcomes}

In this section we compare lower bounds on the joint conditional entropy $H(A_0 B_0|E)$ of Alice's outcome $A_0$ and Bob's outcome $B_0$ when the parties test the inequalities \eqref{timo-ineq}-\eqref{alphachsh-ineq} (we set $\alpha=1$ in the asymmetric CHSH inequality, which reduces it to the standard CHSH inequality), in view of their application for DIRE.

\begin{figure}[ht] 
	\centering
		\textbf{Local depolarization}\par\medskip
		\includegraphics[width=0.9\linewidth,keepaspectratio]{./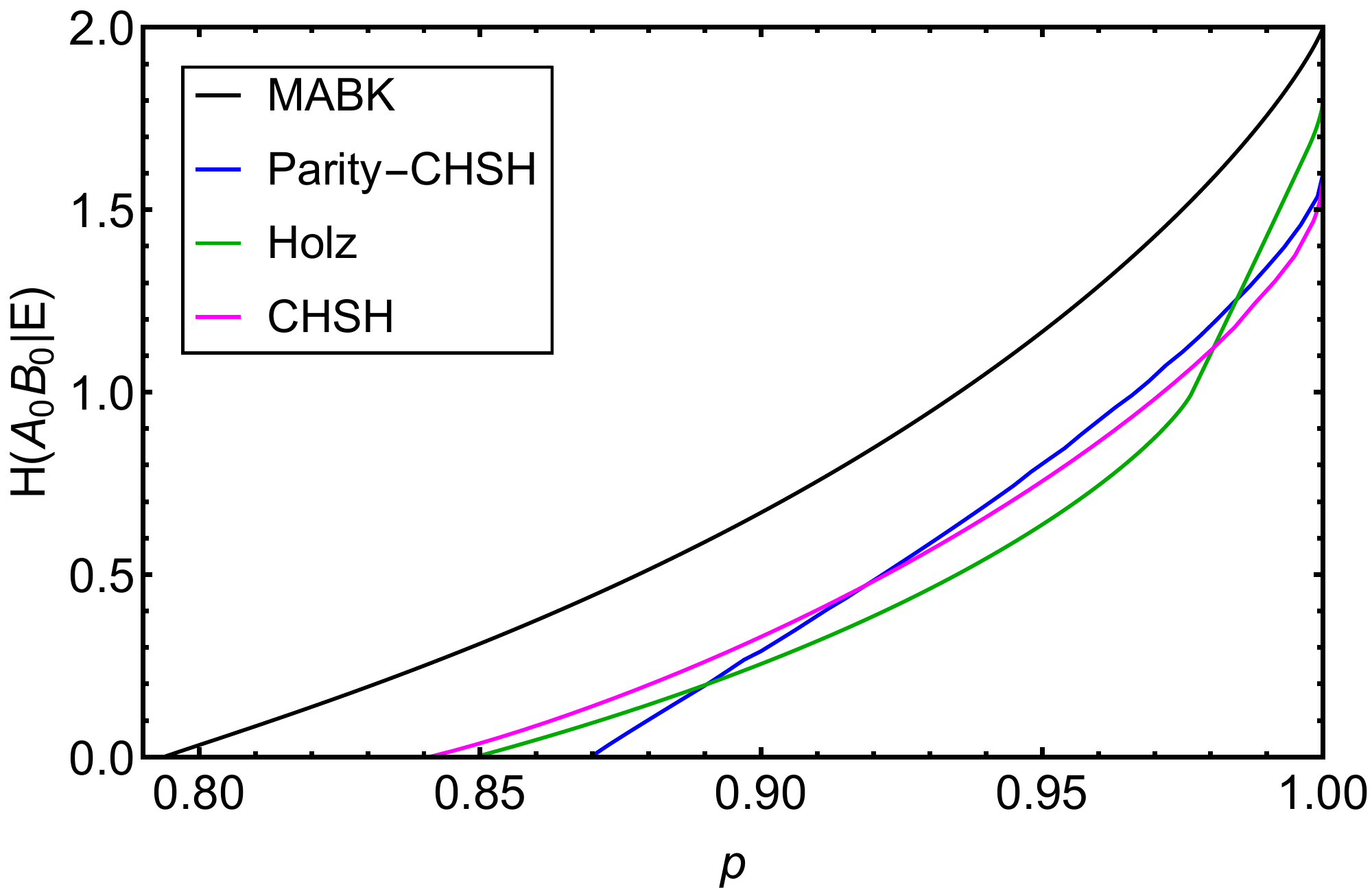}
	    \vspace{1ex}
		\textbf{Global depolarization}\par\medskip
		\includegraphics[width=0.9\linewidth,keepaspectratio]{./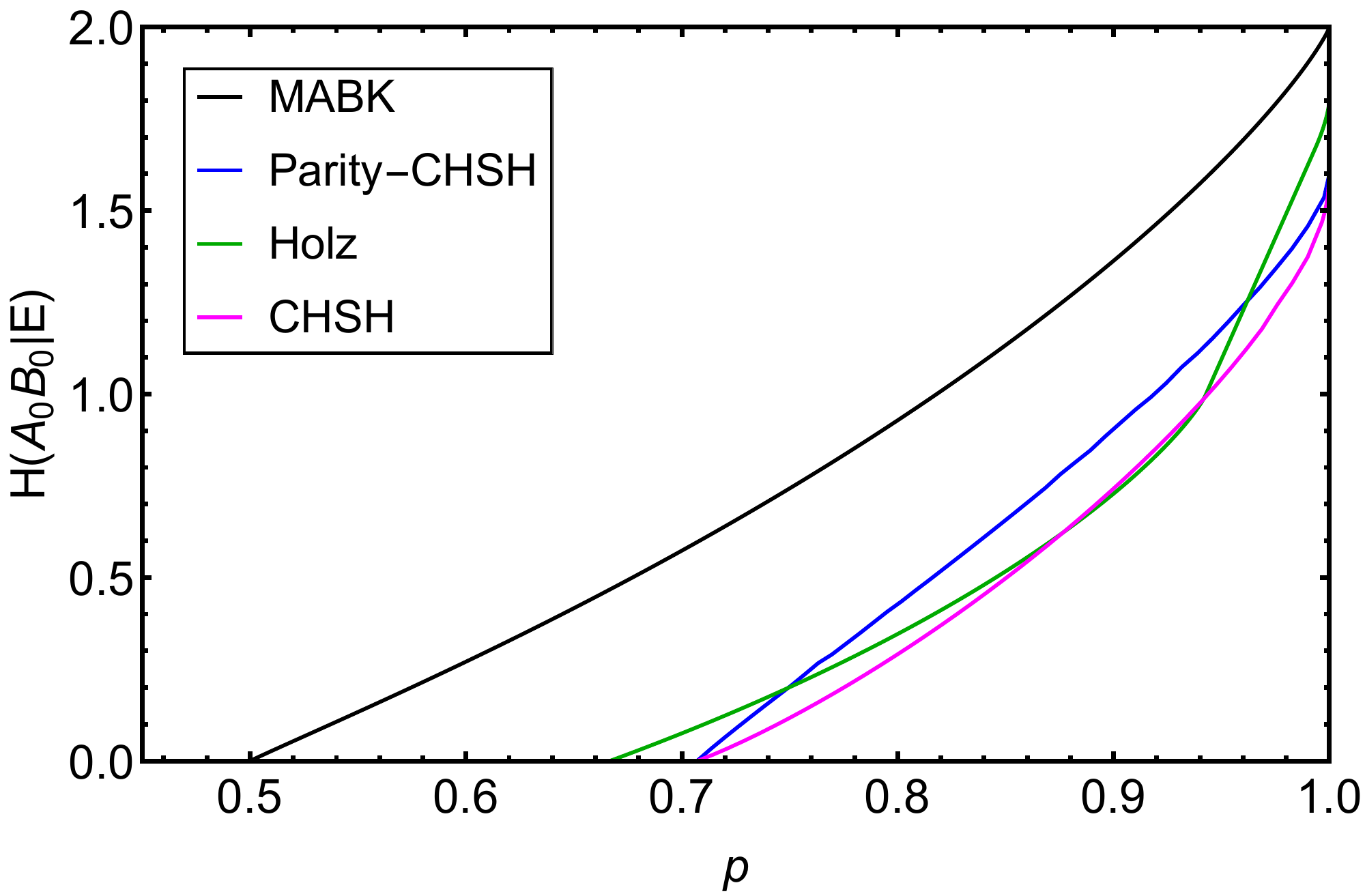}
	\caption{Lower bounds on the conditional entropy $H(A_0 B_0|E)$ of Alice's and Bob's outcomes for various Bell inequalities, when three (two) parties are given a GHZ (Bell) state that has been locally and globally depolarized with probability $1-p$. The bound for the MABK inequality is analytical and was obtained in \cite{Grasselli-PRXQuantum}, the bound for the Holz inequality is a conjectured tight analytical expression given by \eqref{ABentropybound-holz}, while the bounds for the Parity-CHSH and CHSH inequality are numerical (see Appendix~\ref{app:two-outcome-numerical}).}
	\label{fig:ABentropy-comparison}
\end{figure}

The entropy bound when three parties test the MABK inequality \eqref{mabk-ineq} is analytical and was derived in \cite{Grasselli-PRXQuantum}; we report it in Appendix~\ref{app:bounds-and-optimal-measurements}. Conversely, the entropy bounds for the Parity-CHSH, CHSH and Holz inequality are novel numerical bounds obtained by directly minimizing the entropy over all states and measurements yielding a given Bell violation. In order to achieve this, we significantly simplify the optimization problem for each inequality (details in Appendix~\ref{app:two-outcome-numerical}) before carrying out the numerical computation.

In particular, the numerical bound on $H(A_0 B_0 |E)$ for the Holz inequality relies on the intermediate results used to derive Theorem~\ref{thm:Hbound}, which allow us to simplify the optimization problem by reducing the number of variables. As a result, we optimize over just one measurement direction, i.e. one angle, and over block-diagonal states. This simplification also allowed us to conjecture the form of the corresponding analytical bound \eqref{ABentropybound-holz}.

\begin{conjec}\label{conj:H2bound}
Let Alice, Bob and Charlie test the Holz inequality \cite{Holz2019DICKA} and let $\beta_{\mathrm{H}}$ be the expected Bell value. Then, the joint von Neumann entropy of Alice's outcome $A_0$ and Bob's outcome $B_0$, conditioned on Eve's information $E$, satisfies
\begin{align}
&H(A_0 B_0 |E) \geq \nonumber\\
&\left\lbrace\begin{array}{ll}
    \eta(\beta_{\mathrm{H}})     &\beta_{\mathrm{H}}\in[1,\sqrt{2}]  \\[2ex]
     \displaystyle\frac{\theta(\beta^*_{\mathrm{H}},x(\beta^*_{\mathrm{H}}))-1}{\beta^*_{\mathrm{H}}-\sqrt{2}}(\beta_{\mathrm{H}}-\sqrt{2}) +1      &\beta_{\mathrm{H}}\in(\sqrt{2},\beta^*_{\mathrm{H}}] \\[3ex]
    \theta(\beta_{\mathrm{H}},x(\beta_{\mathrm{H}})) &\beta_{\mathrm{H}}\in(\beta^*_{\mathrm{H}},3/2],
    \end{array}\right. 
    \label{ABentropybound-holz}
\end{align}
where the functions $\eta$, $\theta$ and $x$, and the parameter $\beta^*_{\mathrm{H}}$, are reported in Appendix~\ref{app:bounds-and-optimal-measurements}. Moreover, the bound is tight.
\end{conjec}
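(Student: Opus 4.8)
\emph{Proof plan.} The plan is to follow the architecture of the proof of Theorem~\ref{thm:Hbound} as far as it goes, and then attack the residual low-dimensional optimization. First I would invoke Jordan's Lemma to restrict, without loss of generality, to three-qubit states with rank-one binary projective measurements, reducing the task to finding a convex lower bound on $H(A_0 B_0|E)$ in this qubit setting. I would then reuse the reference-frame fixing from the one-outcome proof: place all qubit observables in the $(x,z)$ plane, set $A_0=Z_A$ and fix Bob's frame so that $B_0=Z_B$, and argue — exactly as for the single-outcome bound — that the shared state $\rho_{ABC}$ may be taken block-diagonal in the GHZ basis. After symmetrizing, this should collapse the problem to minimizing $H(Z_A Z_B|E)$ over block-diagonal states and a single surviving free angle, which is precisely the simplified optimization that produces the numerical curve in Fig.~\ref{fig:ABentropy-comparison}.

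Next I would derive an analytic lower bound on $H(Z_A Z_B|E)$ for a fixed state and angle. Writing $H(A_0 B_0|E)=H(A_0|E)+H(B_0|A_0E)$ and bounding each term with the entropic uncertainty relation against a complementary $\sigma_x$-type measurement — as in \eqref{sketch1}, but now tracking two overlaps rather than one — yields a bound depending on the state only through a handful of GHZ-basis parameters and on the angle. Eliminating the state parameters by an extremal-point/Lagrange argument (the minimum over block-diagonal states of the resulting functional should be attained on the boundary of the feasible region) leaves a one-variable function $\theta(\beta_{\mathrm{H}},x)$; the stationarity condition $\partial_x\theta=0$ then defines the optimal angle $x(\beta_{\mathrm{H}})$, and $\eta$ should be the analogous closed form governing the low-violation branch $\beta_{\mathrm{H}}\in[1,\sqrt{2}]$, where a different branch of the optimization is active. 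One must check that these expressions reproduce the first and third cases of \eqref{ABentropybound-holz}.

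The remaining ingredient is convexification. Since Eve may prepare a classical mixture of strategies realizing different Bell values, the operationally meaningful bound is the lower convex envelope of the pointwise-optimal curve; the affine middle piece of \eqref{ABentropybound-holz} is the chord from the endpoint $\bigl(\sqrt{2},1\bigr)$ to the point $\bigl(\beta^*_{\mathrm{H}},\theta(\beta^*_{\mathrm{H}},x(\beta^*_{\mathrm{H}}))\bigr)$ at which this chord becomes tangent to $\theta(\cdot,x(\cdot))$, so $\beta^*_{\mathrm{H}}$ is characterized by the equal-slope (tangency) condition. Tightness would then be established by exhibiting, for each $\beta_{\mathrm{H}}$, the explicit three-qubit state and measurements saturating every inequality used above — a GHZ-like state with angle $x(\beta_{\mathrm{H}})$ in the non-affine regimes, and a convex combination of two such extremal strategies in the affine regime.

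The main obstacle — and the reason this is stated as a conjecture rather than a theorem — is closing the lower-bound direction analytically. The chained entropic uncertainty relations used for a single outcome are generally not tight for the joint entropy of two outcomes, so one must either find a sharper inequality relating $H(Z_A Z_B|E)$ to $\beta_{\mathrm{H}}$ (the two-outcome analogue of \eqref{sketch2}), or carry out the residual state-and-angle optimization fully rigorously and show that the numerically observed minimizer is global. Both routes are delicate because the minimizer $x(\beta_{\mathrm{H}})$ is only implicitly defined and the objective is not manifestly convex in the optimization variables, which is what currently forces reliance on numerical confirmation of \eqref{ABentropybound-holz}.
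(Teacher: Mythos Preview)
Your reduction architecture (Jordan's lemma $\to$ qubits, block-diagonal state in the GHZ basis, then a low-dimensional constrained minimization followed by convexification) matches what the paper actually does to support the conjecture, and you correctly identify both the tangency condition defining $\beta^*_{\mathrm{H}}$ and the reason the statement remains a conjecture. Two points of divergence from the paper are worth flagging.

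First, the paper does \emph{not} attempt to derive $\eta$ and $\theta$ from the chain rule $H(A_0 B_0|E)=H(A_0|E)+H(B_0|A_0 E)$ together with entropic uncertainty relations. It simply writes $H(A_0 B_0|E)$ explicitly as a function of the GHZ-basis state parameters $\{\rho_{ijk},t_{jk}\}$ and the angle $b_0$, numerically solves the constrained minimization \eqref{optimization-Holz2}, and then \emph{reads off} the closed forms $\eta$ and $\theta$ from the resulting curve; the formulas are pattern-matched from numerics, not obtained by saturating a chain of analytic inequalities. Your proposed analytic route is a genuinely different (and more ambitious) attack, but as you yourself note, chained uncertainty relations are typically loose for joint two-outcome entropies, so it is unlikely to reproduce the tight $\theta$ branch.

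Second, the paper does \emph{not} fix $B_0=Z_B$. It uses Bob's rotational freedom to set $b_+=\pi/2$ (so $b_1=\pi-b_0$), leaving $b_0$ as a genuine optimization variable that enters both the Bell value \eqref{reduced-bellvalue2} and the entropy. The block-diagonality argument of Lemma~\ref{lmm:simpl-state} is tailored to that specific frame fixing (the maps $\id ZZ$ and $YYX$ preserve the Bell value in the form \eqref{reduced-bellvalue}); if you instead set $B_0=Z_B$ you must redo that step with different symmetrizing maps, and you must also check that those maps do not increase $H(A_0 B_0|E)$ rather than just $H(A_0|E)$ --- an extension the paper itself does not spell out.
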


The bound on $H(A_0 B_0|E)$ when three parties test the Parity-CHSH inequality is also obtained by direct numerical optimization, similarly to the bound for the Holz inequality. As a matter of fact, note that the Parity-CHSH inequality \eqref{parity-chsh-ineq} is a particular case (upon relabeling the observables) of the Holz inequality \eqref{timo-ineq} when Charlie's two measurements coincide, i.e. $C_0 =C_1$, or equivalently $C_-=0$.

For the numerical computation of the entropy bound when two parties test the CHSH inequality, we apply the results of \cite{PironioAcin2009,Grasselli-PRXQuantum} to the CHSH scenario and parametrize the state shared by Alice and Bob as a Bell-diagonal state. We remark that the same bound has been independently computed in \cite{Colbeck2021} with numerical techniques. The details on the optimization problem solved for each numerical bound are given in Appendix~\ref{app:two-outcome-numerical}.

It is important to remark that the numerical curves obtained by directly minimizing the entropy cannot be treated as reliable lower bounds, as the numerical optimization is non-convex and may return local minima. Nevertheless, we believe that our optimizations are very close to the corresponding tight lower bounds.

In Fig.~\ref{fig:ABentropy-comparison} we plot the bounds on $H(A_0 B_0|E)$ as a function of the depolarization parameter $p$, in the cases of local and global depolarization. We observe that three parties testing the MABK inequality can certify a considerably higher amount of randomness for Alice's and Bob's outcomes, compared to testing the other inequalities, both for locally and globally depolarized states.

In Fig.~\ref{fig:ABentropy-Holz} we plot the analytical conjecture \eqref{ABentropybound-holz} and the corresponding numerical bound on $H(A_0 B_0|E)$ for the Holz inequality. We observe that the bound presents a distinct behavior for $\beta_{\mathrm{H}}\leq \sqrt{2}$ and $\beta_{\mathrm{H}}> \sqrt{2}$, represented by two distinct functions $\eta(\beta_{\mathrm{H}})$ and $\theta(\beta_{\mathrm{H}},x(\beta_{\mathrm{H}}))$ in \eqref{ABentropybound-holz}. Interestingly, the boundary of the two regions ($\beta_{\mathrm{H}}= \sqrt{2}$) coincides with the GME threshold \cite{Holz2019DICKA} above which the Holz inequality certifies genuine multipartite entanglement shared by the three parties. For smaller violations, the inequality cannot certify genuine multipartite entanglement and the entropy is bounded by: $H(A_0 B_0|E)\leq \eta(\sqrt{2})=1$, while for larger violations the entropy increases rapidly and eventually surpasses the other entropy bounds, except for MABK (see Fig.~\ref{fig:ABentropy-comparison}).

\begin{figure}[htb]
	\centering
    \includegraphics[width=0.9\linewidth,keepaspectratio]{./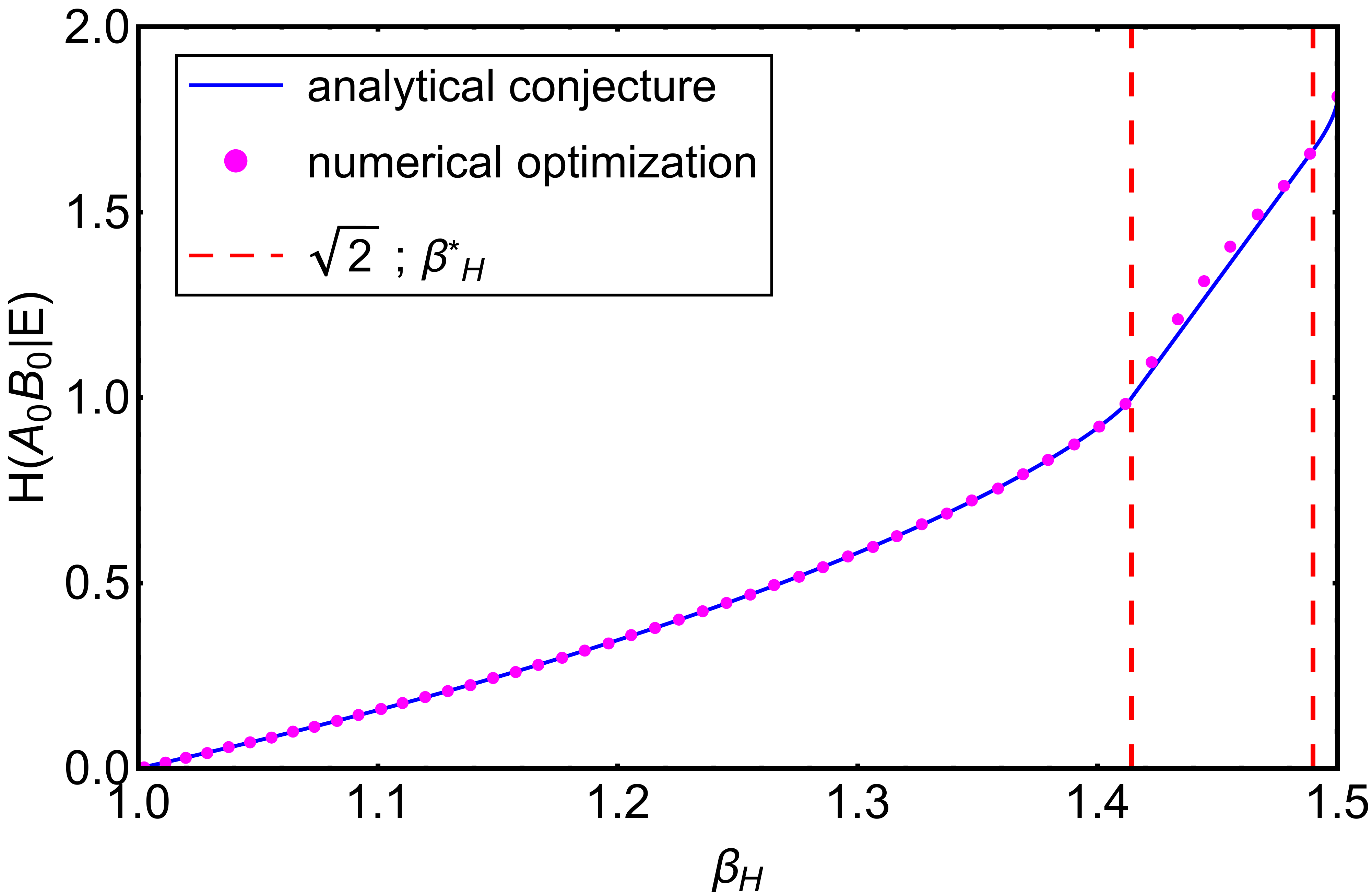}
	\caption{Lower bound on $H(A_0 B_0|E)$ as a function of the violation of the tripartite Holz inequality. The plot points are obtained by numerically minimizing the entropy for a fixed Bell value $\beta_{\rm H}$ (see Appendix~\ref{app:two-outcome-numerical} for details), while the blue solid line is our conjectured analytical bound \eqref{ABentropybound-holz}. Note that the numerical curve is concave in the interval enclosed by the red dashed lines while our analytical bound is convex in the whole domain, as required for a DI entropy bound.}
	\label{fig:ABentropy-Holz}
\end{figure}

We remark that the discrepancy between our analytical conjecture and the numerical curve in Fig.~\ref{fig:ABentropy-Holz} is because the latter is not always a convex function of the violation. Indeed, within the interval $(\sqrt{2},\beta^*_{\mathrm{H}}]$, the numerical curve and its conjectured analytical expression, $\theta(\beta_{\mathrm{H}},x(\beta_{\mathrm{H}}))$, become concave.
However, a DI lower bound on a conditional entropy must be a convex function of the violation. If this is not the case, Eve could distribute a convex combination of states yielding an entropy lower than that certified by the bound, thus spoiling its validity. For this, our conjectured bound \eqref{ABentropybound-holz} is constructed as the convex hull of the numerical curve, which guarantees that the bound is convex in the whole interval $(\sqrt{2},3/2]$. In particular, we replace the concave part of the curve by taking the tangent to the function $\theta(\beta_{\mathrm{H}},x(\beta_{\mathrm{H}}))$ at $\beta_{\mathrm{H}}=\beta^*_{\mathrm{H}}$, such that the point of coordinates $(\sqrt{2},1)$ belongs to the tangent line. This explains the definition of $\beta^*_{\mathrm{H}}$ given in Appendix~\ref{app:bounds-and-optimal-measurements}.

\section{Device-independent conference key agreement}\label{sec:DICKA}

The goal of device-independent conference key agreement (DICKA) is to establish a secret conference key among $N>2$ parties in a DI fashion. For this, it is necessary to certify the secrecy of Alice's outcome used to generate the key, which we choose to be $A_0$. This is done by testing a Bell inequality and computing a lower bound on $H(A_0|E)$, which indicates what fraction of Alice's outcome bit is secret with respect to the eavesdropper Eve. At the same time, Alice and the other parties want to obtain correlated outcomes to form the shared conference key. While such outcomes can be obtained from an additional measurement setting for the other parties, Alice's key-generating setting must be the same that is proved to be secret, i.e., $A_0$ \cite{HolzComment,Arnon-Friedman2018}. Due to potential noise affecting the parties' key-generating outcomes, Alice publicly broadcasts some error correction information for the other parties to correct their key bits and match Alice's. Asymptotically, the error-correction information needed by party $i$ to correct their key --affected by a bit error rate $Q_i$-- is given by a fraction $h(Q_i)$ of the whole key. Since the error-correction information is public, it is not secure, and must be subtracted from the fraction of secret key bits. Thus, the asymptotic conference key rate of a DICKA protocol, that is, the asymptotic rate of secret conference key bits produced per distributed  state, is given by \cite{JeremyMABK,CKAbook}:
\begin{align}
    r_{\rm DICKA}=H(A_0|E) - \max_{2 \leq i \leq N} h(Q_i), \label{DICKA-rate}
\end{align}
where we maximize the error-correction information over the error rates $Q_i$ so that even the party with the noisiest raw key can recover Alice's key.

Using the entropy bound on $H(A_0 |E)$ derived in Theorem~\ref{thm:Hbound} for the Holz inequality, together with the other bounds considered in Subsec.~\ref{sec:one-outcome-entropy} (the bound for the asymmetric CHSH inequality is numerically optimized over $\alpha$), we can compute the asymptotic secret key rate of DICKA protocols based on the Holz, the Parity-CHSH and the asymmetric CHSH inequality, where the latter is implemented as a concatenation of bipartite DIQKD protocols. In contrast, it is conjectured \cite{HolzComment,Grasselli-PRXQuantum} that the MABK inequality cannot be used in a DICKA protocol since Alice's optimal measurements yielding large violations are different from the key-generating measurement she uses to establish a shared conference key with the other parties.

In Fig.~\ref{fig:rate-comparison}, we plot the asymptotic conference key rate \eqref{DICKA-rate} of the DICKA protocols as a function of the parameter $p$, in logarithmic scale. The DICKA rates are obtained by using the optimal strategies reported in Appendix~\ref{app:bounds-and-optimal-measurements}, which require Alice's outcome $A_0$ to be the result of a Pauli $Z$ measurement. This setting allows the parties to obtain perfectly correlated key bits --in the ideal scenario of no depolarization-- if the others also choose $Z$ as their additional key-generating measurement.
\begin{figure}[htb] 
	\centering
	\textbf{Local depolarization}\par\medskip
		\includegraphics[width=0.9\linewidth,keepaspectratio]{./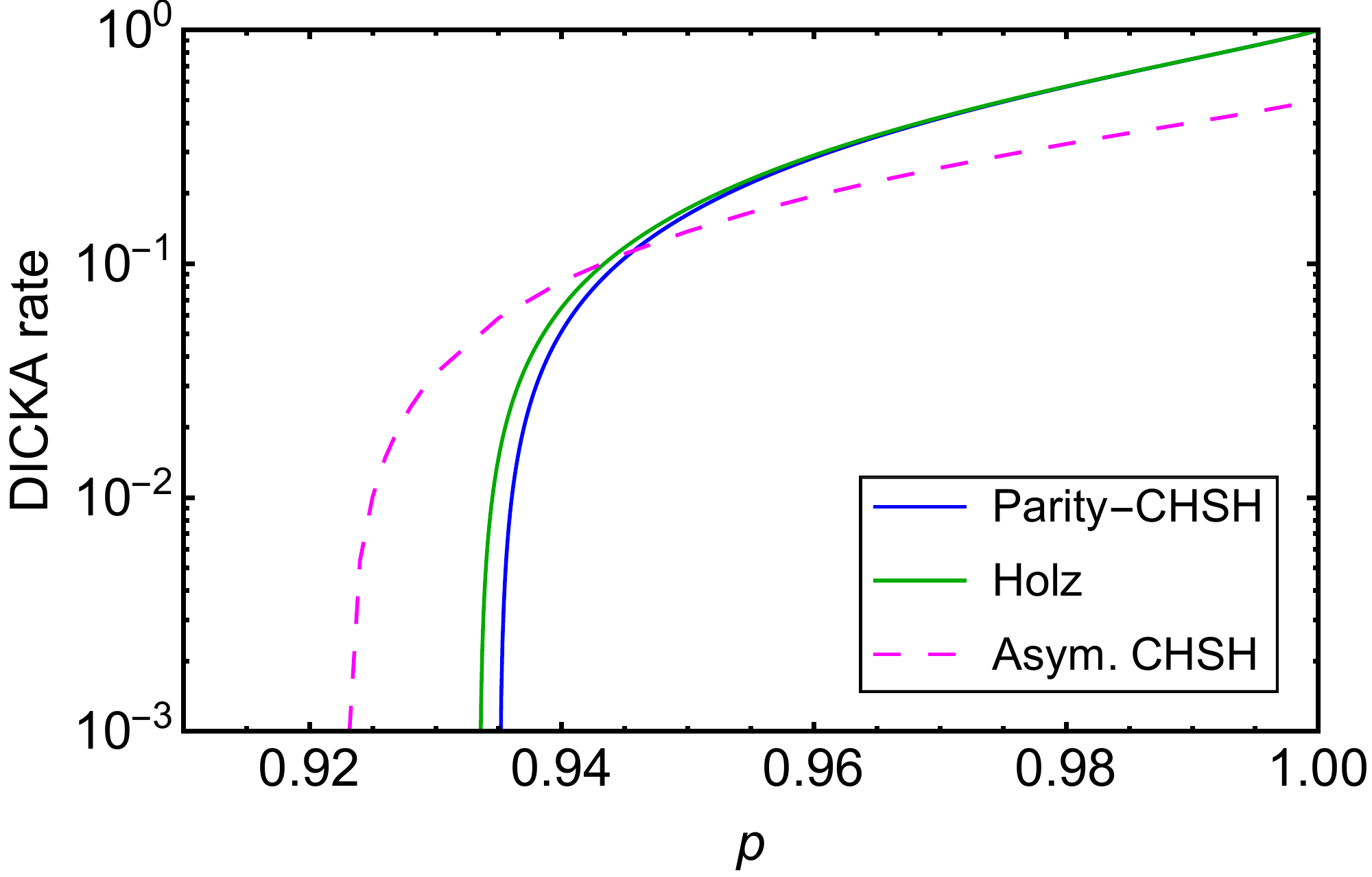}\\
		\vspace{1ex}
	\textbf{Global depolarization}\par\medskip
		\includegraphics[width=0.9\linewidth,keepaspectratio]{./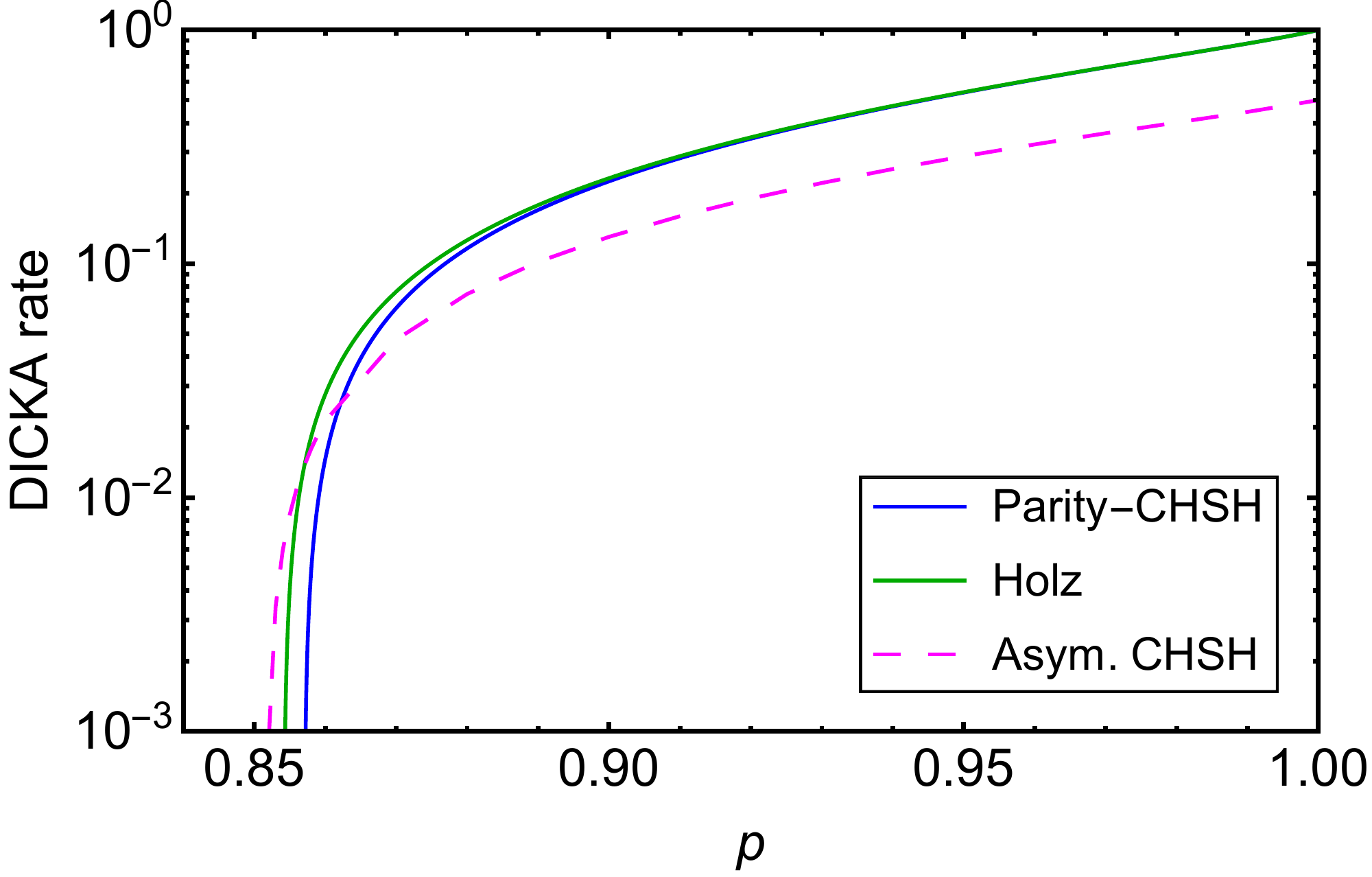}
	\caption{Asymptotic conference key rates, Eq.~\eqref{DICKA-rate}, of tripartite DICKA protocols based on the Holz inequality, Parity-CHSH inequality and a concatenation of bipartite DIQKD based on the asymmetric CHSH inequality, as a function of the parameter $p$ (the probability of local or global depolarization is $1-p$). The DICKA protocol based on the asymmetric CHSH inequality is composed of two consecutive DIQKD protocols that Alice performs with Bob and then with Charlie. Hence its key rate earns a factor of $1/2$. The bit error rate between every pair of parties, for both the GHZ and Bell state, is $Q=(1-p^2)/2$ ($Q=(1-p)/2$) when the state is locally (globally) depolarized.}
	\label{fig:rate-comparison}
\end{figure}

From Fig.~\ref{fig:rate-comparison} we observe that the optimal tripartite inequality for DICKA is the Holz inequality, which was expressly designed for this scope in \cite{Holz2019DICKA}. The plot also shows a clear advantage in establishing a conference key with a DICKA protocol based on multipartite entanglement rather than Bell pairs. This advantage nearly covers the whole range of $p$ in the case of global depolarization, while it vanishes for low values of $p$ and local depolarization. This is due to the starker effect of local depolarization on multipartite entangled states, which reduces their ability to violate a Bell inequality for a given value of local noise. The threshold values for $p$ above which a non-zero conference key can be extracted, in the case of local and global depolarization, are reported in Table~\ref{tab:p-val-DICKA}.
\begin{table}[ht] 
\centering
\begin{tabular}{|ccc|}
    \hline
    Bell ineq. & local noise & global noise \\
    \hline
    Holz &  $0.934$  & $0.855$ \\
    Parity-CHSH & $0.936$  & $0.858$ \\
    asym. CHSH & $0.923$  & $0.852$ \\
    \hline
\end{tabular}
\caption{Threshold values for $p$ (with $1-p$ being the probability of local or global depolarization, see Sec~\ref{sec:summary-results}) such that a non-zero conference key can be extracted, in the asymptotic limit, by DICKA protocols based on different Bell inequalities.}
\label{tab:p-val-DICKA}
\end{table}

We remark that, in our comparison, a DICKA protocol based on a bipartite Bell inequality, such as the asymmetric CHSH inequality \eqref{alphachsh-ineq}, is obtained as a concatenation of bipartite DIQKD protocols where Alice performs a DIQKD protocol first with Bob and then with Charlie\footnote{Alice then uses the two keys established with Bob and Charlie to distribute the conference key with one-time-pad.}. For this, the total number of states distributed per conference key bit doubles, causing a factor $1/2$ in the conference key rate \eqref{DICKA-rate}. In the case of $N$ parties establishing a conference key with concatenated DIQKD protocols, the conference key rate is reduced by a factor of $1/(N-1)$. 

Another important drawback of implementing DICKA by a concatenation of DIQKD protocols is that the security of the established conference key is spoiled unless Alice uses a new device for every iteration of the DIQKD protocol~\cite{Murta_2019,Memory_attack}, making it more resource demanding. We argue on this issue in Sec.~\ref{sec:discussion}, where we also discuss about alternative definitions of conference key rates where the advantage provided by multipartite entanglement can still be retained.

\section{Device-independent randomness expansion}\label{sec:DIRE}

A DIRE protocol aims to expand the initial share of private randomness of one or more parties in a DI way, by testing a Bell inequality. The amount of randomness in the outcomes of one or more parties is certified by computing a lower bound on a suitable conditional entropy, in terms of the observed Bell violation. Here we can envision a setup where the parties are located in the same lab\footnote{For device-independent randomness certification, it is essential, however, to ensure that the potentially malicious devices do not communicate.} and wish to explore the randomness of their joint outcomes. The goal is achieved when the amount of randomness produced by the protocol is greater than the input randomness used for testing the Bell inequality.

In this section we investigate the applicability to DIRE of the bounds on $H(A_0 B_0|E)$ presented in Sec.~\ref{sec:entropy-bounds}. Such bounds, as we will see, are particularly suited for spot-checking DIRE protocols \cite{Colbeck2021}, where Alice and Bob generate randomness with inputs $A_0$ and $B_0$ in most of the rounds and only sporadically test the Bell inequality with random inputs.

In Fig.~\ref{fig:ABentropy-comparison} of Sec.~\ref{sec:entropy-bounds} we observed that the two-outcome entropy bound for the MABK inequality is significantly larger than the other bounds. However, this does not necessarily imply that a DIRE protocol based on testing the MABK inequality can generate more \textit{net} randomness than DIRE protocols based on the other inequalities. This is due to the fact that the MABK inequality requires a larger amount of input randomness (two random inputs for Alice, Bob and Charlie compared to the Parity-CHSH and the CHSH inequality where Charlie has a fixed input or remains idle). A definitive answer could come from a thorough finite-key analysis of the DIRE protocols via the entropy accumulation theorem \cite{EAT}. Here instead, we aim at gaining intuition on the input/output randomness tradeoff by computing lower bounds on the asymptotic net randomness generation rate of DIRE protocols based on the four inequalities \eqref{timo-ineq}-\eqref{alphachsh-ineq}, which accounts for the input randomness required by each Bell test.

The net randomness generation rate of a DIRE protocol is the fraction of fresh random bits produced per distributed state, i.e., per round. In a spot-checking DIRE protocol a public source of randomness, shared by all parties, declares whether each round is a testing round ($T=1$ with probability $\gamma$) or a randomness-generation round ($T=0$ with probability $1-\gamma$). In a randomness-generation round, Alice and Bob select input $0$ and collect the outcomes $A_0$ and $B_0$, which generate $H(A_0 B_0 |E)$ bits of secret randomness (if the protocol involves additional parties, they also select a predefined input). In a testing round, the parties locally choose random inputs for their devices (represented by a joint random variable $I$) and test the selected Bell inequality. Since this step does not require public communication if the parties operate in the same lab, they consider their outcomes as part of the generated secret randomness. The conditional entropy that quantifies the amount of secret randomness generated in a test round is $H(AB|E)$. This entropy is larger than $H(A_0 B_0 |E)$, since in this case the inputs that generated the outputs ($AB$) are random and unknown to Eve. Thus, the output randomness generation rate of the spot-checking DIRE protocol is given by:
\begin{align}
    H(AB|TE) &=  (1-\gamma) H(A_0 B_0 |E) + \gamma H(AB|E) \nonumber\\
    &\geq H(A_0 B_0 |E), \label{output-randomness-spot}
\end{align}
where we used the strong sub-additivity of the von Neumann entropy in the inequality.

The input randomness consumed in a generic round of a spot-checking DIRE protocol is given by:
\begin{align}
    H(T,I) &= H(I|T) + H(T) \nonumber\\
    &= \gamma H(I|T=1) + (1-\gamma) H(I|T=0) + h(\gamma) \nonumber\\
    &= r \gamma + h(\gamma) , \label{input-randomness-spot}
\end{align}
where $r$ is the total number of random bits required as inputs by the selected Bell inequality. Then, the asymptotic net randomness generation rate of a spot-checking DIRE protocol is obtained by subtracting the input randomness \eqref{input-randomness-spot} from the output randomness \eqref{output-randomness-spot}, as follows:
\begin{align}
    r_{\rm spDIRE} &= H(AB|TE) - H(T,I) \nonumber\\
    &\geq H(A_0 B_0 |E) - r \gamma - h(\gamma). \label{net-randomness-spot}
\end{align}
Thus, the net randomness generation rate of a DIRE protocol based on testing the CHSH or Parity-CHSH inequality satisfies:
\begin{align}
    r_{\rm spDIRE} \geq H(A_0 B_0 |E) -2\gamma -h(\gamma) \label{net-randomness-spot-chsh},
\end{align}
while that of a DIRE protocol based on the MABK or the Holz inequality satisfies:
\begin{align}
    r_{\rm spDIRE} \geq H(A_0 B_0 |E) -3\gamma -h(\gamma) \label{net-randomness-spot-mabk},
\end{align}
since three random bits are required in each testing round. We can now use the two-outcome entropy bounds presented in Sec.~\ref{sec:entropy-bounds} in combination with \eqref{net-randomness-spot-chsh} and \eqref{net-randomness-spot-mabk} to compare the performance of the corresponding DIRE protocols.

We remark that, asymptotically, the optimal value for $\gamma$ tends to zero, i.e. it is sufficient to test the Bell inequality on a negligible fraction of rounds in order to learn the exact Bell violation. In this case, the net randomness generation rate in \eqref{net-randomness-spot} is given by $H(A_0 B_0 |E)$ with an equality sign. However, in order to investigate the effect of test rounds on spot-checking DIRE protocols where more input randomness is required (MABK and Holz inequalities), we set $\gamma$ to $\gamma=0.033\%$, which is the value used in the DIRE experiment of \cite{DIREexp1}.

\begin{figure}[htb]
	\centering
    \textbf{Local depolarization}\par\medskip
		\includegraphics[width=1\linewidth,keepaspectratio]{./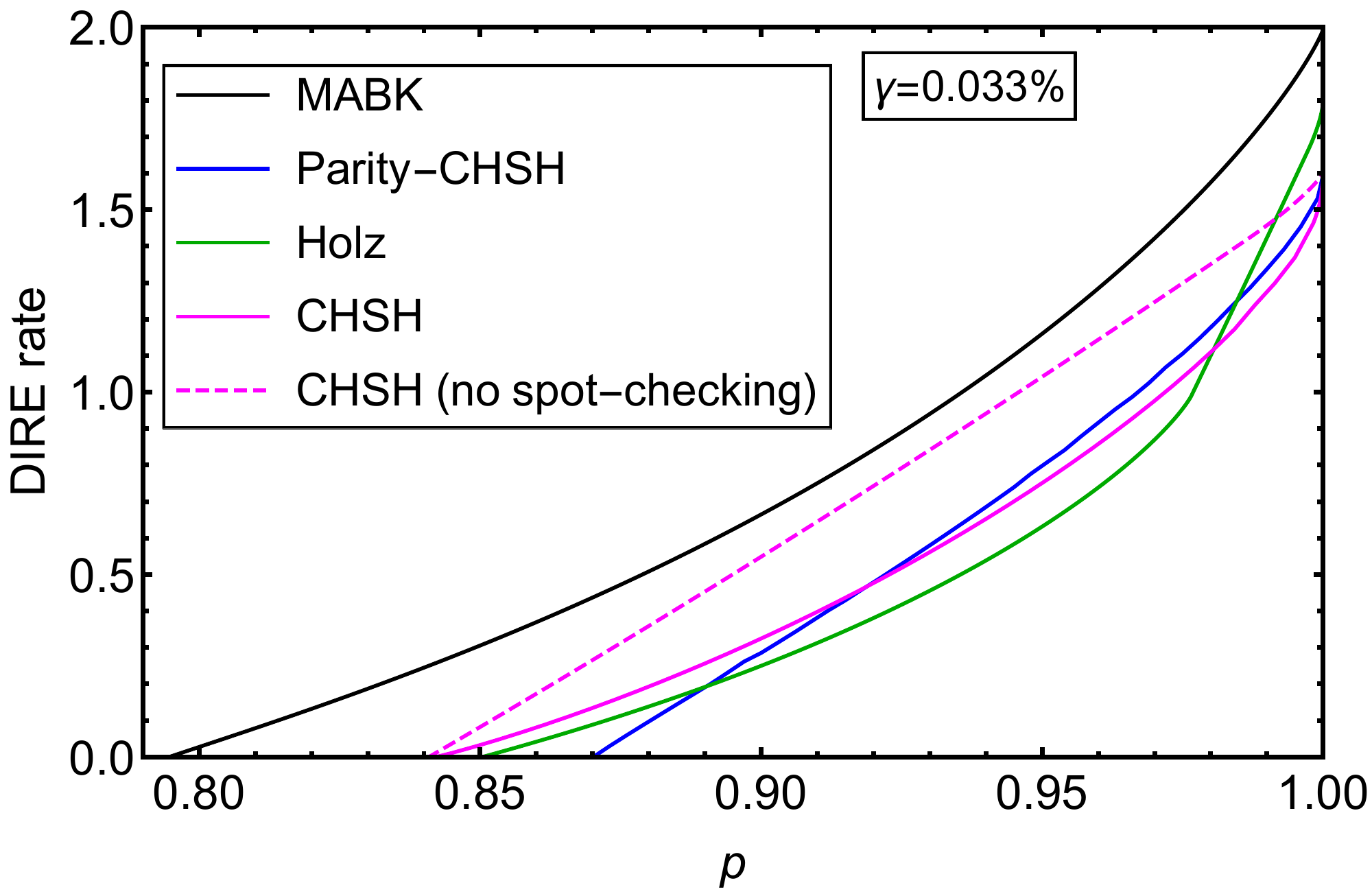}
	\vspace{1ex}
	\textbf{Global depolarization}\par\medskip
		\includegraphics[width=1\linewidth,keepaspectratio]{./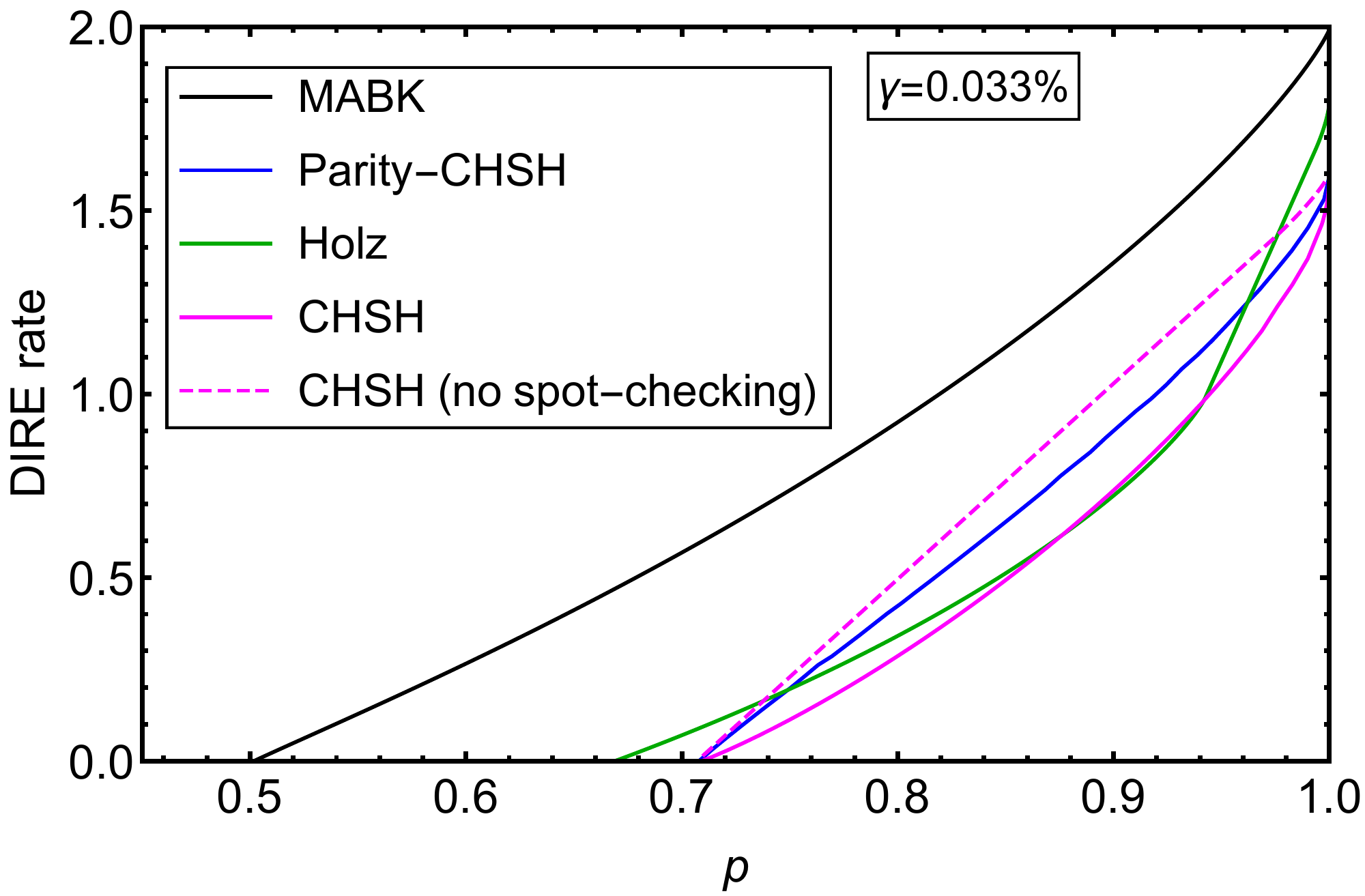}
	\caption{Asymptotic net randomness generation rates for DIRE protocols that extract secret randomness from the outcomes of two parties, as a function of $p$, where $1-p$ is the probability of local or global depolarization. The solid lines correspond to spot-checking DIRE protocols where a test round is performed on a fraction $\gamma$ of the total set of rounds, while the dashed line represents a DIRE protocol where the CHSH inequality is tested in every round and the random inputs are recycled \cite{Colbeck2021}. A spot-checking DIRE protocol based on the MABK inequality still generates the largest amount of net randomness. The value of $\gamma$ is set to the experimental value of \cite{DIREexp1}.}
	\label{fig:DIRE-rates}
\end{figure}

In order to benchmark the DIRE rates \eqref{net-randomness-spot-chsh} and \eqref{net-randomness-spot-mabk}, we consider a new type of DIRE protocol without spot-checking and based on the CHSH inequality \cite{Colbeck2021}. In this DIRE protocol there is no distinction between rounds and in each round Alice (Bob) randomly selects her (his) input $X$ ($Y$) and uses the outputs $A$ and $B$ to test the CHSH inequality. If the outputs are not publicly revealed (e.g. Alice and Bob are in the same lab), they can form part of the randomness generated by the protocol. Moreover, the protocol recycles the input randomness $X$ and $Y$ --which is also secret and unknown to Eve-- and appends it to the output randomness before extracting the secret random string. In this case, the asymptotic net randomness generation rate of the protocol is given by:
\begin{align}
    r_{\rm DIRE} &= H(A B XY |E) -H(XY) \nonumber\\
    &= H(A B |XY E)\label{net-randomness-chsh},
\end{align}
where, physically, the conditional entropy $H(A B |XY E)$ expresses the uncertainty that Eve has about Alice's and Bob's outcomes, when Alice and Bob use random inputs $X$ and $Y$ and the inputs become known to Eve after the measurement. The authors in \cite{Colbeck2021} conjecture a tight analytical lower bound on $H(A B |XY E) $ as a function of the CHSH value $\beta_C$ (reported in Appendix~\ref{app:bounds-and-optimal-measurements}), which we employ to plot the DIRE rate \eqref{net-randomness-chsh}.

In Fig.~\ref{fig:DIRE-rates} we plot the asymptotic net randomness generation rates of spot-checking DIRE protocols based on the CHSH and Parity-CHSH inequalities \eqref{net-randomness-spot-chsh} and on the MABK and Holz inequalities \eqref{net-randomness-spot-mabk}, as well as the net randomness generation rate of the DIRE protocol without spot-checking based on the CHSH inequality \eqref{net-randomness-chsh}. The threshold values for $p$ above which we have a positive randomness generation rate, assuming $\gamma=0$, can be calculated analytically since each of the analyzed two-outcome entropy bounds yields non-zero randomness as soon as the corresponding classical bound is violated. The numerical values we obtain are reported in Table~\ref{tab:p-val-DIRE}.
\begin{table}[ht] 
\centering
\begin{tabular}{|ccc|}
    \hline
    Bell ineq. & local noise & global noise \\
    \hline
   MABK &  $0.794$  & $0.500$  \\ 
Parity-CHSH &  $0.870$  & $0.707$  \\  
Holz &  $0.84$9  & $0.667$   \\   
 CHSH & $0.841$  & $0.707$  \\ 
CHSH (no spot-ch.) &  $0.841$  & $0.707$ \\
    \hline
\end{tabular}
\caption{Threshold values for $p$ (with $1-p$ being the probability of local or global depolarization, see Sec~\ref{sec:summary-results}) such that, asymptotically, DIRE protocols based on the analyzed Bell inequalities yield a positive net-randomness generation rate.}
\label{tab:p-val-DIRE}
\end{table}

From Fig.~\ref{fig:DIRE-rates} we observe that the optimal Bell inequality for DIRE is the MABK inequality, both in terms of randomness generation rate and noise tolerance. However, DIRE protocols based on the Holz inequality can also outperform protocols based on the CHSH inequality in the low-noise regime (high values of $p$), even when compared to CHSH-based DIRE protocols which recycle the input randomness. Moreover, for the chosen realistic value of $\gamma$, the effect of test rounds on the DIRE rates is negligible as they approximately coincide with the entropy curves in Fig.~\ref{fig:ABentropy-comparison}.

Nevertheless, for a fixed Bell inequality, a DIRE protocol without spot-checking and with recycled input randomness may yield more net randomness than the corresponding protocol with spot-checking. This holds for any DIRE protocol based on a Bell inequality which is symmetric with respect to permutations of the parties' observables, like the CHSH inequality. Indeed, the asymptotic net randomness generation rate of, say, a bipartite spot-checking protocol is given by a lower bound $F(\beta)$ on $H(A_0 B_0 |E)$ (recall that $\gamma\to 0$ in the asymptotic regime), while the rate of the protocol without spot-checking and with recycled input randomness is given by a lower bound on $H(AB|XYE)$. Due to the permutation symmetry of the inequality, the lower bound on $H(A_0 B_0 |E)$ is actually valid for any combination of inputs of Alice and Bob: $F(\beta) \leq H(A_k B_l |E) \equiv H(AB|X=k,Y=l,E)$ for all $k,l\in \{0,1\}$, hence it is also a  lower bound on $H(AB|XYE)=\sum_{k,l} p_{kl} H(AB|X=k,Y=l,E)$. However, it is likely that a direct calculation of $H(AB|XYE)$ would lead to a tighter lower bound and hence to a higher rate for the DIRE protocol without spot-checking. This is confirmed in the CHSH case by Fig.~\ref{fig:DIRE-rates}, where the dashed magenta line (DIRE protocol without spot-checking) lies significantly above the solid magenta line (DIRE protocol with spot-checking).

\section{Discussion and conclusion} \label{sec:discussion}

In this work we consider a two-input/two-output tripartite device-independent (DI) scenario with different tripartite Bell inequalities, namely: the Holz inequality \cite{Holz2019DICKA}, the MABK inequality \cite{Mermin,Ardehali,BK93}, and the Parity-CHSH inequality \cite{JeremyParityCHSH}, as well as the family of (bipartite) asymmetric CHSH inequalities \cite{AsymCHSH,AsymCHSH-Woodhead}. We investigate the asymptotic performance of DI conference key agreement (DICKA) and DI randomness expansion (DIRE) protocols when the different Bell inequalities are tested to certify private randomness in the parties' outcomes. To this aim, we present analytical and numerical lower bounds on the conditional von Neumann entropy of a single party's outcome and of two parties' outcomes, as a function of the Bell inequality violation. We provide a concise overview of the bounds in Table~\ref{tab:summary}.

Specifically, for the Holz inequality \cite{Holz2019DICKA} we derive a tight analytical bound on the one-outcome entropy, $H(A_0|E)$, and conjecture a tight analytical bound on the two-outcome entropy, $H(A_0B_0|E)$, that is strongly supported by numerical results. These are the first tight analytical bounds on the conditional von Neumann entropy for a non-full-correlator Bell inequality --apart from the one-outcome entropy bound derived in \cite{JeremyParityCHSH} for the Parity-CHSH inequality, whose tightness is only proved in this work in Appendix~\ref{app:tightness-parityCHSH-bound}. For the Parity-CHSH and CHSH inequality we instead compute numerical bounds on the two-outcome entropy $H(A_0B_0|E)$.

By using the derived bounds together with bounds obtained in previous literature  \cite{AsymCHSH-Woodhead,JeremyParityCHSH,Grasselli-PRXQuantum}, we compute the asymptotic conference key rate (net randomness generation rate) of DICKA (DIRE) protocols based on the above-mentioned Bell inequalities. We remark, however, that the analytical bounds presented in this work could be applied to finite-key analyses of DICKA and DIRE protocols through the entropy accumulation theorem \cite{EAT,Arnon-Friedman2018}, although we leave this as a matter for future work.

\begin{table*}[t]
\centering
\renewcommand{\arraystretch}{1.2}
\begin{tabular}{|P{3cm}|c|c|c|c|} 
 \hline
 \multirow{2}{*}{\textbf{Bell inequality}} & \multicolumn{2}{c|}{$\mathbf{H(A_0 |E)}$} & \multicolumn{2}{c|}{$\mathbf{H(A_0 B_0|E)}$}\\
 \cline{2-5}
 & lower bound & tight & lower bound & tight \\
 \hline
    \multirow{2}{*}{Holz} & Theorem~\ref{thm:Hbound}   & YES & Conjecture~\ref{conj:H2bound} & YES \\
    &  [This work] &  [This work] & [This work] & [This work]\\
 \hline
    \multirow{2}{*}{Parity-CHSH} & \eqref{entropybound-parity-app}  & YES & numerical & YES$^*$ \\
    &  \cite{JeremyParityCHSH} &  [This work] & [This work] & [This work]\\
 \hline
    \multirow{2}{*}{MABK} & Appendix~\ref{app:MABK-one-outcome-proof}  & NO & \eqref{ABentropybound-mabk-app} & NO \\
    &  [This work] \& \cite{JeremyMABK,Grasselli-PRXQuantum} &  \cite{Grasselli-PRXQuantum} & \cite{Grasselli-PRXQuantum} & \cite{Grasselli-PRXQuantum}\\
 \hline
    \multirow{2}{*}{\parbox{3cm}{asymmetric CHSH (CHSH for $\alpha=1$)}} & \eqref{entropybound-alphachsh-app}  & YES & numerical ($\alpha=1$)  & YES$^*$  \\
    &  \cite{AsymCHSH-Woodhead} &  \cite{AsymCHSH-Woodhead} & [This work] \& \cite{Colbeck2021} & [This work] \& \cite{Colbeck2021} \\
 \hline
\end{tabular}
\caption{Summary of the one-outcome and two-outcome entropy bounds used to investigate DIRE and DICKA protocols based on different Bell inequalities. The expressions of all the analytical bounds are reported in Appendix~\ref{app:bounds-and-optimal-measurements}. We additionally employ the tight analytical bound \eqref{entropybound-colbeck-app} on $H(AB|XYE)$, conjectured in \cite{Colbeck2021}, to study a CHSH-based DIRE protocol without spot-checking and with recycled input randomness. $^*$Note that the numerical lower bounds obtained in this work are not reliable as they are the result of a non-convex minimization of the entropy over the set of states compatible with the Bell violation. Hence, their tightness is understood as the existence of an implementation which attains the plotted curve.}
\label{tab:summary}
\end{table*}

Importantly, our results show that DI protocols based on multipartite Bell inequalities can outperform implementations based on bipartite Bell inequalities under different noise models (local and global depolarizing noise) and for a broad range of noise levels.

For the task of DICKA, the Holz inequality turns out to be the one that yields the largest key rate and, in general, DICKA is better performed with multipartite Bell inequalities. In Sec.~\ref{sec:DICKA} we remark that a DICKA protocol based on a bipartite Bell inequality must be obtained as a concatenation of DIQKD protocols subsequently run, e.g., between Alice and each of the other parties. Hence, its conference key rate, defined as the fraction of secret conference key bits per distributed state, is overly penalized compared to the key rate of a DICKA protocol based on a multipartite Bell inequality. In this regard, one could argue that a more practical definition of conference key rate is given by the fraction of secret conference key bits generated per unit of time. In this case, the relationship between the key rates in Fig.~\ref{fig:rate-comparison} might change significantly due to, e.g., a faster distribution of Bell pairs compared to multipartite entangled states. However, future quantum networks might generate highly non-trivial resource states in order to suit the different needs of its nodes \cite{experimentalCKA-graphstates}, or could present peculiar topologies (e.g. bottle-necks \cite{Epping}) such that the distillation of a multipartite entangled state and of a Bell pair require the same amount of resources and time. In these cases, the advantage of performing DICKA with multipartite entangled states rather than a concatenation of DIQKD protocols would be retained even with more practical definitions of conference key rate.

Another important drawback of implementing DICKA by a concatenation of DIQKD protocols is that the security of the established conference key is spoiled unless Alice uses a new device for every iteration of the DIQKD protocol. Indeed, reusing the same quantum device in independent runs of a DIQKD protocol could lead to security loopholes \cite{Memory_attack,Murta_2019}. Thus, in such a case, Alice would need to possess $N-1$ quantum measurement devices in order to establish a conference key with $N-1$ other parties. For the tripartite DICKA protocol based on the asymmetric CHSH inequality, she would need two distinct quantum devices. Conversely, Alice can establish a conference key with an arbitrary number of parties with only one device, by a single run of a DICKA protocol based on multipartite Bell inequalities, such as the Holz inequality.

Concerning DIRE, we observe that a spot-checking DIRE protocol based on the MABK inequality is the one that yields the largest amount of net randomness in the outcomes of two parties, even when compared with a recent CHSH-based DIRE protocol without spot-checking, where the input randomness is recycled \cite{Colbeck2021}. The advantage of the MABK inequality over the other inequalities could lie in its permutational symmetry, which does not privilege one party at the expense of the other parties (as in the Holz and the Parity-CHSH inequality). However, the CHSH inequality is also permutationally-invariant but its bound on $H(A_0 B_0 |E)$ lies well below the bound for the MABK inequality. This could be explained by the fact that, for any full-correlator Bell inequality such as the MABK and the CHSH inequality, one can assume all the marginal distributions of the outcomes to be symmetric, without loss of generality \cite{Grasselli-PRXQuantum}. Therefore, for the tripartite MABK inequality we have that the unconditional entropy of Alice's and Bob's outcomes is maximal: $H(A_0 B_0)=2$. While in general this is not true for the bipartite CHSH inequality: $H(A_0 B_0)<2$, where the conditions on the marginals ($\braket{A_0}=\braket{B_0}=0$) cannot fix the joint distribution of $A_0 B_0$.

Our work suggests many possible lines of future research. To start with, we emphasize that the techniques employed in the derivation of the one-outcome entropy bounds for the Holz and the MABK inequality (Theorem~\ref{thm:Hbound} and Appendix~\ref{app:MABK-one-outcome-proof}) are applicable to general Bell inequalities with two inputs and two outputs per party. For instance, one could generalize Theorem~\ref{thm:Hbound} to the multipartite scenario where the $N$-party Holz inequality is tested \cite{Holz2019DICKA}, although the way to achieve this might be highly non-trivial. This result, nevertheless, could lead to the best DICKA rate achievable by $N$ parties, since the Holz inequality was introduced in~\cite{Holz2019DICKA} exactly for the purpose of DICKA. On a similar note, one could derive a one-outcome entropy bound that accounts for noisy pre-processing and bias of Alice's raw output, similarly to what has been done in the bipartite case for the CHSH inequality \cite{AsymCHSH-Woodhead,masini-entropy-bounds}.

In the case of DIRE, it is important to remark that the net randomness generated by testing tripartite Bell inequalities can be significantly increased compared to the results presented in Sec.~\ref{sec:DIRE}, making multipartite nonlocality even more beneficial for DI cryptography. One obvious way to increase the DI randomness is to combine the outputs of all three parties in the randomness-generation rounds, instead of only using Alice's and Bob's outputs. However, this requires the derivation of bounds on three-outcome entropies of the form  $H(A_i B_j C_k |E)$, for which no analytical nor numerical result is yet available. Another way to generate more randomness from multipartite nonlocality is to extend the idea of DIRE protocols without spot-checking and with recycled input randomness to the multiparty scenario. Indeed, such protocols can outperform the corresponding spot-checking protocol that tests the same Bell inequality, especially when the latter is permutationally invariant. Therefore, an important avenue to improve the randomness generated by the MABK-based DIRE protocol, and any multipartite DIRE protocol based on permutationally-invariant Bell inequalities, is the derivation of entropy bounds on quantities like $H(AB|XYZE)$ and $H(ABC|XYZE)$.

Besides deriving new entropy bounds and improving the performance of DI protocols, our work leaves an interesting question open. In Sec.~\ref{sec:entropy-bounds} we show that genuine multipartite entanglement (GME) is not always necessary to certify non-zero entropy in a single party's outcome when testing a multipartite Bell inequality, especially when the latter presents asymmetries. This means that GME is not a precondition for DIRE protocols with multipartite Bell inequalities. Conversely, DICKA schemes, apart from certifying private randomness, also require the parties to obtain correlated outcomes that can form a shared conference key. It remains an open question whether non-GME states can be used to perform DICKA, while their usefulness has been established in the case of device-dependent CKA \cite{Carrara2020}.

\begin{acknowledgements}
We acknowledge support by the Deutsche Forschungsgemeinschaft (DFG, German Research Foundation) under Germany’s Excellence Strategy - Cluster of Excellence Matter and Light for Quantum Computing (ML4Q) EXC 2004/1 -390534769. D.B. and H.K. acknowledge support by the QuantERA project QuICHE, via the German Ministry for Education and Research (BMBF Grant No. 16KIS1119K). F.G. and D.B. acknowledge support by the DFG Individual Research Grant BR2159/6-1.
\end{acknowledgements}

\bibliographystyle{quantum}

\bibliography{biblio-quantum}

\onecolumn\newpage

\appendix
\numberwithin{equation}{section}
\section{Summary of optimal strategies and entropy bounds}\label{app:bounds-and-optimal-measurements}

In this appendix we report the quantum strategies that lead to maximal violation of the Bell inequalities considered in the manuscript, as well as a summary of the one-outcome and two-outcome entropy bounds used to benchmark DICKA and DIRE protocols.

\paragraph{Holz inequality \cite{Holz2019DICKA}} For three parties, the inequality reads:
 \begin{equation}
     \beta_{\mathrm{H}} := \left<A_1B_{+}C_{+}\right>-\left<A_0B_{-}\right> -\left<A_0C_{-}\right> -\left<B_{-}C_{-}\right>\leq 1, 
\end{equation}
and has quantum bound $\beta^Q_{\mathrm{H}}=3/2$. The quantum bound is attained when the parties share a GHZ state and choose the following optimal measurements:
\begin{align}
    A_0=\sigma_z \quad&;\quad A_1=\sigma_x \nonumber\\
    B_{+}=C_{+}= \frac{\sqrt{3}}{2}\sigma_x \quad&;\quad B_{-}=C_{-}=-\frac{1}{2}\sigma_z \label{opt-measure-timo}.
\end{align}
The tight lower bound on the conditional entropy of Alice's outcome $A_0$, certified by a violation of the Holz inequality, reads (Theorem~\ref{thm:Hbound}): 
\begin{equation}
  H(A_0|E) \geq  1-h\left[\frac{1}{4}\left( \beta_{\mathrm{H}} + 1+ \sqrt{\beta_{\mathrm{H}}^2 + 2\beta_{\mathrm{H}} -3}\right)\right] \label{Hbound-app}. 
\end{equation}
The tight lower bound on the conditional entropy of Alice's and Bob's outcomes $A_0$ and $B_0$ is conjectured to be (Conjecture~\ref{conj:H2bound}):
\begin{align}
H(A_0 B_0 |E) \geq
&\left\lbrace\begin{array}{ll}
    \eta(\beta_{\mathrm{H}})     &\beta_{\mathrm{H}}\in[1,\sqrt{2}]  \\[2ex]
     \displaystyle\frac{\theta(\beta^*_{\mathrm{H}},x(\beta^*_{\mathrm{H}}))-1}{\beta^*_{\mathrm{H}}-\sqrt{2}}(\beta_{\mathrm{H}}-\sqrt{2}) +1      &\beta_{\mathrm{H}}\in(\sqrt{2},\beta^*_{\mathrm{H}}] \\[3ex]
    \theta(\beta_{\mathrm{H}},x(\beta_{\mathrm{H}})) &\beta_{\mathrm{H}}\in(\beta^*_{\mathrm{H}},3/2]
    \end{array}\right. 
    \label{ABentropybound-holz-app}
\end{align}
where the functions $\eta$ and $\theta$ are defined as:
\begin{align}
    \eta(\beta_{\mathrm{H}})= 2-H\left(\left\lbrace\frac{1}{4}\left(1+\sqrt{\beta^2_{\mathrm{H}}-1}\right),\frac{1}{4}\left(1+\sqrt{\beta^2_{\mathrm{H}}-1}\right),\frac{1}{4}\left(1-\sqrt{\beta^2_{\mathrm{H}}-1}\right),\frac{1}{4}\left(1-\sqrt{\beta^2_{\mathrm{H}}-1}\right)\right\rbrace\right) \label{eta}
\end{align}
and
\begin{align}
    \theta(\beta_{\mathrm{H}},x)&=H\left(\left\lbrace\frac{\beta_{\mathrm{H}}(2-\beta_{\mathrm{H}})-x^2}{8(\beta_{\mathrm{H}}-1)},\frac{\beta_{\mathrm{H}}(2-\beta_{\mathrm{H}})-x^2}{8(\beta_{\mathrm{H}}-1)},\frac{(\beta_{\mathrm{H}}-1)(\beta_{\mathrm{H}}+3)+x^2-1}{8(\beta_{\mathrm{H}}-1)},\frac{(\beta_{\mathrm{H}}-1)(\beta_{\mathrm{H}}+3)+x^2-1}{8(\beta_{\mathrm{H}}-1)}\right\rbrace\right) \nonumber\\
    &-h\left(\frac{2(1-x)-(\beta_{\mathrm{H}}-x)^2}{4x(\beta_{\mathrm{H}}-1)}\right). \label{theta}
\end{align}
The function $x(\beta_{\mathrm{H}})$ returns the real solution of the following transcendental equation in $x$:
\begin{align}
    &\left(\beta_{\mathrm{H}}^2-x^2-2\right) \log \left(-\beta_{\mathrm{H}}^2-2 \beta_{\mathrm{H}} x-x^2+2 x+2\right)+\left(x^2+2\right) \log \left(\beta_{\mathrm{H}}^2-2 \beta_{\mathrm{H}} x+x^2+2 x-2\right) \nonumber\\
    &+2x^3 \log \left(\beta_{\mathrm{H}}^2+2 \beta_{\mathrm{H}}+x^2-4\right)-\beta_{\mathrm{H}}^2 \log \left(\beta_{\mathrm{H}}^2-2 \beta_{\mathrm{H}} x+x^2+2x-2\right)-2 x^3 \log \left(-\beta_{\mathrm{H}}^2+2\beta_{\mathrm{H}}-x^2\right)=0 \label{x-sol},
\end{align}
which is obtained by setting $\partial\theta(\beta_{\mathrm{H}},x)/\partial x=0$. Finally, the violation $\beta^*_{\mathrm{H}}$ is approximately given by $\beta^*_{\mathrm{H}}\approx 1.49$ and is implicitly defined by the following equation:
\begin{align}
    \frac{d \theta(\beta^*_{\mathrm{H}},x(\beta^*_{\mathrm{H}}))}{d \beta^*_{\mathrm{H}}} (\beta^*_{\mathrm{H}}-\sqrt{2})=\theta(\beta^*_{\mathrm{H}},x(\beta^*_{\mathrm{H}}))-1 \label{betaprime}.
\end{align}

\paragraph{Parity-CHSH inequality \cite{JeremyParityCHSH}} The inequality reads, after renormalization, as follows:
\begin{equation}
    \beta_{\mathrm{pC}} = \left<A_1B_{-}C\right>+\left<A_0B_{+}\right> \leq 1 \label{parity-chsh-ineq-app}
\end{equation}
where $B_{\pm};=(B_0 \pm B_1)/2$ and $\beta^Q_{\mathrm{pC}}=\sqrt{2}$ is the quantum bound, which is attained when the parties share a GHZ state and choose the following optimal measurements:
\begin{align}
    A_0=\sigma_z \quad&;\quad A_1=\sigma_x \quad;\quad C=\sigma_x\nonumber\\
    B_{+}= \frac{1}{\sqrt{2}}\sigma_z \quad&;\quad B_{-}=\frac{1}{\sqrt{2}}\sigma_x. \label{opt-measure-parity}
\end{align}
A lower bound on the entropy of Alice's outcome $A_0$ certified by the Parity-CHSH inequality is given by:
\begin{equation}
    H(A_0|E)\geq 1- h\de{\frac{1}{2}+\frac{1}{2}\sqrt{\de{\beta_{\mathrm{pC}}}^2-1}} \label{entropybound-parity-app}.
\end{equation}
The above bound is derived in \cite{JeremyParityCHSH}, however it was not proved to be tight. We show its tightness in Appendix~\ref{app:tightness-parityCHSH-bound}.

A numerical lower bound on the two-outcome entropy $H(A_0 B_0 |E)$ is obtained in this work. For details, see Appendix~\ref{app:two-outcome-numerical}.

\paragraph{MABK inequality \cite{Mermin,Ardehali,BK93}} In the case of three parties the inequality reads:
\begin{align}
    \BM=\braket{A_0 B_0 C_1} + \braket{A_0 B_1 C_0} + \braket{A_1 B_0 C_0} - \braket{A_1 B_1 C_1} \leq 2 \label{mabk-ineq-app}.
\end{align}
The quantum bound $\BM^Q=4$ is achieved by the following optimal measurements on the GHZ state:
\begin{align}
    A_0=B_0=\sigma_y \quad&;\quad A_1=B_1=\sigma_x \nonumber\\
    C_0= -\sigma_y \quad&;\quad C_1=-\sigma_x \label{opt-measure-mabk}.
\end{align}
A lower bound on the entropy of Alice's outcome is given by \cite{JeremyMABK,Grasselli-PRXQuantum}:
\begin{equation}
    H(A_0|E) \geq 1-h\left(\frac{1}{2} + \frac{1}{2}\sqrt{\frac{\BM^2}{8}-1}\right). \label{entropybound-mabk-app}
\end{equation}
We provide an alternative proof of this bound in Appendix~\ref{sec:MABK-bound}.

For the two-party entropy $H(A_0 B_0 |E)$, a lower bound is given by \cite{Grasselli-PRXQuantum}:
\begin{align}
    H(A_0 B_0|E) \geq 2-H\left(\left\lbrace 1-3 f(\beta_{\mathrm{M}}),f(\beta_{\mathrm{M}}),f(\beta_{\mathrm{M}}),f(\beta_{\mathrm{M}})\right\rbrace\right) \label{ABentropybound-mabk-app},
\end{align}
where $H(\{p_1,p_2,\dots\})$ is the Shannon entropy of the probability distribution $\{p_1,p_2,\dots\}$ and the function $f$ is defined as:
\begin{equation}
    f(\beta_{\mathrm{M}}) = \frac{1}{4}-\frac{\sqrt{3}}{24}\sqrt{\beta_{\mathrm{M}}^2-4} \label{functionf}.
\end{equation} 
We remark that, according to the numerical calculations in \cite{Grasselli-PRXQuantum}, the bounds in \eqref{entropybound-mabk-app} and \eqref{ABentropybound-mabk-app} are not tight but are close to the corresponding tight lower bound.

\paragraph{Asymmetric CHSH inequalities \cite{AsymCHSH,AsymCHSH-Woodhead}} The family of inequalities is parametrized by $\alpha\in\mathbbm{R}$ and reads:
\begin{align}
    \beta_{\alpha\mathrm{C}}=\alpha\braket{A_0 B_0} + \alpha\braket{A_0 B_1} + \braket{A_1 B_0} - \braket{A_1 B_1} \leq \left\lbrace \begin{array}{ll}
    2 \abs{\alpha}     &\mbox{if } \abs{\alpha}>1 \\
    2   & \mbox{if } \abs{\alpha}\leq 1
    \end{array}\right. \label{alphachsh-ineq-app}
\end{align}
The CHSH inequality is maximally violated and reaches its quantum bound $\beta^Q_{\alpha\mathrm{C}}=2 \sqrt{1+\alpha^2}$ when Alice and Bob share $\ket{\Phi^+}$ and perform the optimal measurements \cite{AsymCHSH-Woodhead}:
\begin{align}
    A_0=\sigma_z \quad&;\quad A_1=\sigma_x \nonumber\\
    B_0=\frac{\alpha}{\sqrt{1+\alpha^2}}\sigma_z + \frac{1}{\sqrt{1+\alpha^2}}\sigma_x \quad&;\quad B_1=\frac{\alpha}{\sqrt{1+\alpha^2}}\sigma_z - \frac{1}{\sqrt{1+\alpha^2}}\sigma_x \label{opt-measure-alphachsh}.
\end{align}
A tight lower bound on the entropy of Alice's outcome was derived in \cite{AsymCHSH-Woodhead}, and reads:
\begin{align}
    H(A_0|E) \geq \left\lbrace \begin{array}{ll}
    g'(\beta^*_{\alpha\mathrm{C}})(\beta_{\alpha\mathrm{C}}-2) &\mbox{if }\abs{\alpha}< 1 \mbox{ and } 2 \leq \beta_{\alpha\mathrm{C}}<\beta^*_{\alpha\mathrm{C}} \\[1ex]
     g(\beta_{\alpha\mathrm{C}})    &\mbox{if }\abs{\alpha} \geq 1 \mbox{ or } \beta_{\alpha\mathrm{C}} \geq \beta^*_{\alpha\mathrm{C}}
    \end{array}\right. \label{entropybound-alphachsh-app},
\end{align}
where the function $g(x)$ is defined as: $g(x):= 1 - h(1/2 + (1/2)\sqrt{x^2/4-\alpha^2})$, $g'(x)$ is its first derivative and $\beta^*_{\alpha\mathrm{C}}$ is the solution of the following equation: $g'(x)(x-2)= g(x)$.

For the two-party entropy $H(A_0B_0|E)$, a numerical bound for the case of $\alpha=1$ (which reduces to the standard CHSH inequality) is obtained in this work (see Appendix~\ref{app:two-outcome-numerical}) and agrees with the numerical bound independently derived in~\cite{Colbeck2021}. Moreover, for the CHSH inequality a tight analytical lower bound on $H(A B |XY E)$ (where $X$ and $Y$ are Alice's and Bob's inputs) as a function of the CHSH value $\beta_C$ is conjectured in \cite{Colbeck2021} and reported here: 
\begin{align}
    H(AB|XYE) \geq \left\lbrace \begin{array}{ll}
    g_1'(\beta^*_{\mathrm{C}})(\beta_{\mathrm{C}}-2) &\mbox{if } 2 \leq \beta_{\mathrm{C}}\leq \beta^*_{\mathrm{C}} \\[1ex]
     g_1(\beta_{\mathrm{C}})    & \mbox{if } \beta^*_{\mathrm{C}} < \beta_{\mathrm{C}} \leq 2 \sqrt{2}
    \end{array}\right. \label{entropybound-colbeck-app},
\end{align}
where $g_1(x)$ is defined as $g_1(x)=1+h(1/2+x/8)-2\,h(1/2+\sqrt{2}x/8)$, $g_1 '(x)$ is its first derivative and $\beta^*_C$ is the solution of $g_1'(x)(x-2)=g_1(x)$ and is approximately given by $\beta^*_C \approx 2.75$.

\section{Proof of one-outcome entropy bound for the Holz inequality} \label{app:Holz-one-outcome-proof}

In this appendix we prove Theorem~\ref{thm:Hbound}, i.e., we prove the following lower bound on the von Neumann entropy of Alice's outcome $A_0$, conditioned on the eavesdropper's total side information $E_{\rm tot}$, when three parties test the Holz inequality:
\begin{equation}
    H(A_0|E_{\mathrm{tot}}) \geq  1-h\left[\frac{1}{4}\left( \beta_{\mathrm{H}} + 1+ \sqrt{\beta_{\mathrm{H}}^2 + 2\beta_{\mathrm{H}} -3}\right)\right] \label{Hbound-proof},
\end{equation}
where $h(x)=-x \log_2 x + (1-x) \log_2 (1-x)$ is the binary entropy. Additionally, we prove that the bound above is tight.

Directly performing an analytical minimization of the conditional entropy over every quantum state (of any dimension) and measurement would be prohibitive. Therefore, the first step to prove \eqref{Hbound-proof} is to simplify the problem at hand.

\subsection{Simplification of the problem} \label{subsec:simplification}

Here we simplify the generic quantum state shared by the parties and the form of the Holz inequality, without losing generality.

Holz's Bell inequality \cite{Holz2019DICKA}, for $N=3$ parties, is given by:
\begin{equation}
     \beta_{\mathrm{H}} := \left<A_1B_{+}C_{+}\right>-\left<A_0B_{-}\right> -\left<A_0C_{-}\right> -\left<B_{-}C_{-}\right>\leq 1, \label{timo-ineq-app}
\end{equation}
where $A_i$, $B_i$ and $C_i$ (for $i=0,1$) are Alice's, Bob's and Charlie's binary observables, respectively, and where we define the unnormalized observables: $B_{\pm}=(B_0 \pm B_1)/2$ and $C_{\pm}=(C_0 \pm C_1)/2$.

To start with, we observe that each party holds two observables with binary outcomes. By following the proof of Theorem~1 in \cite{Grasselli-PRXQuantum}, it is not restrictive to assume that (in every protocol round) Alice, Bob and Charlie share a mixture of three-qubit states and perform rank-one binary projective measurements on their respective qubits. In particular, due to the DI setting, we allow Eve to be in control of the state preparation and to determine the projective measurements performed by the parties on each state of the mixture. We formalize this by saying that, in each round, Eve distributes the following three-qubit mixture:
\begin{equation}
    \rho_{ABC\Xi E'}=\sum_{\alpha} p_\alpha \rho_\alpha \otimes \ketbra{\alpha}{\alpha}_{\xi_A}\otimes \ketbra{\alpha}{\alpha}_{\xi_B}\otimes \ketbra{\alpha}{\alpha}_{\xi_C} \otimes \ketbra{\alpha}{\alpha}_{E'}  \label{3qubit-mixture}
\end{equation}
together with a set of ancillae $\Xi=\{\xi_A,\xi_B,\xi_C\}$  that instruct the parties' devices on the projective measurements to implement on each state $\rho_\alpha$ of the mixture. Recall that Eve knows which state in the mixture gets distributed; this is represented by the classical register $\ket{\alpha}_{E'}$.

The conditional entropy of Alice's outcome $A_0$ given Eve's total information $E_{\mathrm{tot}}=E E'$ can then be expressed as follows\footnote{Note that the same result can also be derived from a generic pure state distributed by Eve \cite{AsymCHSH-Woodhead}. In this case, the classical mixture in \eqref{3qubit-mixture} is the result of the block-diagonal measurement of each party (due to Jordan's lemma \cite{Masanes06}). Thanks to the concavity of the conditional von Neumann entropy one recovers \eqref{entropy-goal1}.}:
\begin{align}
    H(A_0|E_{\mathrm{tot}}) &= \sum_\alpha p_\alpha H(A_0|E E'=\alpha) \nonumber\\
    &= \sum_\alpha p_\alpha H(A_0|E)_{\rho_\alpha}, \label{entropy-goal1}
\end{align}
and we aim at deriving a lower bound on $H(A_0|E_{\mathrm{tot}})$ as a function of the Bell value $\BH$ yielded by \eqref{3qubit-mixture}. The latter can be expressed in terms of the Bell values $\beta^\alpha_{\mathrm{H}}$ yielded by each state $\rho_\alpha$ of the mixture:
\begin{equation}
    \BH = \sum_\alpha p_\alpha \beta^\alpha_{\mathrm{H}} \label{Bell-value-decomp}.
\end{equation}

Equations \eqref{entropy-goal1} and \eqref{Bell-value-decomp} allow us to focus on a specific state $\rho_\alpha$ and derive a convex lower bound $F$ on its conditional entropy $H(A_0|E)_{\rho_\alpha}$ in terms of the Bell value $\beta^\alpha_{\mathrm{H}}$:
\begin{equation}
    H(A_0|E)_{\rho_\alpha} \geq F(\beta^\alpha_{\mathrm{H}}) \label{entropy-goal2}.
\end{equation}
Indeed, by combining the above expression with \eqref{entropy-goal1} and \eqref{Bell-value-decomp} and by exploiting the convexity of $F$, we obtain the desired lower bound:
\begin{equation}
    H(A_0|E_{\mathrm{tot}}) \geq F(\BH) \label{entropy-goal3}.
\end{equation}

For the above argument, we now focus on a specific three-qubit state $\rho_{\alpha}$ and derive the convex lower bound \eqref{entropy-goal2}. For ease of notation, in the following we omit the symbol $\alpha$.

\subsubsection{Reduction of the inequality} \label{sec:ineq-reduction}

\begin{Lmm} \label{lmm:simpl-bellvalue}
The Bell value of the Holz inequality \eqref{timo-ineq} can be reduced without loss of generality to the following form, for some angles $a_1$, $b_-$, and $c_-$,
\begin{equation}
     \beta_{\mathrm{H}} = \left(\cos a_1 \braket{ZXX} + \sin a_1 \braket{XXX}\right)\cos b_{-} \cos c_{-} + \sin b_{-} \braket{ZZ\id} + \sin c_{-} \braket{Z\id Z} -\sin b_{-}\sin c_{-}\braket{\id ZZ}, \label{reduced-bellvalue}
\end{equation}
where $\id$ is the identity operator.
\end{Lmm}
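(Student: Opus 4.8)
The plan is to exploit, for each party, the freedom to choose the local Bloch-sphere reference frame: applying a local unitary to the shared qubit state does not affect Eve's side information (hence leaves $H(A_0|E)$ unchanged), while it transforms the observables, so any such change of frame is without loss of generality. After reducing to three-qubit states and rank-one projective measurements (cf.\ \eqref{3qubit-mixture}), every observable $A_i,B_i,C_i$ is a $\pm1$-valued qubit observable, i.e.\ of the form $\hat n\cdot\vec\sigma$ for a unit Bloch vector $\hat n$. For Alice I would first rotate her frame so that the plane spanned by $A_0$ and $A_1$ is the $(x,z)$ plane, and then rotate within that plane so that $A_0=Z$; this forces $A_1=\cos a_1\, Z+\sin a_1\, X$ for some angle $a_1$.

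For Bob, and identically for Charlie, the crucial point is that $B_0,B_1$ enter the Holz inequality only through $B_\pm=(B_0\pm B_1)/2$, together with the coordinate-free identities $B_+\cdot B_-=\tfrac14(|B_0|^2-|B_1|^2)=0$ and $|B_+|^2+|B_-|^2=\tfrac14\bigl(|B_0+B_1|^2+|B_0-B_1|^2\bigr)=1$, valid because $\hat n_{B_0}$ and $\hat n_{B_1}$ are unit vectors. Hence $B_+$ and $B_-$ are orthogonal Bloch vectors whose squared lengths sum to one, and I can rotate Bob's frame to align $B_+$ with his $x$-axis --- orthogonality then puts $B_-$ along his $z$-axis --- and fix the sign of the single angle $b_-$ (using that $b_-\mapsto-b_-$ preserves $\cos b_-$ but flips $\sin b_-$) so that $B_+=\cos b_-\, X$ and $B_-=-\sin b_-\, Z$; likewise $C_+=\cos c_-\, X$ and $C_-=-\sin c_-\, Z$. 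Since these rotations act on the three subsystems independently, they constitute a legitimate local change of frame. It then only remains to substitute into $\beta_{\mathrm H}=\braket{A_1B_+C_+}-\braket{A_0B_-}-\braket{A_0C_-}-\braket{B_-C_-}$ and expand each term by linearity: $\braket{A_1B_+C_+}$ yields $\cos b_-\cos c_-\,(\cos a_1\braket{ZXX}+\sin a_1\braket{XXX})$, the terms $-\braket{A_0B_-}$ and $-\braket{A_0C_-}$ yield $\sin b_-\braket{ZZ\id}$ and $\sin c_-\braket{Z\id Z}$, and $-\braket{B_-C_-}$ yields $-\sin b_-\sin c_-\braket{\id ZZ}$, which is precisely \eqref{reduced-bellvalue}.

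The trigonometry and the final substitution are routine; the step I expect to require the most care is the sign bookkeeping together with the degenerate configurations in which $B_0=\pm B_1$ (so that one of $B_\pm$ vanishes). These are not exceptions but the endpoints $b_-\in\{0,\pi/2\}$ of the same parametrisation, consistent with $|B_+|^2+|B_-|^2=1$; checking this, and that fixing $A_0=Z$ still leaves Bob and Charlie enough freedom to reach the stated canonical form with the correct signs, is the only genuinely delicate part. Once it is settled, the identity $|B_+|^2+|B_-|^2=1$ is exactly what collapses each party's pair of observables to a single angle and makes the whole reduction go through.
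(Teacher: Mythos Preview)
Your proof is correct and follows essentially the same route as the paper --- use the local unitary freedom on each qubit to set $A_0=Z$ and align $B_\pm,C_\pm$ with the $X$ and $Z$ axes, then substitute --- with your coordinate-free observation $B_+\perp B_-$, $|B_+|^2+|B_-|^2=1$ replacing the paper's explicit angle parametrisation $b_\pm=(b_0\pm b_1)/2$. One small imprecision: after aligning $B_+$ with the $x$-axis, orthogonality alone only forces $B_-$ into the $(y,z)$-plane, so you still need the residual rotation about the $x$-axis (which fixes $B_+$) to bring $B_-$ onto $z$; that degree of freedom is indeed available, and once noted the rest goes through exactly as you describe.
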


\begin{proof}
We identify the plane induced by the two qubit observables of each party to be the $(x,z)$ plane of the Bloch sphere. Then the parties' observables can be parametrized as follows:
\begin{align}
    A_i &= Z \cos a_i + X \sin a_i \label{Ai}\\
    B_i &= Z \cos b_i + X \sin b_i \label{Bi}\\
    C_i &= Z \cos c_i + X \sin c_i \label{Ci}
\end{align}
where $a_i,b_i,c_i \in [0,2\pi]$ and where $X,Y$ and $Z$ are the Pauli operators. By defining the parameters $b_{\pm}=(b_0 \pm b_1)/2$ and $c_{\pm}=(c_0 \pm c_1)/2$, we can recast the observables $B_{\pm}$ and $C_{\pm}$ as follows:
\begin{align}
    B_{+} &= \cos b_{-} (Z\cos b_{+} + X\sin b_{+}) \label{B+}\\
    B_{-} &= -\sin b_{-} (Z\sin b_{+} - X\cos b_{+}) \label{B-}\\
    C_{+} &= \cos c_{-} (Z\cos c_{+} + X\sin c_{+}) \label{C+}\\
    C_{-} &= -\sin c_{-} (Z\sin c_{+} - X\cos c_{+}) \label{C-}.
\end{align}

By employing the rotational degree of freedom of the local reference frame of each party (rotations around the $y$ axis), we can partially fix the observables (without loss of generality) by rotating Alice's, Bob's and Charlie's reference frames such that:
\begin{align}
    a_0 &=0  \label{a0}\\
    b_{+} = c_{+} &= \frac{\pi}{2}. \label{b+c+}
\end{align}
In particular, we have fixed Alice's key generation measurement $A_0$ to be $Z$. With the above non-restrictive conditions, we reduce the Bell value of the Holz inequality to the form given in \eqref{reduced-bellvalue}.
\end{proof}

\subsubsection{Reduction of the quantum state} \label{sec:state-reduction}

After having reduced the generic quantum state shared by Alice, Bob and Charlie in each round to a mixture of three-qubit states \eqref{3qubit-mixture}, here we prove that each state of the mixture, without loss of generality, is diagonal in the GHZ basis except for some real off-diagonal coefficients. The GHZ basis is an orthonormal basis for the Hilbert space of three qubits and is given by $\{\ket{\psi_{i,j,k}}\}^1_{i,j,k=0}$, with:
\begin{equation}
    \ket{\psi_{i,j,k}} =\frac{1}{\sqrt{2}} \left(\ket{0,j,k} + (-1)^i \ket{1,\bar{j},\bar{k}}\right) \label{GHZbasis},
\end{equation}
where the bar over a bit indicates its negation.

\begin{Lmm} \label{lmm:simpl-state}
Without loss of generality, the three-qubit state shared by Alice, Bob and Charlie can be parametrized as follows:
\begin{align}
    \rho= &\sum_{i,j,k=0}^1 \lambda_{ijk} \ketbra{\psi_{i,j,k}}{\psi_{i,j,k}} + \sum_{j,k=0}^1 r_{jk} \left(\ketbra{\psi_{0,j,k}}{\psi_{1,\bar{j},\bar{k}}} + \mathrm{h.c.}\right) \label{rho-GHZdiag},
\end{align}
where $\lambda_{ijk}$ are the diagonal terms and $r_{jk}$ are real off-diagonal coefficients.
\end{Lmm}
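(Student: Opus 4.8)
The plan is to exhibit a finite group of symmetry operations on the three-qubit system that leave invariant both the reduced Bell operator of Lemma~\ref{lmm:simpl-bellvalue} and the conditional entropy $H(A_0|E)$, and whose average action on $\rho$ projects onto the claimed GHZ-block-diagonal form. Combined with the convexity reduction of Sec.~\ref{subsec:simplification}, such an invariance suffices: given an arbitrary $\rho$ achieving Bell value $\beta_{\mathrm{H}}$ with entropy $h=H(A_0|E)_\rho$, one lets Eve additionally hold a classical register recording which symmetry was applied; the resulting strategy has the same Bell value, the same conditional entropy, and a three-qubit marginal of the desired form, so a bound proven for such marginals transfers back to $\rho$.

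First I would reduce to real states. After Lemma~\ref{lmm:simpl-bellvalue} all observables, including $A_0=Z$, lie in the $(x,z)$-plane and are hence real in the computational basis, as is the whole Bell operator \eqref{reduced-bellvalue}. Complex conjugation $\rho\mapsto\rho^*$ in the computational basis is therefore an antiunitary symmetry: it preserves the Bell value and, being spectrum-preserving and commuting with the spectral projectors of $A_0$, it preserves $H(A_0|E)$. Letting Eve flag whether conjugation was applied, one may assume $\rho$ is real; since the GHZ basis vectors \eqref{GHZbasis} are real, all matrix elements of $\rho$ in the GHZ basis are then real, which already delivers the reality of the coefficients $r_{jk}$.

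Next I would annihilate the ``non-partner'' off-diagonal elements. The key observation is that $Y_AY_BX_C$, $Y_AX_BY_C$ and $Z_BZ_C$, together with the identity, form a Klein four-group each element of which conjugates the reduced Bell operator \eqref{reduced-bellvalue} to itself — a short computation using $YZY=-Z$, $YXY=-X$ and the symmetry of the inequality under exchanging Bob and Charlie — and each of which either commutes with $A_0=Z$ ($Z_BZ_C$) or anticommutes with it ($Y_AY_BX_C$, $Y_AX_BY_C$), the latter merely relabelling Alice's outcome and hence leaving $H(A_0|E)$ unchanged. In the GHZ basis these three operators act diagonally on $\ket{\psi_{i,j,k}}$ with sign patterns $(-1)^{i+j+1}$, $(-1)^{j+k}$ and $(-1)^{i+k+1}$, so that twirling $\rho$ over the group, $\rho\mapsto\tfrac14\sum_g g\,\rho\,g^\dagger$, annihilates every matrix element $\bra{\psi_{i,j,k}}\rho\ket{\psi_{i',j',k'}}$ unless $i\oplus i'=j\oplus j'=k\oplus k'$, i.e. unless $(i',j',k')$ equals $(i,j,k)$ or its complement $(\bar{i},\bar{j},\bar{k})$. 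This twirl preserves reality and leaves precisely the diagonal terms $\lambda_{ijk}$ together with the four partner blocks $\ket{\psi_{0,j,k}}\bra{\psi_{1,\bar{j},\bar{k}}}$ — i.e. the form \eqref{rho-GHZdiag}. The two reductions can in fact be merged by twirling over the eight-element group obtained by adjoining complex conjugation.

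The main obstacle is identifying this symmetry group. Since the Holz inequality is not permutation-invariant — Alice plays a distinguished role and only the pair $B,C$ appears symmetrically — one cannot borrow the full-correlator symmetries used for MABK, and must instead determine by hand which Pauli-type operators stabilise the reduced operator \eqref{reduced-bellvalue} and check that those which do generate exactly the subgroup of GHZ-diagonal sign flips whose common invariant two-dimensional subspaces are spanned by a GHZ state and its partner. A secondary point requiring care is the logical direction of the ``without loss of generality'': the twirl must be realised by enlarging Eve's system, so that $H(A_0|E)$ is preserved exactly rather than merely bounded, which is what lets a bound for the reduced states imply the bound for all states.
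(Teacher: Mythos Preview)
Your approach is essentially the same as the paper's: both arguments twirl over the Klein four-group $\{\id,\,\id ZZ,\,YYX,\,YXY\}$ to kill all coherences except the $\ket{\psi_{0,j,k}}\!\bra{\psi_{1,\bar j,\bar k}}$ partners, and both use complex conjugation (in the computational/GHZ basis, which coincide since the GHZ vectors are real) to make the surviving coherences real. The only cosmetic differences are that the paper performs the two reductions in the opposite order, and that your sign patterns for $YXY$ and $Z_BZ_C$ are swapped; also, ``preserved exactly'' is slightly too strong---once Eve additionally holds the purification of the flag register, strong subadditivity gives only $H(A_0|E_{\mathrm{tot}})_{\bar\rho}\leq H(A_0|E)_\rho$, but this is the direction needed.
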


In view of later calculations, we provide the eigenvalues and eigenvectors of the state in \eqref{rho-GHZdiag}. The eigenvalues $\{\rho_{ijk}\}$ of $\rho$ are given by:
\begin{align}
    \rho_{ijk} = \frac{1}{2} \left(\lambda_{0jk} + \lambda_{1\bar{j}\bar{k}} + (-1)^i \sqrt{(\lambda_{0jk} - \lambda_{1\bar{j}\bar{k}})^2+4 r_{jk}^2}\right) \label{eigenvalues},
\end{align}
and the corresponding eigenvectors $\{\ket{\rho_{ijk}}\}$ read as follows:
\begin{align}
    \ket{\rho_{0jk}} &= \cos(t_{jk}) \ket{\psi_{0,j,k}} + \sin(t_{jk}) \ket{\psi_{1,\bar{j},\bar{k}}}    \nonumber\\
    \ket{\rho_{1jk}} &= -\sin(t_{jk}) \ket{\psi_{0,j,k}} + \cos(t_{jk}) \ket{\psi_{1,\bar{j},\bar{k}}}  , \label{eigenvectors}
\end{align}
where $t_{jk}$ is defined as:
\begin{equation}
    t_{jk} = \arctan \frac{2 r_{jk}}{\lambda_{0jk} - \lambda_{1\bar{j}\bar{k}} + \sqrt{(\lambda_{0jk} - \lambda_{1\bar{j}\bar{k}})^2+4 r_{jk}^2}} . \label{tjk}
\end{equation}
We remark that the two sets of parameters $\{\lambda_{ijk},r_{jk}\}$ and $\{\rho_{ijk},t_{jk}\}$ can be used interchangeably to completely describe the state $\rho$. The inverse relations of \eqref{eigenvalues} and \eqref{tjk} read as follows:
\begin{align}
    r_{jk} &= \sin(2t_{jk}) (\rho_{0jk}-\rho_{1jk}) \label{rjk}\\
    \lambda_{0jk} &= \cos^2(t_{jk}) \rho_{0jk} + \sin^2(t_{jk}) \rho_{1jk} \label{lambda0jk} \\
    \lambda_{1\bar{j}\bar{k}} &= \cos^2(t_{jk}) \rho_{1jk} + \sin^2(t_{jk}) \rho_{0jk}. \label{lambda1jbarkbar}
\end{align}

\begin{proof}
We start by noticing that the elements of the GHZ basis are eigenstates of the operators whose expectation value appear in the Holz inequality \eqref{reduced-bellvalue}, except for $ZXX$. More specifically, the action of every operator in \eqref{reduced-bellvalue} on the GHZ basis reads:
\begin{align}
    XXX \ket{\psi_{i,j,k}} &= (-1)^i \ket{\psi_{i,j,k}} \\
    ZZ\id \ket{\psi_{i,j,k}} &= (-1)^j \ket{\psi_{i,j,k}} \\
    Z\id Z \ket{\psi_{i,j,k}} &= (-1)^k \ket{\psi_{i,j,k}} \\
    \id ZZ \ket{\psi_{i,j,k}} &= (-1)^{j+k} \ket{\psi_{i,j,k}} \\
    ZXX \ket{\psi_{i,j,k}} &= \ket{\psi_{\bar{i},\bar{j},\bar{k}}} .\\
\end{align}
Let us now consider the most generic three-qubit state and express it in the GHZ basis:
\begin{align}
    \rho= \sum_{\substack{i,j,k=0 \\ i',j',k'=0}}^1 \rho_{(ijk),(i'j'k')} \ketbra{\psi_{i,j,k}}{\psi_{i',j',k'}} \label{3qubit-state}.
\end{align}
For the observation above, the only terms in \eqref{3qubit-state} that matter in the calculation of the Bell value $\BH$ are the diagonal elements $\rho_{(ijk),(ijk)}$ and the coherences $\rho_{(ijk),(\bar{i}\bar{j}\bar{k})}$. Any other term would provide no contribution to the Bell value $\BH$.

Let us denote by $\rho'$ the state with the same matrix elements $\rho_{(ijk),(ijk)}$ and $\rho_{(ijk),(\bar{i}\bar{j}\bar{k})}$ of $\rho$ in the GHZ basis and null elements otherwise. Recall that in the DI scenario Eve is in total control of the quantum channel and can distribute any arbitrary three-qubit state to Alice, Bob and Charlie.\medskip\\

\textbf{Reduction to block-diagonal state}\\
Here we show that we can assume, without loss of generality, that Eve distributes the state $\rho'$ in place of the generic state $\rho$. This is so because, by construction, the Bell value observed by the parties would not change if they are given $\rho'$ instead of $\rho$. Moreover, Eve's uncertainty about Alice's outcome when she measures $Z$ would not increase. More formally we show that:
\begin{equation}
    H(Z|E)_{\rho} \geq H(Z|E)_{\rho'} \label{increased-knowledge}.
\end{equation}

In order to show \eqref{increased-knowledge}, we first derive the quantum map $\mathcal{D}$ that brings any generic state $\rho$ to $\rho'$, i.e., that sets to zero every coherence of $\rho$ in the GHZ basis except for $\rho_{(ijk),(\bar{i}\bar{j}\bar{k})}$, while leaving the diagonal elements untouched. The map $\mathcal{D}$ can be better understood as a composition of two consecutive maps. The first map acts as follows on the generic state $\rho$:
\begin{align}
    \rho \mapsto \frac{1}{2}\rho + \frac{1}{2} \id ZZ \rho \id ZZ \label{map1},
\end{align}
so that any off-diagonal term in \eqref{3qubit-state} with $j+k \neq j'+ k'$ is set to zero, while the other terms are not affected. The second map is given by:
\begin{align}
    \rho \mapsto \frac{1}{2}\rho + \frac{1}{2}YYX \rho YYX \label{map2},
\end{align}
so that every coherence with $i+j \neq i'+ j'$ is set to zero\footnote{Note that $YYX \ket{\psi_{i,j,k}}=-(-1)^{i+j}\ket{\psi_{i,j,k}}$.}. The combined effect of  \eqref{map1} and \eqref{map2} is exactly the desired map $\mathcal{D}$. We can express the action of $\mathcal{D}$ in a more compact form as follows:
\begin{align}
    \mathcal{D}(\rho) &= \frac{1}{4}\left(\rho + \id ZZ \rho\id ZZ + YYX \rho YYX + YXY \rho YXY \right) \label{mapD} \\
    &= \sum_{i,j,k=0}^1 \rho_{(ijk),(ijk)} \ketbra{\psi_{i,j,k}}{\psi_{i,j,k}} + \rho_{(ijk),(\bar{i}\bar{j}\bar{k})} \ketbra{\psi_{i,j,k}}{\psi_{\bar{i},\bar{j},\bar{k}}} \label{Drho},
\end{align}
and we have that $\rho'=\mathcal{D}(\rho)$. We can now prove \eqref{increased-knowledge}.

To start with, we interpret the state $\mathcal{D}(\rho)$ in \eqref{mapD} as Eve preparing one of the four states ($\rho,\,\id ZZ \rho\id ZZ,\,YYX \rho YYX$ and $YXY \rho YXY$) in the mixture according to the value $t$ of a random variable $T$ known to her. We generically indicate each of the four states as $\rho^t$, for different values of $t$. Since we provide Eve with maximum power, we assume that she holds the purification $\ket{\phi_{ABCE}^t}$ of each $\rho^t$. Therefore, the global quantum state Eve produces reads:
\begin{equation}
    \rho_{ABCET} = \frac{1}{4} \sum_t \ketbra{\phi_{ABCE}^t}{\phi_{ABCE}^t}\otimes \ketbra{t}{t}_T, \label{rhoABCET}
\end{equation}
where the classical register $T$ storing the value $t$ is held by Eve. Finally, Eve also holds the purifying system $T'$ of \eqref{rhoABCET}, such that the global state reads:
\begin{align}
    \rho_{ABCETT'} = \frac{1}{2} \sum_t \ket{\phi_{ABCE}^t}\otimes \ket{t}_T \otimes \ket{t}_{T'}, \label{rhoABCETT'}
\end{align}
where the total information available to Eve is $E_{\mathrm{tot}}=ETT'$. Then, by the strong subadditivity property, we can upper bound the conditional entropy computed on $\rho'=\mathcal{D}(\rho)$ as follows:
\begin{equation}
    H(Z|E_{\mathrm{tot}})_{\mathcal{D}(\rho)} \leq H(Z|ET)_{\mathcal{D}(\rho)} \label{subadd},
\end{equation}
where the entropy on the rhs is computed on the state:
\begin{align}
    \rho_{ZET} &= \frac{1}{4}\sum_t \Tr_{BC}\left[(\mathcal{E}_Z \otimes \id_{BCE}) \ketbra{\phi_{ABCE}^t}{\phi_{ABCE}^t}\right] \otimes \ketbra{t}{t}_T \nonumber\\
    &=: \frac{1}{4} \sum_t \rho_{ZE}^t \otimes \ketbra{t}{t}_T, \label{rhoZET}
\end{align}
where $\mathcal{E}_Z$ is the quantum map describing Alice's $Z$ measurement and where we implicitly defined the state $\rho_{ZE}^t$.

Since Alice's $Z$ measurement is a projection on the computational basis states of subsystem $A$, we can recast $\rho_{ZE}^t$ as follows:
\begin{align}
    \rho_{ZE}^t = \sum_{z=0,1} \ketbra{z}{z}_Z \otimes \Tr_{BC}[\braket{z|\phi_{ABCE}^t} \braket{\phi_{ABCE}^t|z}]. \label{rhoZEt}
\end{align}
Let us now fix for concreteness the value of $t$ such that $\ket{\phi_{ABCE}^t}$ is the purification of the state $YYX\rho YYX$. Nevertheless, our conclusions hold for any other state of the mixture \eqref{mapD}. Then we have that:
\begin{equation}
    \ket{\phi_{ABCE}^t} = \sum_\lambda \sqrt{\lambda}\, YYX \ket{\lambda}_{ABC} \otimes \ket{e_\lambda}_E \label{purification}
\end{equation}
when the spectral decomposition of $\rho$ reads $\rho=\sum_\lambda \lambda \ketbra{\lambda}{\lambda}$ and with $\{\ket{e_\lambda}\}$ an orthonormal basis for $E$. By substituting \eqref{purification} into \eqref{rhoZEt} we obtain:
\begin{equation}
    \rho_{ZE}^t = \sum_{z=0,1} \ketbra{z}{z}_Z \otimes \sum_{\lambda,\sigma} \sqrt{\lambda \sigma} \Tr_{BC}[\bra{z}YYX\ketbra{\lambda}{\sigma} YYX\ket{z}] \otimes \ketbra{e_\lambda}{e_\sigma}_E.
\end{equation}
By using the cyclic property of the trace on $BC$ and the fact that $Y\ket{z}=\im(-1)^z \ket{\bar{z}}$ (we indicate the imaginary unit with $\im$), we can simplify the previous expression as follows:
\begin{align}
    \rho_{ZE}^t &= \sum_{z=0,1} \ketbra{z}{z}_Z \otimes \sum_{\lambda,\sigma} \sqrt{\lambda \sigma} \Tr_{BC}[\braket{\bar{z}|\lambda}\braket{\sigma|\bar{z}}] \otimes \ketbra{e_\lambda}{e_\sigma}_E \nonumber\\
    &= \sum_{z=0,1} \ketbra{\bar{z}}{\bar{z}}_Z \otimes \sum_{\lambda,\sigma} \sqrt{\lambda \sigma} \Tr_{BC}[\braket{z|\lambda} \braket{\sigma|z}] \otimes \ketbra{e_\lambda}{e_\sigma}_E \label{rhoZEt2}
\end{align}

Being $\rho_{ZET}$ classical on subsystem $T$ \eqref{rhoZET}, we can compute its conditional entropy as follows:
\begin{equation}
     H(Z|ET)_{\mathcal{D}(\rho)} = \frac{1}{4} \sum_t H(Z|E)_{\rho^t}, \label{HZET}
\end{equation}
i.e., as the average over $t$ of the entropies of \eqref{rhoZEt2}. However, from \eqref{rhoZEt2} we deduce that the states $\rho_{ZE}^t$ are all the same up to a relabeling of the classical register $Z$. Therefore, they lead to the same conditional entropy $H(Z|E)_{\rho^t}=H(Z|E)_{\rho}$ which is just the conditional entropy of the original state $\rho$. By employing this observation in \eqref{HZET} we can write:
\begin{equation}
    H(Z|ET)_{\mathcal{D}(\rho)} = H(Z|E)_{\rho}. \label{HZET2}
\end{equation}

Finally, by combining the last expression with \eqref{subadd}, we obtain \eqref{increased-knowledge}. We thus proved that it is not restrictive to assume that the parties are given a block-diagonal state of the form \eqref{Drho}. In order to continue with the proof, we relabel the non-null matrix elements of the distributed state in terms of real numbers:
\begin{align}
    \rho'= \sum_{i,j,k=0}^1 \lambda_{ijk} \ketbra{\psi_{i,j,k}}{\psi_{i,j,k}} + \sum_{j,k=0}^1 (r_{jk}+\im s_{jk}) \ketbra{\psi_{0,j,k}}{\psi_{1,\bar{j},\bar{k}}} + \mathrm{h.c.}\label{rhoprime}
\end{align}
where $\lambda_{ijk},r_{jk}$ and $s_{jk}$ are real numbers and $\mathrm{h.c.}$ indicates the Hermitian conjugate of the preceding addend.
\medskip\\

\textbf{Reduction to purely-real coherences}\\
Here we show that, without loss of generality, we can assume that $s_{jk}=0$ for every $j$ and $k$ in \eqref{rhoprime}. That is, the state shared by the parties only displays real off-diagonal elements.

We start by computing the Bell value \eqref{reduced-bellvalue} on the state $\rho'$. One obtains:
\begin{align}
    (\BH)_{\rho'} = &\sum_{i,j,k=0}^1 \lambda_{ijk} \left[(-1)^i\sin a_1 \cos b_{-}\cos c_{-} + (-1)^j\sin b_{-}+ (-1)^k\sin c_{-} - (-1)^{j+k}\sin b_{-}\sin c_{-}\right] \nonumber\\
    &+ 2 r_{jk} \cos a_1 \cos b_{-}\cos c_{-} \label{bellvalue-rhoprime}.
\end{align}
We observe that $(\BH)_{\rho'}$ is independent of the imaginary component of the coherences of $\rho'$, therefore it would read the same when computed for the complex conjugate of $\rho'$ with respect to the GHZ basis, namely
\begin{equation}
    (\rho')^* = \sum_{i,j,k=0}^1 \lambda_{ijk} \ketbra{\psi_{i,j,k}}{\psi_{i,j,k}} + \sum_{j,k=0}^1 (r_{jk}-\im s_{jk}) \ketbra{\psi_{0,j,k}}{\psi_{1,\bar{j},\bar{k}}} + \mathrm{h.c.}\label{rhoprime-star}
\end{equation}
Moreover, note that the states $\rho'_{ZE}$ and $(\rho'_{ZE})^*$ --obtained from \eqref{rhoprime} and \eqref{rhoprime-star} after purification, Alice's $Z$ measurement and partial trace over $BC$-- are still the complex conjugate of each other with respect to the orthonormal basis of $E$ used for the purification. Given that $\rho'_{ZE}$ is Hermitian, this implies that $(\rho'_{ZE})^*$ is also the transposed of $\rho' _{ZE}$ with respect to the same basis. Since a matrix and its transpose have the same eigenvalues, their von Neumann entropies must coincide:
\begin{equation}
    H(Z|E)_{\rho'} = H(Z|E)_{(\rho')^*} \label{same-entropy}.
\end{equation}

We conclude that $\rho'$ and $(\rho')^*$ lead to the same Bell value $(\BH)_{\rho'}$ and provide Eve with the same amount of information about Alice's $Z$ outcome. This means that Eve has no preference in preparing $\rho'$ rather than $(\rho')^*$. As a matter of fact, we can assume without loss of generality that Eve prepares a balanced mixture of the two states:
\begin{equation}
    \bar{\rho} := \frac{\rho' + (\rho')^*}{2} \label{rhobar}.
\end{equation}
Indeed, the Bell value $(\BH)_{\bar{\rho}}$ would be unchanged ($(\BH)_{\bar{\rho}}=(\BH)_{\rho'}$) and Eve's uncertainty would not increase:
\begin{equation}
    H(Z|E)_{\rho'} \geq H(Z|E_{\mathrm{tot}})_{\bar{\rho}}  \label{not-more-uncertainty}.
\end{equation}
In order to verify the above inequality, we interpret $\bar{\rho}$ as Eve preparing the purifications $\ket{\varphi_{ABCE}}$ and $\ket{\varphi^*_{ABCE}}$ of $\rho'$ and $(\rho')^*$, respectively, according to the value of a classical random variable $T$ known to her:
\begin{equation}
    \frac{1}{2}\ketbra{\varphi_{ABCE}}{\varphi_{ABCE}}\otimes\ketbra{0}{0}_T + \frac{1}{2}\ketbra{\varphi^*_{ABCE}}{\varphi^*_{ABCE}}\otimes\ketbra{1}{1}_T.
\end{equation}
Moreover, we provide Eve with the purifying system $T'$ of the above quantum state. Thus, similarly to \eqref{rhoABCETT'}, the global quantum state prepared by Eve reads:
\begin{align}
    \frac{1}{\sqrt{2}}\ket{\varphi_{ABCE}}\otimes \ket{0}_T \otimes \ket{0}_{T'} + \frac{1}{\sqrt{2}}\ket{\varphi^*_{ABCE}}\otimes \ket{1}_T \otimes \ket{1}_{T'}
\end{align}
and she holds systems $E_{\mathrm{tot}}=ETT'$.
By the strong subadditivity property and the fact that the states are classical on $T$, we can upper bound the rhs of \eqref{not-more-uncertainty} by:
\begin{align}
    H(Z|E_{\mathrm{tot}})_{\bar{\rho}} \leq H(Z|ET)_{\bar{\rho}} &= \frac{1}{2}H(Z|E)_{\rho'} + \frac{1}{2}H(Z|E)_{(\rho')^*} \nonumber\\
    &= H(Z|E)_{\rho'},
\end{align}
where we used \eqref{same-entropy} in the last equality. This proves \eqref{not-more-uncertainty}. Hence, without loss of generality the three-qubit state shared by Alice, Bob and Charlie is given by \eqref{rhobar}, which is exactly the state given in \eqref{rho-GHZdiag}.
\end{proof}

\subsection{Derivation of the bound}

Having simplified the inequality (Lemma~\ref{lmm:simpl-bellvalue}) and the form of a generic three-qubit state (Lemma~\ref{lmm:simpl-state}) shared by Alice, Bob and Charlie, we are now ready to prove Theorem~\ref{thm:Hbound}.

The bound derivation is based on a recent technique presented in \cite{AsymCHSH-Woodhead}. The main idea is to lower bound the conditional entropy in terms of a certain expectation value appearing in the Holz inequality, via the uncertainty relation for von Neumann entropies \cite{uncertrel2010}. The proof is then completed by relating the chosen expectation value to the whole Bell value of the Holz inequality.

\begin{proof}[Proof of Theorem~1]
In Subsec.~\ref{subsec:simplification} we show that it is not restrictive to assume that the state shared by Alice, Bob and Charlie is a mixture of three-qubit states. We now focus on a single element of the mixture, $\rho_{ABC}$, and on its extension $\rho_{ABCE}$ that accounts for Eve's quantum side information.

Since we fixed $A_0=Z$ as Alice's key generation measurement, we are interested in finding a lower bound on the conditional von Neumann entropy of Alice's key generation outcome given Eve's quantum side information, $H(Z|E)$.  The uncertainty relation in the presence of quantum memories \cite{uncertrel2010} states that:
\begin{equation}
    H(Z|E) \geq 1 - H(X|BC) \label{uncert-rel},
\end{equation}
where $H(X|BC)$ is the entropy of Alice's outcome if she measures $X$ on her qubit, given the quantum side information of Bob and Charlie. By the fact that quantum operations can only increase the conditional entropy when applied to the conditioning system (see e.g. Theorem 11.15 in \cite{nielsen_chuang_2010}), we have that:
\begin{align}
    H(X|BC) \leq H(X|X_B X_C) \leq H(X|X_{BC})  ,\label{dataprocess}
\end{align}
where $X_B$ ($X_C$) represents Bob's (Charlie's) outcome upon measuring in the $X$ basis and $X_{BC}$ is a classical random variable defined as the multiplication of $X_B$ and $X_C$, $X_{BC}=X_B  X_C$. Then, thanks to Fano's inequality, the Shannon entropy on the rhs of \eqref{dataprocess} can be bounded by the binary entropy of the probability that $X$ differs from $X_{BC}$, namely:
\begin{equation}
    H(X|X_{BC}) \leq  h(Q_X), \label{binary-entropy}
\end{equation}
with $Q_X=\Pr[XX_B X_C =-1]=(1-\braket{XXX})/2$ (where $XXX$ is intended as the product of the $X$ outcomes of Alice, Bob and Charlie). By combining \eqref{dataprocess} and \eqref{binary-entropy}, we obtain the following upper bound on $H(X|BC)$:
\begin{align}
    H(X|BC) &\leq h\left(\frac{1-\braket{XXX}}{2}\right) \nonumber\\
    &=h\left(\frac{1-\abs{\braket{XXX}}}{2}\right)  \nonumber\\
    &=h\left(\frac{1+\abs{\braket{XXX}}}{2}\right) \label{H(X|BC)-upp},
\end{align}
where we used the fact that $h(1/2-p/2)$ is symmetric in $p$ in the first equality and that $h(p)=h(1-p)$ in the second equality.

By combining \eqref{uncert-rel} and \eqref{H(X|BC)-upp} we derive the following lower bound on $H(Z|E)$:
\begin{equation}
    H(Z|E) \geq 1 -  h\left(\frac{1+\abs{\braket{XXX}}}{2}\right) \label{uncert-rel2}.
\end{equation}

The rest of the proof focuses on proving the following inequality between the expectation value $\braket{XXX}$ and the Bell value $\beta_{\mathrm{H}}$ of the Holz inequality:
\begin{equation}
    \abs{\braket{XXX}} \geq \frac{\beta_{\mathrm{H}}}{2} - \frac{1}{2} + \frac{1}{2}\sqrt{\beta_{\mathrm{H}}^2 + 2\beta_{\mathrm{H}} -3} \label{to-prove}.
\end{equation}
Indeed, by employing \eqref{to-prove} in \eqref{uncert-rel2} we obtain the desired lower bound on the conditional entropy \eqref{Hbound-proof}.

We implicitly assume throughout the proof that $\beta_{\mathrm{H}} > 1$, otherwise without Bell violation the conditional entropy is trivially bounded by zero. Moreover we assume that every inequality is to be proven for every value of its parameters, unless otherwise stated.

To start with, we recast the inequality to be proven \eqref{to-prove} as follows:
\begin{equation}
    2\abs{\braket{XXX}}+ 1 -\beta_{\mathrm{H}} \geq \sqrt{\beta_{\mathrm{H}}^2 + 2\beta_{\mathrm{H}} -3} ,
\end{equation}
which is true if and only if the following system of inequalities is true:
\begin{subnumcases}{}
  (2\abs{\braket{XXX}}+ 1 -\beta_{\mathrm{H}})^2 \geq \beta_{\mathrm{H}}^2 + 2\beta_{\mathrm{H}} -3 \label{I1} \\
  2\abs{\braket{XXX}}+ 1 -\beta_{\mathrm{H}} \geq 0 \label{I2} 
\end{subnumcases}
First we focus on proving \eqref{I2}. A sufficient condition for proving \eqref{I2} is given by:
\begin{equation}
    \abs{\braket{XXX}}+ \braket{XXX}\sin a_1 \cos b_{-} \cos c_{-} +  1 -\beta_{\mathrm{H}} \geq 0,
\end{equation}
which reads as follows after employing \eqref{reduced-bellvalue}:
\begin{align}
    1+\abs{\braket{XXX}} \geq \cos a_1 \cos b_{-} \cos c_{-} \braket{ZXX} + \sin b_{-} \braket{ZZ\id} + \sin c_{-} \braket{Z\id Z} -\sin b_{-}\sin c_{-}\braket{\id ZZ}. \label{eq1}
\end{align}
The inequality in \eqref{eq1} is implied by another inequality, namely:
\begin{align}
    1 \geq \abs{\cos b_{-} \cos c_{-} \braket{ZXX}} + \sin b_{-} \braket{ZZ\id} + \sin c_{-} \braket{Z\id Z} -\sin b_{-}\sin c_{-}\braket{\id ZZ}. \label{eq1.1}
\end{align}
Thus, it is sufficient that we prove the following inequality:
\begin{align}
    C:=\cos b_{-} \cos c_{-} \braket{ZXX} + \sin b_{-} \braket{ZZ\id} + \sin c_{-} \braket{Z\id Z} -\sin b_{-}\sin c_{-}\braket{\id ZZ} \leq 1 \label{C}
\end{align}
for \textit{every} value of $b_{-}$ and $c_{-}$ in order to show that \eqref{eq1.1}, and hence \eqref{eq1}, holds. Indeed, the first term in \eqref{C} can always be made equal to $\abs{\cos b_{-} \cos c_{-} \braket{ZXX}}$ by replacing $b_{-}$ with $\pi-b_{-}$ if the term is negative. Note that this replacement does not affect the other terms.

To show that \eqref{C} holds, we start by exploiting the inverse relations \eqref{rjk}, \eqref{lambda0jk} and \eqref{lambda1jbarkbar}, we compute the expectation values in \eqref{C} in terms of the parameters $\{\rho_{ijk},t_{jk}\}$ describing the shared state $\rho$. We obtain:
\begin{align}
    \braket{ZXX} &= \sum_{j,k=0}^1 (\rho_{0jk} - \rho_{1jk}) \sin(2t_{jk})  \label{ZXX}\\
    \braket{ZZ \id} &= \sum_{j,k=0}^1 (-1)^j (\rho_{0jk} - \rho_{1jk})\cos(2t_{jk}) \label{ZZI}\\
    \braket{Z\id Z} &= \sum_{j,k=0}^1 (-1)^k (\rho_{0jk} - \rho_{1jk})\cos(2t_{jk}) \label{ZIZ}\\
    \braket{\id ZZ} &= \sum_{j,k=0}^1 (-1)^{j+k} (\rho_{0jk} + \rho_{1jk}). \label{IZZ}
\end{align}
By inserting the expectation values in the lhs of \eqref{C} we get:
\begin{align}
    C = &\sum_{j,k=0}^1 (\rho_{0jk} - \rho_{1jk}) \left[\sin(2t_{jk}) \cos b_{-} \cos c_{-} +(-1)^j\cos(2t_{jk})\sin b_{-}+(-1)^k\cos(2t_{jk})\sin c_{-} \right] \nonumber\\
    &- \sum_{j,k=0}^1 (-1)^{j+k} (\rho_{0jk} + \rho_{1jk})\sin b_{-}\sin c_{-}.
\end{align}
We upper bound $C$ by maximizing it over $t_{jk}$. In doing so we use the fact that $\rho_{0jk}\geq \rho_{1jk}$ and that $A\cos \theta + B \sin \theta \leq \sqrt{A^2 + B^2}$. We thus obtain:
\begin{align}
    C &\leq  \sum_{j,k=0}^1 (\rho_{0jk} - \rho_{1jk}) \sqrt{\cos^2 b_{-}\cos^2 c_{-} + \sin^2 b_{-} + \sin^2 c_{-} + 2(-1)^{j+k} \sin b_{-}\sin c_{-}} \nonumber\\
    &\quad-\sum_{j,k=0}^1 (-1)^{j+k} (\rho_{0jk} + \rho_{1jk})\sin b_{-}\sin c_{-} \nonumber\\
    &=\sum_{j,k=0}^1 (\rho_{0jk} - \rho_{1jk}) \sqrt{\cos^2 c_{-} + \sin^2 b_{-}(1-\cos^2 c_{-}) + \sin^2 c_{-} + 2(-1)^{j+k} \sin b_{-}\sin c_{-}} \nonumber\\
    &\quad-\sum_{j,k=0}^1 (-1)^{j+k} (\rho_{0jk} + \rho_{1jk})\sin b_{-}\sin c_{-} \nonumber\\
    &=\sum_{j,k=0}^1 (\rho_{0jk} - \rho_{1jk}) \sqrt{1 + \sin^2 b_{-}\sin^2 c_{-} + 2(-1)^{j+k} \sin b_{-}\sin c_{-}} -\sum_{j,k=0}^1 (-1)^{j+k} (\rho_{0jk} + \rho_{1jk})\sin b_{-}\sin c_{-} \nonumber\\
    &=\sum_{j,k=0}^1 (\rho_{0jk} - \rho_{1jk}) \left[1 + (-1)^{j+k} \sin b_{-}\sin c_{-}\right] -\sum_{j,k=0}^1 (-1)^{j+k} (\rho_{0jk} + \rho_{1jk})\sin b_{-}\sin c_{-} \nonumber\\
    &= \sum_{j,k=0}^1 (\rho_{0jk} - \rho_{1jk}) + \sum_{j,k=0}^1 (-1)^{j+k}\sin b_{-}\sin c_{-}(\rho_{0jk} - \rho_{1jk} -\rho_{0jk} - \rho_{1jk}) \nonumber\\
    &\leq\sum_{j,k=0}^1 (\rho_{0jk} - \rho_{1jk}) + 2 \abs{\sum_{j,k=0}^1 (-1)^{j+k} \rho_{1jk}} \nonumber\\
    &\leq \sum_{j,k=0}^1 (\rho_{0jk} - \rho_{1jk}) + 2 \sum_{j,k=0}^1 \rho_{1jk}\nonumber\\
    &= \sum_{i,j,k=0}^1 \rho_{ijk} =1,
\end{align}
where we maximized over $b_{-}$ and $c_{-}$ in the second inequality, used the fact that $\rho_{ijk} \geq 0$ in the third inequality and that the eigenvalues $\rho_{ijk}$ of $\rho$ sum to one in the last equality. We proved \eqref{C} and thus proved \eqref{I2}.

We now focus on proving \eqref{I1}. By computing the square in the lhs of \eqref{I1}, we can recast the inequality as follows:
\begin{align}
    \left(1+ \abs{\braket{XXX}}\right) \beta_{\mathrm{H}} \leq 1 + \abs{\braket{XXX}} + \braket{XXX}^2.
\end{align}
We now insert \eqref{reduced-bellvalue} into the above expression and obtain:
\begin{align}
    &\left(1+ \abs{\braket{XXX}}\right)\nonumber\\
    &\left(\cos a_1 \cos b_{-} \cos c_{-} \braket{ZXX} + \sin a_1 \cos b_{-} \cos c_{-}\braket{XXX} + \sin b_{-} \braket{ZZ\id} + \sin c_{-} \braket{Z\id Z} -\sin b_{-}\sin c_{-}\braket{\id ZZ}\right) \nonumber\\
    &\leq 1 + \abs{\braket{XXX}} + \braket{XXX}^2 \label{eq4}
\end{align} 
Now consider that the above inequality must be proven true for every value of $a_1$.  In particular it must hold true for $a_1$ and $2\pi-a_1$, which is equivalent to having an arbitrary sign for the second term in the second bracket. Then, we can equivalently express the fact that \eqref{eq4} must hold for every $a_1$ as the requirement that the following inequality holds for every $a_1$:
\begin{align}
    &\left(1+ \abs{\braket{XXX}}\right)\nonumber\\
    &\left(\cos a_1 \cos b_{-} \cos c_{-} \braket{ZXX} + \sin a_1 \cos b_{-} \cos c_{-}\abs{\braket{XXX}} + \sin b_{-} \braket{ZZ\id} + \sin c_{-} \braket{Z\id Z} -\sin b_{-}\sin c_{-}\braket{\id ZZ}\right) \nonumber\\
    &\leq 1 + \abs{\braket{XXX}} + \braket{XXX}^2 \label{eq5},
\end{align} 
where we replaced $\braket{XXX}$ with $\abs{\braket{XXX}}$ in the second term of the second bracket. By rearranging the terms in \eqref{eq5} we obtain:
\begin{align}
    &\abs{\braket{XXX}}^2 (1- \sin a_1 \cos b_{-} \cos c_{-}) + \abs{\braket{XXX}}  \nonumber\\
    &(1- \sin a_1 \cos b_{-} \cos c_{-}-\cos a_1 \cos b_{-} \cos c_{-} \braket{ZXX} - \sin b_{-} \braket{ZZ\id} - \sin c_{-} \braket{Z\id Z} +\sin b_{-}\sin c_{-}\braket{\id ZZ}) \nonumber\\
    &+ 1 -\cos a_1 \cos b_{-} \cos c_{-} \braket{ZXX} - \sin b_{-} \braket{ZZ\id} - \sin c_{-} \braket{Z\id Z} +\sin b_{-}\sin c_{-}\braket{\id ZZ} \geq 0. \label{eq6}
\end{align}
A sufficient condition for \eqref{eq6} to be true is when the second degree equation, obtained by replacing $\abs{\braket{XXX}}$ with a generic variable $x$ and taking the equals sign in \eqref{eq6}, has no solution or only one solution in $\mathbbm{R}$. Indeed, in that case the parabola defined by the lhs of \eqref{eq6} never intersects the $x$ axis and always sits above zero (except at most in one point), thus proving the inequality\footnote{We can conclude this since the parabola described by the lhs of \eqref{eq6} is concave upward, which we deduce from the positivity of the coefficient of $\braket{XXX}^2$. In the special case where $\sin a_1 \cos b_{-} \cos c_{-}=1$, the inequality \eqref{eq6} is trivially satisfied.}. Therefore, a sufficient condition for \eqref{eq6} to be true is having the discriminant of the second degree equation smaller or equal to zero:
\begin{align}
    &(1- \sin a_1 \cos b_{-} \cos c_{-}-\cos a_1 \cos b_{-} \cos c_{-} \braket{ZXX} - \sin b_{-} \braket{ZZ\id} - \sin c_{-} \braket{Z\id Z} +\sin b_{-}\sin c_{-}\braket{\id ZZ})^2 \nonumber\\
    &-4(1- \sin a_1 \cos b_{-} \cos c_{-})( 1 -\cos a_1 \cos b_{-} \cos c_{-} \braket{ZXX} - \sin b_{-} \braket{ZZ\id} - \sin c_{-} \braket{Z\id Z} +\sin b_{-}\sin c_{-}\braket{\id ZZ}) \leq 0 ,
\end{align}
which can be rewritten as
\begin{align}
    &(1- \sin a_1 \cos b_{-} \cos c_{-}+\cos a_1 \cos b_{-} \cos c_{-} \braket{ZXX} + \sin b_{-} \braket{ZZ\id} + \sin c_{-} \braket{Z\id Z} -\sin b_{-}\sin c_{-}\braket{\id ZZ})^2 \nonumber\\
    &\leq 4(1- \sin a_1 \cos b_{-} \cos c_{-}) ,
\end{align}
and hence as
\begin{align}
    &\abs{1- \sin a_1 \cos b_{-} \cos c_{-}+\cos a_1 \cos b_{-} \cos c_{-} \braket{ZXX} + \sin b_{-} \braket{ZZ\id} + \sin c_{-} \braket{Z\id Z} -\sin b_{-}\sin c_{-}\braket{\id ZZ}} \nonumber\\
    &\leq 2\sqrt{1- \sin a_1 \cos b_{-} \cos c_{-}} \label{eq7}.
\end{align}
We now remove the absolute value by splitting the previous inequality into an equivalent system of inequalities:
\begin{subnumcases}{}
  1- \sin a_1 \cos b_{-} \cos c_{-}+\cos a_1 \cos b_{-} \cos c_{-} \braket{ZXX} + \sin b_{-} \braket{ZZ\id} + \sin c_{-} \braket{Z\id Z} -\sin b_{-}\sin c_{-}\braket{\id ZZ} \nonumber \\
  \leq 2\sqrt{1- \sin a_1 \cos b_{-} \cos c_{-}} \\
  -(1- \sin a_1 \cos b_{-} \cos c_{-}+\cos a_1 \cos b_{-} \cos c_{-} \braket{ZXX} + \sin b_{-} \braket{ZZ\id} + \sin c_{-} \braket{Z\id Z} -\sin b_{-}\sin c_{-}\braket{\id ZZ}) \nonumber\\
  \leq 2\sqrt{1- \sin a_1 \cos b_{-} \cos c_{-}}
\end{subnumcases}
which can be rearranged as follows:
\begin{subnumcases}{}
  \cos a_1 \cos b_{-} \cos c_{-} \braket{ZXX} + \sin b_{-} \braket{ZZ\id} + \sin c_{-} \braket{Z\id Z} -\sin b_{-}\sin c_{-}\braket{\id ZZ} \nonumber \\
  \leq \sqrt{1- \sin a_1 \cos b_{-} \cos c_{-}}(2-\sqrt{1- \sin a_1 \cos b_{-} \cos c_{-}}) \label{I3}\\
  -\cos a_1 \cos b_{-} \cos c_{-}\braket{ZXX} - \sin b_{-} \braket{ZZ\id} - \sin c_{-} \braket{Z\id Z} +\sin b_{-}\sin c_{-}\braket{\id ZZ} \nonumber \\
  \leq \sqrt{1- \sin a_1 \cos b_{-} \cos c_{-}}(2+\sqrt{1- \sin a_1 \cos b_{-} \cos c_{-}}) \label{I4}.
\end{subnumcases}
We emphasize that once we prove \eqref{I3} and \eqref{I4} we are done, since this is a sufficient condition for the validity of \eqref{I1}.

We first observe that \eqref{I3}, together with a simple inequality to be proved, implies \eqref{I4}. In order to see this, let us label the lhs and rhs of \eqref{I3} as $l$ and $r$, respectively, so that \eqref{I3} can be written as $l \leq r$. Now notice that since \eqref{I4} must be proved for every angle $a_1$, $b_{-}$ and $c_{-}$, we can obtain an equivalent inequality by replacing $a_1 \rightarrow \pi-a_1$, $b_{-} \rightarrow -b_{-}$ and $c_{-} \rightarrow -c_{-}$ and requiring that the new inequality is satisfied for every $a_1$, $b_{-}$ and $c_{-}$. The resulting inequality reads:
\begin{align}
    \cos a_1 \cos b_{-} \cos c_{-}\braket{ZXX} + \sin b_{-} \braket{ZZ\id} + \sin c_{-} \braket{Z\id Z} +\sin b_{-}\sin c_{-}\braket{\id ZZ} \nonumber \\
  \leq \sqrt{1- \sin a_1 \cos b_{-} \cos c_{-}}(2+\sqrt{1- \sin a_1 \cos b_{-} \cos c_{-}}) \label{I4new},
\end{align}
and can be recast in terms of $l$ and $r$ as follows:
\begin{align}
    l +2\sin b_{-}\sin c_{-}\braket{\id ZZ} \leq r + 2(1- \sin a_1 \cos b_{-} \cos c_{-}) \label{I4new2}.
\end{align}
Now assuming that \eqref{I3} holds, \eqref{I4new} --and hence \eqref{I4}-- follows upon proving that the following inequality is true:
\begin{equation}
    \sin b_{-}\sin c_{-}\braket{\id ZZ} \leq 1- \sin a_1 \cos b_{-} \cos c_{-}. \label{eq8}
\end{equation}
The proof of \eqref{eq8} is easily obtained from the following sufficient condition for its validity:
\begin{equation}
    \abs{\sin b_{-}\sin c_{-}} + \abs{\cos b_{-} \cos c_{-}} \leq 1, \label{cos(a-b)}
\end{equation}
which is trivially true for $b_{-},c_{-}\in[0,\pi/2]$ since
\begin{equation}
    1 \geq \cos(b_{-}-c_{-}) = \sin b_{-}\sin c_{-} + \cos b_{-} \cos c_{-} = \abs{\sin b_{-}\sin c_{-}} + \abs{\cos b_{-} \cos c_{-}}.
\end{equation}
Note that for angles outside the interval $[0,\pi/2]$ similar arguments can be made.

We are thus left to prove that \eqref{I3} holds. In order to do so, we again express the expectation values in \eqref{I3} in terms of the parameters describing the shared state $\rho$:
\begin{align}
    &\sum_{j,k=0}^1 (\rho_{0jk} - \rho_{1jk}) \left[\cos a_1 \cos b_{-} \cos c_{-}\sin(2t_{jk})  +(-1)^j\sin b_{-}\cos(2t_{jk})+(-1)^k\sin c_{-}\cos(2t_{jk}) \right] \nonumber\\
    &- \sum_{j,k=0}^1 (-1)^{j+k} (\rho_{0jk} + \rho_{1jk})\sin b_{-}\sin c_{-} \leq \sqrt{1- \sin a_1 \cos b_{-} \cos c_{-}}(2-\sqrt{1- \sin a_1 \cos b_{-} \cos c_{-}}) \label{I3-1}
\end{align}
We find a sufficient condition for \eqref{I3-1} by maximizing the lhs over $t_{jk}$. In doing so, we use the fact that $\rho_{0jk}\geq \rho_{1jk}$ and that $A\cos \theta + B \sin \theta \leq \sqrt{A^2 + B^2}$. We obtain:
\begin{align}
    &\sum_{j,k=0}^1 (\rho_{0jk} - \rho_{1jk})\sqrt{\cos^2 a_1 \cos^2 b_{-} \cos^2 c_{-} + \sin^2 b_{-} + \sin^2 c_{-} + 2(-1)^{j+k}\sin b_{-}\sin c_{-}} \nonumber\\
    &- \sum_{j,k=0}^1 (-1)^{j+k} (\rho_{0jk} + \rho_{1jk})\sin b_{-}\sin c_{-} \leq \sqrt{1- \sin a_1 \cos b_{-} \cos c_{-}}(2-\sqrt{1- \sin a_1 \cos b_{-} \cos c_{-}}) \label{I3-2}.
\end{align}
In turn, a sufficient condition for \eqref{I3-2} is given by: 
\begin{align}
    &\sum_{j,k=0}^1 \tau_{jk}\left[\sqrt{\cos^2 a_1 \cos^2 b_{-} \cos^2 c_{-} + \sin^2 b_{-} + \sin^2 c_{-} + 2(-1)^{j+k}\sin b_{-}\sin c_{-}}-  (-1)^{j+k}\sin b_{-}\sin c_{-}\right] \nonumber\\
    &\leq \sqrt{1- \sin a_1 \cos b_{-} \cos c_{-}}(2-\sqrt{1- \sin a_1 \cos b_{-} \cos c_{-}}) \label{I3-3},
\end{align}
where we defined $\tau_{jk}:=\rho_{0jk} + \rho_{1jk}$. We recast \eqref{I3-3} in the following chain of equivalent inequalities:
\begin{align}
    &\quad\sum_{j,k=0}^1 \tau_{jk}\left[\sqrt{(1-\sin^2 b_{-}) \cos^2 c_{-} + \sin^2 b_{-} + \sin^2 c_{-} + 2(-1)^{j+k}\sin b_{-}\sin c_{-}-\sin^2 a_1 \cos^2 b_{-} \cos^2 c_{-}} \right. \nonumber\\
    &\quad\left.-(-1)^{j+k}\sin b_{-}\sin c_{-}\right] \leq \sqrt{1- \sin a_1 \cos b_{-} \cos c_{-}}(2-\sqrt{1- \sin a_1 \cos b_{-} \cos c_{-}}) \nonumber\\
    &\Leftrightarrow\,\,\sum_{j,k=0}^1 \tau_{jk}\left[\sqrt{1 + \sin^2 b_{-}\sin^2 c_{-} + 2(-1)^{j+k}\sin b_{-}\sin c_{-}-\sin^2 a_1 \cos^2 b_{-} \cos^2 c_{-}}-(-1)^{j+k}\sin b_{-}\sin c_{-}\right] \nonumber\\
    &\quad\leq \sqrt{1- \sin a_1 \cos b_{-} \cos c_{-}}(2-\sqrt{1- \sin a_1 \cos b_{-} \cos c_{-}}) \nonumber\\
    &\Leftrightarrow\,\,\sum_{j,k=0}^1 \tau_{jk}\left[\sqrt{(1 + (-1)^{j+k}\sin b_{-}\sin c_{-})^2-\sin^2 a_1 \cos^2 b_{-} \cos^2 c_{-}} -(-1)^{j+k}\sin b_{-}\sin c_{-}\right] \nonumber\\
    &\quad\leq \sqrt{1- \sin a_1 \cos b_{-} \cos c_{-}}(2-\sqrt{1- \sin a_1 \cos b_{-} \cos c_{-}}) \nonumber\\
    &\Leftrightarrow\,\,(\tau_{00}+\tau_{11})A+(\tau_{01}+\tau_{10})B\leq \sqrt{1- \sin a_1 \cos b_{-} \cos c_{-}}(2-\sqrt{1- \sin a_1 \cos b_{-} \cos c_{-}}) \label{I3-4}
\end{align}
where in the last inequality we defined:
\begin{align}
    A:= \sqrt{(1 + \sin b_{-}\sin c_{-})^2-\sin^2 a_1 \cos^2 b_{-} \cos^2 c_{-}}-\sin b_{-}\sin c_{-} \label{A}\\
    B:=  \sqrt{(1 - \sin b_{-}\sin c_{-})^2-\sin^2 a_1 \cos^2 b_{-} \cos^2 c_{-}}+\sin b_{-}\sin c_{-} \label{B}.
\end{align}
Now, a sufficient condition for \eqref{I3-4} is obtained by replacing $A$ and $B$ by $\max\{A,B\}$. However, since $A$ and $B$ can be mapped to each other under $b_{-} \leftrightarrow -b_{-}$ and since \eqref{I3-4} must hold for every $b_{-}$, we can always assume that $A \geq B$. Thus we replace $B$ with $A$ in \eqref{I3-4} and use the fact that $\sum_{j,k}\tau_{jk}=1$ to obtain the following sufficient condition for \eqref{I3}:
\begin{align}
    \sqrt{(1 + \sin b_{-}\sin c_{-})^2-\sin^2 a_1 \cos^2 b_{-} \cos^2 c_{-}}-\sin b_{-}\sin c_{-} \leq \sqrt{1- \sin a_1 \cos b_{-} \cos c_{-}}(2-\sqrt{1- \sin a_1 \cos b_{-} \cos c_{-}}) \label{I3-5},
\end{align}
which is equivalent to:
\begin{align}
    \sqrt{(1 + \sin b_{-}\sin c_{-})^2-(\sin a_1 \cos b_{-} \cos c_{-})^2} \leq \sin b_{-}\sin c_{-} + 2\sqrt{1- \sin a_1 \cos b_{-} \cos c_{-}}-(1- \sin a_1 \cos b_{-} \cos c_{-}) \label{I3-6}.
\end{align}
By taking the square of both sides in the last inequality, we obtain the equivalent  system of inequalities:
\begin{subnumcases}{}
  (1 + \sin b_{-}\sin c_{-})^2-(\sin a_1 \cos b_{-} \cos c_{-})^2 \nonumber \\
  \leq \left[\sin b_{-}\sin c_{-} + 2\sqrt{1- \sin a_1 \cos b_{-} \cos c_{-}}-(1- \sin a_1 \cos b_{-} \cos c_{-})\right]^2 \label{I5}\\
  \sin b_{-}\sin c_{-} + 2\sqrt{1- \sin a_1 \cos b_{-} \cos c_{-}}-(1- \sin a_1 \cos b_{-} \cos c_{-}) \geq 0 \label{I6}.
\end{subnumcases}

We first focus on proving \eqref{I6}. If $\sin a_1 \cos b_{-} \cos c_{-}=1$, then the inequality \eqref{I6} is trivially true. If $\sin a_1 \cos b_{-} \cos c_{-}\neq 1$, we can recast \eqref{I6} as follows:
\begin{equation}
    2 \geq \sqrt{1- \sin a_1 \cos b_{-} \cos c_{-}} - \frac{\sin b_{-}\sin c_{-}}{\sqrt{1- \sin a_1 \cos b_{-} \cos c_{-}}} \label{I6-1},
\end{equation}
and deduce the following sufficient condition:
\begin{equation}
    \frac{\abs{\sin b_{-}\sin c_{-}}}{\sqrt{1- \sin a_1 \cos b_{-} \cos c_{-}}} \leq 1 \quad \Leftrightarrow \quad \abs{\sin b_{-}\sin c_{-}} \leq \sqrt{1- \sin a_1 \cos b_{-} \cos c_{-}}.
\end{equation}
The validity of the last inequality can be easily proved from \eqref{cos(a-b)} and from the fact that $1- \sin a_1 \cos b_{-} \cos c_{-}\leq \sqrt{1- \sin a_1 \cos b_{-} \cos c_{-}}$. This completes the proof of \eqref{I6}.

We now focus on proving \eqref{I5}. For ease of notation, we define the variables $s$ and $c$ as follows:
\begin{align}
    s:= \sin b_{-}\sin c_{-} \label{s}\\
    c:= \sin a_1 \cos b_{-} \cos c_{-} \label{c}.
\end{align}
Then \eqref{I5} can be recast as follows:
\begin{align}
    &(1+s)^2-c^2 \leq \left[s+ 2\sqrt{1-c} - (1-c) \right]^2 \nonumber\\
    &\Leftrightarrow\,\, 1 + s^2 +2s -c^2 \leq 4(1-c) + (s+c -1)^2 + 4\sqrt{1-c}(s+c-1) \nonumber\\
    &\Leftrightarrow\,\, 2s -c^2 \leq 4(1-c) -2s + c^2 -2c (1-s) + 4\sqrt{1-c}(s+c-1) \nonumber\\
    &\Leftrightarrow\,\, 2c^2 + 4-4c -4s  -2c (1-s) + 4\sqrt{1-c}(s+c-1) \geq 0 \nonumber\\
    &\Leftrightarrow\,\, 1 + (1-c)^2 -2s  -c +cs + 2\sqrt{1-c}(s+c-1) \geq 0 \nonumber\\
    &\Leftrightarrow\,\, (1-c)^2 + (1 -c) -s(1-c) + 2s\sqrt{1-c} - 2\sqrt{1-c}(1-c) -s \geq 0 \nonumber\\
    &\Leftrightarrow\,\, (1-c)^2 - 2\sqrt{1-c}(1-c) + (1 -c)(1-s) + 2s\sqrt{1-c}-s \geq 0 \label{eq9}.
\end{align}
We now view \eqref{eq9} as a fourth degree inequality in the variable $x:=\sqrt{1-c}$, i.e. we rewrite it as follows:
\begin{equation}
    f(x):= x^4 -2x^3 + (1-s) x^2+2sx -s \geq 0. \label{f(x)}
\end{equation}
Then a sufficient condition for the validity of \eqref{eq9} is that $f(x)\geq 0$ for every $x\in[0,\sqrt{2}]$, which is the domain induced by the definition of $x$. Nevertheless, if there are intervals in the domain where $f(x) <0$, inequality \eqref{eq9} can still hold true as far as such intervals are not compatible with the underlying definitions of $s$ and $c$ given in \eqref{s} and \eqref{c}. We will see that this is indeed the case.

In order to study the plot of $f(x)$, we first find its zeroes\footnote{We used Mathematica's ``Reduce'' function to easily find the zeroes of $f(x)$.} for different parametric regions of $s$:
\begin{itemize}
    \item If $-1 \leq s <0$, then $f(x)$ has only one zero in $x_0=1$. Since $f(1/2)=1/16 -s/4>0$ and $f(x)$ is a $C^\infty$ function, we conclude that $f(x) \geq 0$ for $x\in[0,1]$. Similarly, $f(5/4)=241/256-s/16>0$ which implies that $f(x) \geq 0$ for $x\in[1,\sqrt{2}]$. Thus we conclude that $f(x) \geq 0$ in all its domain.
    \item If $s=0$ then $f(x)=x^2(1-x^2)\geq 0$ for every $x$.
    \item If $s=1$ then it follows from \eqref{c} that $c=0$ and thus $x=1$. We have that $f(1)|_{s=1}=0$.
    \item If $0<s<1$, then $f(x)$ has three zeroes in $x_0=-\sqrt{s}$, $x'_0=\sqrt{s}$ and $x''_0=1$. Since $f(\sqrt{s}/2)=-3s(1-\sqrt{s})/4-3s^2/16<0$, we conclude that $f(x) \leq 0$ for $x\in[0,\sqrt{s}]$.
    
    Moreover, by studying the first derivative of $f(x)$ we find the following critical points (where $f'(x)=0$):
    \begin{align}
        x_1 &= \frac{1}{4}\left(1-\sqrt{1+8s}\right)  \label{x1}\\
        x'_1 &= \frac{1}{4}\left(1+\sqrt{1+8s}\right)  \label{x1p}\\
        x''_1 &= 1  \label{x1pp}.
    \end{align}
    We can easily deduce that $x_1<0$ and that $\sqrt{s}<x'_1<1$. By combining this with the fact that $f''(1)=2(1-s)>0$, we conclude that $f(x)\geq 0$ for $x\in[\sqrt{s},\sqrt{2}]$ and that it presents a local minimum in $x=1$.
\end{itemize}

From the above analysis, we deduce that $f(x)<0$ for $x \in [0,\sqrt{s})$ when $0<s<1$. However, as anticipated, the condition $0 \leq x < \sqrt{s}$ is not compatible with the definitions of $s$ and $c$. Indeed, by using the definitions \eqref{s} and \eqref{c} we show that $x \geq \sqrt{s}$ holds:
\begin{align}
    x \geq \sqrt{s} \,\,&\Leftrightarrow\,\, \sqrt{1-\sin a_1 \cos b_{-} \cos c_{-}} \geq \sqrt{\sin b_{-}\sin c_{-}} \nonumber\\
    &\Leftrightarrow\,\, 1-\sin a_1 \cos b_{-} \cos c_{-} \geq \sin b_{-}\sin c_{-}
\end{align}
which can be proved via the sufficient condition \eqref{cos(a-b)}. This implies that $f(\sqrt{1-\sin a_1 \cos b_{-} \cos c_{-}}) \geq 0$ for every $a_1$, $b_{-}$ and $c_{-}$, which proves \eqref{eq9}.

We thus proved \eqref{I5}, which completes the proof of \eqref{I3}, which in turn completes the proof of the validity of \eqref{I1}. This proves the lower bound \eqref{to-prove}, which employed in \eqref{uncert-rel2} provides us with the bound \eqref{entropy-goal2} on the conditional entropy of a fixed state $\rho_\alpha$ in the mixture:
\begin{equation}
    H(A_0|E)_{\rho_\alpha} \geq 1-h\left[\frac{1}{4}\left( \beta^{\alpha}_{\mathrm{H}} + 1+ \sqrt{(\beta^{\alpha}_{\mathrm{H}})^2 + 2\beta^{\alpha}_{\mathrm{H}} -3}\right)\right] \label{entropy-goal2-proved}
\end{equation}

Finally, by the convexity of \eqref{entropy-goal2-proved}, we extend the derived bound to the whole mixed state \eqref{3qubit-mixture} as shown in \eqref{entropy-goal3}, thus obtaining the entropy bound \eqref{Hbound-proof}. This concludes the proof of Theorem~\ref{thm:Hbound}.
\end{proof}

\subsection{Tightness of the bound}

In order to demonstrate that the entropy bound \eqref{Hbound-proof} is tight we need to show that, for every Bell value $\BH$, there exists a quantum state and a set of measurements performed by the parties such that the Bell value is exactly given by $\BH$ and such that the conditional entropy of Alice's outcome $A_0$ is equal to the rhs of \eqref{Hbound-proof}.

The states that satisfy the above conditions (for every $\BH$) belong to the following family of states diagonal in the GHZ basis:
\begin{equation}
    \tau(\nu) = \nu \ketbra{\psi_{0,0,0}}{\psi_{0,0,0}} + (1-\nu) \ketbra{\psi_{1,0,0}}{\psi_{1,0,0}} ,\label{taufam}
\end{equation}
where $\nu\in[1/2,1]$. In order to see this, we first assign to Eve maximum knowledge by letting her hold the purifying system $E$ of $\tau(\nu)$: 
\begin{equation}
    \ket{\Psi_{ABCE}} = \sqrt{\nu} \ket{\psi_{0,0,0}}\otimes\ket{e_0}_E + \sqrt{1-\nu} \ket{\psi_{1,0,0}}\otimes\ket{e_1}_E \label{purification-tau}.
\end{equation}
We now fix Alice's observable $A_0$ to be $Z$ and compute the classical-quantum state of Alice's $Z$ outcome and Eve's quantum system:
\begin{align}
    \tau_{ZE}(\tau) &= \sum_{z=0}^1 \ketbra{z}{z}_Z \otimes\Tr_{BC}[\braket{z|\Psi_{ABCE}}\braket{\Psi_{ABCE}|z}] \nonumber\\
    &= \sum_{z=0}^1 \frac{1}{2} \ketbra{z}{z}_Z \otimes \rho^z_E \label{tauZE1}.
\end{align}
In the above expression, $\rho^z_E$ is the conditional state of Eve given that Alice obtained outcome $Z=z$ and can be easily computed as:
\begin{align}
    \rho^z_E = \nu \ketbra{e_0}{e_0} + (-1)^z\sqrt{\nu(1-\nu)}(\ketbra{e_0}{e_1} + \mathrm{h.c.}) + (1-\nu) \ketbra{e_1}{e_1} \label{rhoEz},
\end{align}
with eigenvalues $\{0,1\}$.
Then, the conditional entropy of $\tau(\nu)$ can be computed in terms of the parameter $\nu$ as follows:
\begin{align}
    H(Z|E)_{\tau(\nu)} &= H(E|Z)_{\tau(\nu)} + H(Z)_{\tau(\nu)} - H(E)_{\tau(\nu)}\nonumber\\
    &=1-h(\nu) \label{Htau}
\end{align}
where $h(x)$ is the binary entropy.

The second step is to derive the maximal Bell value achievable by the state $\tau(\nu)$. We parametrize the parties' observables and partially fix the measurement angles\footnote{The measurement angle $a_0$ of Alice's observable $A_0$ is already fixed by the fact that we chose to study the conditional entropy of Alice's $Z$ outcome when the parties share the state $\tau(\nu)$.} $a_0,b_{+}$ and $c_{+}$ as in Subsec.~\ref{sec:ineq-reduction}. Then, by computing the Bell value \eqref{reduced-bellvalue} for the state $\tau(\nu)$, we get:
\begin{align}
    \beta^{\tau(\nu)}_{\mathrm{H}} =(2\nu -1)\sin a_1 \cos b_{-} \cos c_{-} + \sin b_{-} + \sin c_{-} - \sin b_{-} \sin c_{-}
\end{align}
The above expression can be maximized over the remaining measurement directions $a_1$, $b_{-}$ and $c_{-}$ thus yielding the following maximal Bell value:
\begin{align}
    \beta^{\tau(\nu)}_{\mathrm{H}} = 2\nu + \frac{1}{2\nu} -1 \label{Belltau},
\end{align}
with corresponding optimal angles:
\begin{align}
    a_1=\frac{\pi}{2} \quad,\quad b_{-}=\arctan\frac{1}{\sqrt{4\nu^2 -1}} 
    \quad,\quad c_{-}=\arcsin\frac{1}{2\nu} \label{optimal-angles}.
\end{align}

By reverting \eqref{Belltau} and by inserting the result in \eqref{Htau}, we express the conditional entropy of $\tau(\nu)$ in terms of its achievable Bell value $\beta^{\tau(\nu)}_{\mathrm{H}}$:
\begin{equation}
    H(Z|E)_{\tau(\nu)} = 1-h\left[\frac{1}{4}\left(\beta^{\tau(\nu)}_{\mathrm{H}}+ 1+ \sqrt{(\beta^{\tau(\nu)}_{\mathrm{H}})^2 + 2\beta^{\tau(\nu)}_{\mathrm{H}} -3}\right)\right] \label{Htau-optimal},
\end{equation}
which coincides with the lower bound \eqref{Hbound-proof} on the conditional entropy of Alice's $A_0$ outcome. This proves that the entropy bound in \eqref{Hbound-proof} is tight, since there exists an honest implementation that attains it.

Interestingly, we notice that the optimal measurement angles \eqref{optimal-angles} of Bob and Charlie for the state \eqref{taufam} are given by $b_{-}=c_{-}=\pi/2$ when $\nu\to 1/2$. In other words, the optimal observables of Bob and Charlie that maximize the Bell value tend to be compatible ($B_0=-B_1$ and $C_0=-C_1$) when $\tau(\nu)$ tends to a separable state. This fact has been recently observed in \cite{Tendick2021} for the CHSH inequality \cite{CHSH}, where less incompatible observables  yield higher Bell values while demanding less entanglement from the state.

\section{Proof of one-outcome entropy bound for MABK inequality} \label{app:MABK-one-outcome-proof}
\label{sec:MABK-bound}
The proof technique based on the uncertainty relation, used to derive the entropy bound of Theorem~\ref{thm:Hbound}, can be easily adapted to obtain further entropy bounds.

In this Appendix we rederive the conditional entropy bound on Alice's outcome $A_0$ when Alice, Bob and Charlie test the MABK inequality \cite{Mermin,Ardehali,BK93}. This bound was first obtained in \cite{JeremyMABK} via a correspondence between the MABK inequality and its bipartite counterpart, the CHSH inequality \cite{CHSH}. The bound is also obtained in \cite{Grasselli-PRXQuantum} by direct minimization of the conditional entropy for a fixed violation $\BM$. We report the bound for clarity.

\begin{thm}\label{thm:MABKbound}
Let Alice, Bob and Charlie test the MABK inequality \cite{Mermin,Ardehali,BK93} and obtain a Bell value of $\beta_{\mathrm{M}}$. Then, the von Neumann entropy of Alice's outcome $A_0$ conditioned on Eve's information $E$ satisfies
\begin{equation}
    H(A_0|E) \geq 1-h\left(\frac{1}{2} + \frac{1}{2} \sqrt{\frac{\BM^2}{8}-1}\right) \label{MABKbound},
\end{equation}
where $h(x)=-x \log_2 x + (1-x) \log_2 (1-x)$ is the binary entropy.
\end{thm}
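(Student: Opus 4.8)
The plan is to reuse, in a lighter form, the machinery developed for Theorem~\ref{thm:Hbound}. First, exactly as in Subsec.~\ref{subsec:simplification}, Jordan's lemma together with the convexity of the conditional von Neumann entropy reduces the task to deriving a \emph{convex} lower bound $F(\beta_{\mathrm{M}})$ on $H(A_0|E)$ for a single three-qubit state on which each party performs a rank-one binary projective measurement. I would identify the two-observable plane of each party with the $(x,z)$ plane, parametrize $A_i=\cos a_i Z+\sin a_i X$ and likewise $B_i,C_i$, and use the local $y$-rotations to set $A_0=Z$ and to fix $b_0+b_1$ and $c_0+c_1$, thereby putting $\beta_{\mathrm{M}}$ into a reduced form. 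Next, paralleling Lemma~\ref{lmm:simpl-state}, I would argue via Pauli-conjugation (twirling) maps --- which leave $\beta_{\mathrm{M}}$ unchanged and, by strong subadditivity plus a complex-conjugation symmetry, do not decrease $H(Z|E)$ --- that the shared state may be assumed block-diagonal in the GHZ basis with only the off-diagonal coherences reachable by the Pauli operators appearing in the reduced MABK expression, and that these coherences may be taken real.

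The decisive structural step is the same as in the Holz proof: the entropic uncertainty relation with quantum side information gives $H(Z|E)\ge 1-H(X|BC)$, and then the data-processing inequality (passing from $BC$ to $X_B X_C$ and finally to the product $X_{BC}=X_BX_C$) together with Fano's inequality yields
\begin{equation}
  H(Z|E)\ \ge\ 1-h\!\left(\frac{1+\abs{\braket{XXX}}}{2}\right),
\end{equation}
where $\braket{XXX}$ is the three-party $\sigma_x$-correlator.

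What remains is the MABK analogue of \eqref{sketch2}, namely the tight non-linear relation
\begin{equation}
  \abs{\braket{XXX}}\ \ge\ \sqrt{\frac{\beta_{\mathrm{M}}^2}{8}-1}\,,\qquad\text{i.e.}\qquad \beta_{\mathrm{M}}^2\ \le\ 8\bigl(1+\braket{XXX}^2\bigr).
\end{equation}
I would establish this by exploiting the full-correlator, permutation-symmetric structure of MABK: writing $\mathcal{B}_{\mathrm{M}}=A_0U+A_1V$ with $U=B_0C_1+B_1C_0$ and $V=B_0C_0-B_1C_1$, one has the operator identities $\{U,V\}=0$ and $\mathcal{B}_{\mathrm{M}}^2=U^2+V^2+[A_0,A_1]\,UV$; combined with $\braket{\mathcal{B}_{\mathrm{M}}}^2\le\braket{\mathcal{B}_{\mathrm{M}}^2}$ (non-negativity of the variance) and the qubit relations $[A_0,A_1]\propto Y_A$, $[B_0,B_1]\propto Y_B$, $[C_0,C_1]\propto Y_C$, the bound reduces to a short trigonometric optimization in $a_1,b_-,c_-$ over the block-diagonal state. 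Feeding this into the previous display and using that $x\mapsto h(1/2+x)$ is decreasing for $x>0$ gives $H(Z|E)\ge 1-h\bigl(\tfrac12+\tfrac12\sqrt{\beta_{\mathrm{M}}^2/8-1}\bigr)$ for a single state; the right-hand side is convex in $\beta_{\mathrm{M}}$, so averaging over the Jordan mixture and using linearity of the Bell value extends it to \eqref{MABKbound}.

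As with Theorem~\ref{thm:Hbound}, the main obstacle is this last step --- pinning down the tight relation between the single correlator $\braket{XXX}$ and the whole Bell value. Here, however, I expect it to be genuinely easier than in the Holz case: MABK is symmetric under permutations of the parties and is a full-correlator inequality, so both the reduced state and the reduced Bell expression are far more symmetric, and the nested system of inequalities \eqref{I1}--\eqref{I6} needed for Holz should collapse to one or two one-variable estimates (and, if the $\mathcal{B}_{\mathrm{M}}^2$ route turns out awkward, one can alternatively just substitute the reduced-state parametrization directly into $\beta_{\mathrm{M}}$ and $\braket{XXX}$ and bound term by term). Unlike Theorem~\ref{thm:Hbound}, no tightness claim is required, so the argument stops here.
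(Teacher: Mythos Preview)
Your high-level plan (Jordan reduction $\to$ uncertainty relation $\to$ link a single correlator to $\beta_{\mathrm{M}}$) is the right one, but the specific correlator you chose makes the decisive inequality \emph{false}. In your $(x,z)$ frame with $A_0=Z$, the claim $\abs{\braket{XXX}}\ge\sqrt{\beta_{\mathrm{M}}^2/8-1}$ fails: take the pure three-qubit state stabilised by $\{XXZ,\,XZX,\,ZXX\}$ (so also by $YYI,YIY,IYY$ and $-ZZZ$). For this state $\braket{XXX}=0$, yet with $A_0=Z$, $A_1=X$ and suitable $(x,z)$-plane angles for $B_i,C_i$ one reaches $\beta_{\mathrm{M}}=4$, giving $0\ge 1$. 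Your backup route via $\mathcal{B}_{\mathrm{M}}^2$ does not rescue this: the identity $\mathcal{B}_{\mathrm{M}}^2=4\id-[B_0,B_1][C_0,C_1]-[A_0,A_1][B_0,B_1]-[A_0,A_1][C_0,C_1]$ produces only \emph{two-body} $YY$-correlators (the $Y$'s being orthogonal to the observable plane), so $\braket{\mathcal{B}_{\mathrm{M}}^2}$ is simply not a function of $\braket{XXX}$. Likewise, the Holz twirl you invoke ($\id ZZ$, $YYX$) does \emph{not} commute with the MABK Bell operator in this frame, so that state reduction is unavailable here; the paper accordingly works with an unreduced three-qubit state.

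What the paper actually does is slightly different at exactly this point. It puts the observables in the $(x,y)$ plane with $A_0=X$, writes $\beta_{\mathrm{M}}=2\,\vec V\!\cdot\!\vec W$ with $\norm{\vec W}=1$ and applies Cauchy--Schwarz; four of the six entries of $\norm{\vec V}^2$ are then killed by the state-positivity inequalities $\braket{XXX}^2+\braket{XXY}^2\le 1$ and $\braket{XYY}^2+\braket{XYX}^2\le 1$, leaving $\beta_{\mathrm{M}}\le 2\sqrt{2+2\max\{\abs{\braket{YYX}},\abs{\braket{YXY}}\}^2}$. Crucially, in the uncertainty step $H(X|E)\ge 1-h\bigl(\tfrac12+\tfrac12\abs{\braket{Y\,O_B O_C}}\bigr)$ the paper exploits the \emph{freedom} in $O_B,O_C$ and chooses them to be $(Y,X)$ or $(X,Y)$ so that the correlator matches the Cauchy--Schwarz remainder. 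In your frame this corresponds to using $\braket{XZX}$ or $\braket{XXZ}$ --- i.e.\ Alice's complementary $X$ paired with \emph{one} $X$ and \emph{one} $Z$ on Bob/Charlie --- rather than $\braket{XXX}$. That matching of the virtual measurements to the surviving Bell-operator terms is the idea your sketch is missing.
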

We point out that the bound in \eqref{MABKbound} also holds for Alice's observable $A_1$ due to the symmetry of the MABK inequality.

\begin{proof}
From \cite{Grasselli-PRXQuantum}, we know that we can reduce the state shared by the parties to a three-qubit state and their measurements to rank-one projective measurements on their respective qubits.

We start by deriving an upper bound on the three-party MABK value. In order to do so, we consider its expression as obtained from the recursive definition \cite{Grasselli-PRXQuantum}:
\begin{align}
    \beta_M=\frac{1}{2}\braket{\left[A_0 (B_0 + B_1) + A_1 (B_0 - B_1)\right](C_0 + C_1) + \left[A_1(B_0 + B_1) - A_0 (B_0 - B_1)\right](C_0 - C_1)} \label{MABK-ineq},
\end{align}
and we exploit the degrees of freedom in the choice of the local reference frames to impose that every party's observable lies in the $(x,y)$ plane of the Bloch sphere:
\begin{align}
    A_i &= X \cos a_i + Y \sin a_i \\
    B_i &= X \cos b_i + Y \sin b_i \\
    C_i &= X \cos c_i + Y \sin c_i .
\end{align}
Moreover, we rotate Alice's reference frame along the $z$ axis such that:
\begin{align}
    A_0 &= X \label{A0}\\
    A_1 &= X\cos a  + Y\sin a  \label{A1}.
\end{align}
We define new operators $B_{\pm} := (B_0 \pm B_1)/2$ and $C_{\pm} := (C_0 \pm C_1)/2$ for Bob and Charlie, such that they can be recast as follows:
\begin{align}
    B_{+} &= \cos b_{-} (X\cos b_{+} + Y\sin b_{+}) \label{MABK-B+}\\
    B_{-} &= -\sin b_{-} (X\sin b_{+} - Y\cos b_{+}) \label{MABK-B-}\\
    C_{+} &= \cos c_{-} (X\cos c_{+} + Y\sin c_{+}) \label{MABK-C+}\\
    C_{-} &= -\sin c_{-} (X\sin c_{+} - Y\cos c_{+}) \label{MABK-C-},
\end{align}
where $b_{\pm} := (b_0 \pm b_1)/2$ and $c_{\pm} := (c_0 \pm c_1)/2$. We rotate Bob's and Charlie's reference frames such that $b_{+}=c_{+}=0$. By inserting everything in \eqref{MABK-ineq} we obtain the following simplified MABK value:
\begin{align}
    \beta_M &= 2\cos b_{-}\cos c_{-} \braket{XXX} - 2\sin b_{-}\sin c_{-} \braket{XYY} \nonumber\\
    &+ 2\braket{(X\cos a  + Y\sin a)(\sin b_{-}\cos c_{-} YX +\cos b_{-}\sin c_{-} XY)} \nonumber\\
    &=: 2\, \vec{V} \cdot \vec{W} \label{MABK-sim},
\end{align}
where in the last line we defined the vectors:
\begin{align}
    \vec{V} &= \left(\braket{XXX},\braket{XYY},\braket{XYX},\braket{XXY},\braket{YYX},\braket{YXY}\right) \label{V} \\
    \vec{W} &= \left(\cos b_{-}\cos c_{-},-\sin b_{-}\sin c_{-},\cos a \sin b_{-}\cos c_{-},\cos a \cos b_{-}\sin c_{-},\sin a \sin b_{-}\cos c_{-}, \right.\nonumber\\
    &\left.\sin a \cos b_{-}\sin c_{-}\right) \label{W}.
\end{align}
By the Cauchy-Schwarz inequality and by observing that $\norm{W}=1$, we obtain:
\begin{align}
    \beta_M \leq 2 \sqrt{\braket{YXY}^2 + \braket{YYX}^2 + (\braket{XXX}^2 + \braket{XXY}^2) + (\braket{XYY}^2 + \braket{XYX}^2)} \label{MABK-bound1}.
\end{align}
We now prove that for a generic three-qubit state the following inequalities hold:
\begin{align}
    \braket{XXX}^2 + \braket{XXY}^2 &\leq 1 \label{ineq1}\\
    \braket{XYY}^2 + \braket{XYX}^2 &\leq 1\label{ineq2}.
\end{align}
To show this, we write the generic three-qubit state in the GHZ basis:
\begin{equation}
    \rho = \sum_{i,j,k=0}^1 \lambda_{ijk}\ketbra{\psi_{i,j,k}}{\psi_{i,j,k}} + \sum_{j,k=0}^1 \left(c_{jk} \ketbra{\psi_{0,j,k}}{\psi_{1,j,k}} + \mathrm{h.c.}\right) + \dots \label{3qubit-rho},
\end{equation}
where $\lambda_{ijk}$ are the real diagonal elements, $c_{jk}$ are the complex coherences between the states $\ket{\psi_{0,j,k}}$ and $\ket{\psi_{1,j,k}}$ and $\mathrm{h.c.}$ stands for the Hermitian conjugate of the term preceding it. In the dots ``$\dots$'' we include every other coherence of the state $\rho$, since they do not play a role in the expectation values appearing in \eqref{ineq1} and \eqref{ineq2}. This can be readily seen by considering the action of the operators of \eqref{ineq1} and \eqref{ineq2} on the states of the GHZ basis:
\begin{align}
    XXX \ket{\psi_{i,j,k}} &= (-1)^i \ket{\psi_{i,j,k}} \label{XXX}\\
    XXY \ket{\psi_{i,j,k}} &= -\im (-1)^{i+k} \ket{\psi_{\bar{i},j,k}} \label{XXY}\\
    XYY \ket{\psi_{i,j,k}} &= (-1)^{i+j+k+1} \ket{\psi_{i,j,k}} \label{XYY}\\
    XYX \ket{\psi_{i,j,k}} &= -\im (-1)^{i+j} \ket{\psi_{\bar{i},j,k}} \label{XYX},
\end{align}
which yield the following expectation values on $\rho$:
\begin{align}
    \braket{XXX} &= \sum_{j,k=0}^1 \lambda_{0jk} - \lambda_{1jk} \label{expXXX}\\
    \braket{XXY} &= \sum_{j,k=0}^1 (-1)^k 2\mathrm{Im}(c_{jk}) \label{expXXY}\\
    \braket{XYY} &= -\sum_{j,k=0}^1 (-1)^{j+k}(\lambda_{0jk} - \lambda_{1jk}) \label{expXYY}\\
    \braket{XYX} &= \sum_{j,k=0}^1 (-1)^j 2\mathrm{Im}(c_{jk}) \label{expXYX}.
\end{align}

Before proving \eqref{ineq1} and \eqref{ineq2}, we need to derive a couple of conditions satisfied by the parameters describing the state $\rho$. Since $\rho \geq 0$, then also its restriction to every $2$-dimensional subspace spanned by $\{\ket{\psi_{0,j,k}},\ket{\psi_{1,j,k}}\}$ (for $j,k\in \{0,1\}$) must be positive-semidefinite. Indeed, let $P_{jk}:=\ketbra{\psi_{0,j,k}}{\psi_{0,j,k}} + \ketbra{\psi_{1,j,k}}{\psi_{1,j,k}}$ be the projector on such a subspace. Then,
\begin{align}
    &P_{jk} \rho P_{jk} \geq 0 \quad\iff \quad  \lambda_{0jk}\ketbra{\psi_{0,j,k}}{\psi_{0,j,k}} + \lambda_{1jk}\ketbra{\psi_{1,j,k}}{\psi_{1,j,k}} + c_{jk} \ketbra{\psi_{0,j,k}}{\psi_{1,j,k}} + \mathrm{h.c.} \geq 0 \label{restricted-state},
\end{align}
for all $j$ and $k$. A necessary condition for \eqref{restricted-state} is that its determinant is non-negative (the eigenvalues of positive-semidefinite operators are all non-negative):
\begin{equation}
    \lambda_{0jk}\lambda_{1jk} \geq \abs{c_{jk}}^2 \geq \mathrm{Im}^2(c_{jk}) \label{determinant-condition}.
\end{equation}

A second condition on the state's parameters can be found starting from the following chain of equivalences (note that $\lambda_{ijk}\geq 0$ follows from $\rho \geq 0$):
\begin{align}
    &2 \left(\sqrt{\lambda_{0jk}\lambda_{1j'k'}} -\sqrt{\lambda_{1jk}\lambda_{0j'k'}}\right)^2 \geq 0 \nonumber\\
    &\iff \,\, 2 \left(\lambda_{0jk}\lambda_{1j'k'} + \lambda_{1jk}\lambda_{0j'k'} -2\sqrt{\lambda_{0jk}\lambda_{1jk}\lambda_{0j'k'}\lambda_{1j'k'}} \right) \geq 0 \nonumber\\
    &\iff \lambda_{0jk}\lambda_{1j'k'} + \lambda_{1jk}\lambda_{0j'k'} \geq 4\sqrt{\lambda_{0jk}\lambda_{1jk}\lambda_{0j'k'}\lambda_{1j'k'}} -\lambda_{0jk}\lambda_{1j'k'} - \lambda_{1jk}\lambda_{0j'k'} \label{condition2}.
\end{align}
By using \eqref{determinant-condition} twice we can lower bound the square-root on the rhs as follows:
\begin{align}
    4\sqrt{\lambda_{0jk}\lambda_{1jk}\lambda_{0j'k'}\lambda_{1j'k'}} \geq 4 \abs{\mathrm{Im}(c_{jk})\mathrm{Im}(c_{j'k'})} \geq 4 (-1)^{k+k'}\mathrm{Im}(c_{jk})\mathrm{Im}(c_{j'k'}).
\end{align}
By employing the last expression in \eqref{condition2}, we obtain the second condition needed to prove inequalities \eqref{ineq1} and \eqref{ineq2}: 
\begin{align}
    \lambda_{0jk}\lambda_{1j'k'} + \lambda_{1jk}\lambda_{0j'k'} \geq 4 (-1)^{k+k'}\mathrm{Im}(c_{jk})\mathrm{Im}(c_{j'k'}) -\lambda_{0jk}\lambda_{1j'k'} - \lambda_{1jk}\lambda_{0j'k'} \label{condition2final}.
\end{align}

We now proceed on proving \eqref{ineq1}. The lhs of \eqref{ineq1} can be expressed using \eqref{expXXX} and \eqref{expXXY} as follows:
\begin{align}
    \braket{XXX}^2 + \braket{XXY}^2 &= \left[ \sum_{j,k=0}^1 (\lambda_{0jk} - \lambda_{1jk})\right]^2 + 4 \left[\sum_{j,k=0}^1 (-1)^k \mathrm{Im}(c_{jk})\right]^2 \nonumber\\
    &= \sum_{j,k=0}^1 \left[(\lambda_{0jk} - \lambda_{1jk})^2 + 4\mathrm{Im}^2(c_{jk})\right]  \nonumber\\
    &+ 2\sum_{(j,k)\neq (j',k')}\left[(\lambda_{0jk} - \lambda_{1jk})(\lambda_{0j'k'}  - \lambda_{1j'k'}) + 4 (-1)^{k+k'}\mathrm{Im}(c_{jk})\mathrm{Im}(c_{j'k'})\right] \nonumber\\
    &= \sum_{j,k=0}^1 \left[(\lambda_{0jk} - \lambda_{1jk})^2 + 4\mathrm{Im}^2(c_{jk})\right]  \nonumber\\
    &+ 2\sum_{(j,k)\neq (j',k')}\left[\lambda_{0jk}\lambda_{0j'k'} +\lambda_{1jk}\lambda_{1j'k'} + 4 (-1)^{k+k'}\mathrm{Im}(c_{jk})\mathrm{Im}(c_{j'k'}) -\lambda_{0jk}\lambda_{1j'k'} -\lambda_{1jk}\lambda_{0j'k'} \right].
\end{align}
We now upper bound the above expression by using \eqref{determinant-condition} in the first sum and \eqref{condition2final} in the second sum. We obtain:
\begin{align}
    \braket{XXX}^2 + \braket{XXY}^2 &\leq \sum_{j,k=0}^1 (\lambda_{0jk} + \lambda_{1jk})^2 +2\sum_{(j,k)\neq (j',k')} (\lambda_{0jk}+\lambda_{1jk})(\lambda_{0j'k'}+\lambda_{1j'k'}) \nonumber\\
    &= \left[ \sum_{j,k=0}^1 \lambda_{0jk} + \lambda_{1jk}\right]^2 =1,
\end{align}
where in the last equality we used the fact that $\Tr \rho=1$. This proves \eqref{ineq1}.

Similarly, the lhs of \eqref{ineq2} can be expressed through \eqref{expXYY} and \eqref{expXYX} and upper bounded using \eqref{determinant-condition} and \eqref{condition2final} as follows:
\begin{align}
    \braket{XYY}^2 + \braket{XYX}^2 &= \sum_{j,k=0}^1 \left[(\lambda_{0jk} - \lambda_{1jk})^2 + 4\mathrm{Im}^2(c_{jk})\right]  \nonumber\\
    &+ 2\sum_{(j,k)\neq (j',k')}(-1)^{j+j'+ k+k'}\left[(\lambda_{0jk} - \lambda_{1jk})(\lambda_{0j'k'}  - \lambda_{1j'k'}) + 4 (-1)^{k+k'}\mathrm{Im}(c_{jk})\mathrm{Im}(c_{j'k'})\right] \nonumber\\
    &\leq \sum_{j,k=0}^1 (\lambda_{0jk} + \lambda_{1jk})^2 +2\sum_{(j,k)\neq (j',k')} (-1)^{j+j'+ k+k'}(\lambda_{0jk}+\lambda_{1jk})(\lambda_{0j'k'}+\lambda_{1j'k'}) \nonumber\\
    &=\left[ \sum_{j,k=0}^1 (-1)^{j+k}(\lambda_{0jk} + \lambda_{1jk})\right]^2 \leq 1,
\end{align}
which proves \eqref{ineq2}.

Finally, by employing \eqref{ineq1} and \eqref{ineq2} in \eqref{MABK-bound1} we derive the following upper bound on the MABK value:
\begin{align}
    \beta_M \leq 2 \sqrt{2+ 2\max\{\abs{\braket{YXY}},\abs{\braket{YYX}}\}^2} \label{MABK-bound2}.
\end{align}

Now, we can turn to the conditional entropy of Alice's outcome when she measures $A_0=X$ and lower bound it with the uncertainty relation \cite{uncertrel2010}:
\begin{align}
    H(X|E) \geq 1 - H(Y|BC) \geq 1-h(Q_{Y,O_B O_C}) \label{MABK-entropybound1},
\end{align}
where in the second inequality we followed the same steps that lead to \eqref{dataprocess} and \eqref{binary-entropy}, and where $Q_{Y,O_B O_C}$ is defined as the probability that the $Y$ outcome of Alice differs from the product of the outcomes of Bob ($O_B$) and Charlie ($O_C$), that is:
\begin{equation}
 Q_{Y,O_B  O_C} := \Pr[YO_B O_C=-1] = \frac{1-\braket{Y O_B O_C}}{2}.   
\end{equation}
Moreover, by using the properties of the binary entropy $h(x)$ as in \eqref{H(X|BC)-upp}, we obtain:
\begin{align}
    H(X|E) \geq 1-h\left(\frac{1+\abs{\braket{Y O_B O_C}}}{2}\right) \label{MABK-entropybound2}.
\end{align}

We emphasize that we can employ the uncertainty relation and derive a bound like the one in \eqref{MABK-entropybound2} independently of the measurement settings $A_i$, $B_i$ and $C_i$ of the parties in the DI scenario. Of course, in order to make the derived inequality useful in our case, we set one of Alice's measurements to $X$ --which is also one of Alice's settings in the DI scenario, see \eqref{A0}-- so that we obtain an inequality for the conditional entropy $H(X|E)$. For this argument, we can arbitrarily choose Bob's and Charlie's measurements in \eqref{MABK-entropybound2} to be either $O_B=X$ and $O_C=Y$ or $O_B=Y$ and $O_C=X$. Both cases lead to valid lower bounds on the conditional entropy of Alice's $X$ outcome. We can then lower bound the conditional entropy $H(X|E)$ by:
\begin{align}
    H(X|E) \geq 1-h\left(\frac{1+\max\{\abs{\braket{YXY}},\abs{\braket{YYX}}\}}{2}\right) \label{MABK-entropybound3}.
\end{align}

By reverting the upper bound on the MABK value \eqref{MABK-bound2}, we obtain:
\begin{equation}
    \max\{\abs{\braket{YXY}},\abs{\braket{YYX}}\} \geq \sqrt{\frac{\BM^2}{8}-1} \label{MABK-bound-reverted}.
\end{equation}
Finally, by employing \eqref{MABK-bound-reverted} in \eqref{MABK-entropybound3} we recover the entropy bound \eqref{MABKbound}.

We point out that, although the proof derives a lower bound on the conditional entropy of Alice's $X$-basis outcome, the derived bound is general and holds for any measurement Alice implements. This is because at the beginning of the proof, we purposely set Alice's local reference frame such that her qubit observable $A_0$ coincides with the Pauli operator $X$. This has no effect on the description of the state \eqref{3qubit-rho} shared by Alice, Bob and Charlie since it is a completely generic three-qubit state for any choice of local reference frames.
\end{proof}

\section{Numerical computation of two-outcome entropy bounds} \label{app:two-outcome-numerical}

In this Appendix we describe the steps that allow us to numerically compute lower bounds on the two-outcome entropy $H(A_0 B_0|E)$ for the Holz inequality, the Parity-CHSH inequality and the CHSH inequality. The bounds are plotted in Fig.~\ref{fig:ABentropy-comparison} and are used in Sec.~\ref{sec:DIRE} to compare the performance of DIRE protocols based on different Bell inequalities.

\paragraph{Holz inequality} We compute a numerical lower bound on $H(A_0 B_0|E)$ as a function of the violation $\beta_{\mathrm{H}}$ of the Holz inequality for three parties. The bound is obtained by direct optimization of the entropy once the violation is fixed. Based on the numerical bound, we also conjecture the correspondent analytical expression (Conjecture~\ref{conj:H2bound}).

In order to make the optimization numerically feasible, we arbitrarily fix the local reference frames of Alice, Bob and Charlie and parametrize the state they share as shown in Subsec.~\ref{subsec:simplification}. Then, the measurement angle $b_1$ is fixed by $b_0$ through \eqref{b+c+} as follows:
$b_1=\pi-b_0$, which implies that $b_{-}=b_0-\pi/2$. By substituting in the Bell value \eqref{reduced-bellvalue} of the Holz inequality, we obtain:
\begin{equation}
     v_{\mathrm{H}} = \left(\cos a_1 \braket{ZXX} + \sin a_1 \braket{XXX}\right)\sin b_0 \cos c_{-} - \cos b_0 \braket{ZZ\id} + \sin c_{-} \braket{Z\id Z} +\cos b_0 \sin c_{-}\braket{\id ZZ} \label{reduced-bellvalue2},
\end{equation}
which is now written in terms of the free measurement angles $a_1$ and $c_{-}$ and the angle $b_0$ that instead appears in the conditional entropy expression. Note that, thanks to the parametrization of the state in Subsec.~\ref{sec:state-reduction}, the expectation values in \eqref{reduced-bellvalue2} are easily written in terms of the state parameters $\{\rho_{ijk},t_{jk}\}$ as follows:
\begin{align}
    \braket{XXX} &= \sum_{j,k=0}^1 (\rho_{0jk} - \rho_{1jk}) \cos(2 t_{jk}) \label{<XXX>}\\
    \braket{ZXX} &= \sum_{j,k=0}^1 (\rho_{0jk} - \rho_{1jk}) \sin(2t_{jk})  \label{<ZXX>}\\
    \braket{ZZ \id} &= \sum_{j,k=0}^1 (-1)^j (\rho_{0jk} - \rho_{1jk})\cos(2t_{jk}) \label{<ZZI>}\\
    \braket{Z\id Z} &= \sum_{j,k=0}^1 (-1)^k (\rho_{0jk} - \rho_{1jk})\cos(2t_{jk}) \label{<ZIZ>}\\
    \braket{\id ZZ} &= \sum_{j,k=0}^1 (-1)^{j+k} (\rho_{0jk} + \rho_{1jk}). \label{<IZZ>}
\end{align}
By also expressing the entropy $H(A_0 B_0|E)$ in terms of the state parameters $\{\rho_{ijk},t_{jk}\}$ and the angle $b_0$ (recall that Alice's reference frame is chosen such that $a_0=0$), we numerically solve the following optimization problem:
\begin{align}
    &\min_{\{\rho_{ijk},t_{jk},b_0,a_1,c_{-}\}} H(A_0 B_0|E) (\rho_{ijk},t_{jk},b_0) \nonumber\\
    &\quad\quad\mbox{sub. to}\quad v_{\mathrm{H}}(\rho_{ijk},t_{jk},b_0,a_1,c_{-})=\beta_{\mathrm{H}} \,;\,{\textstyle\sum_{ijk}}\,\rho_{ijk}=1 \,;\,\rho_{ijk}\geq 0, \label{optimization-Holz}
\end{align}
when varying $\beta_{\mathrm{H}}$ in the interval $(1,3/2]$. In reality, we solve the equivalent --in the sense that leads to the same entropy for every $\beta_{\mathrm{H}}$-- but simpler optimization problem: 
\begin{align}
    &\min_{\{\rho_{ijk},t_{jk},b_0\}} H(A_0 B_0|E) (\rho_{ijk},t_{jk},b_0) \nonumber\\
    &\quad\mbox{sub. to}\quad\bar{v}_{\mathrm{H}}(\rho_{ijk},t_{jk},b_0)\geq\beta_{\mathrm{H}} \,;\,{\textstyle\sum_{ijk}}\,\rho_{ijk}=1 \,;\,\rho_{ijk}\geq 0, \label{optimization-Holz2}
\end{align}
where $\bar{v}_{\mathrm{H}}$ is the maximum of \eqref{reduced-bellvalue2} over the free angles $a_1$ and $c_{-}$:
\begin{align}
    \bar{v}_{\mathrm{H}}=\sqrt{\sin^2 b_0 (\braket{ZXX}^2 + \braket{XXX}^2)+\left(\braket{Z\id Z} + \cos b_0 \braket{\id ZZ}\right)^2} -\cos b_0 \braket{ZZ \id} \label{reduced-bellvalue3}.
\end{align}

The numerical plot points obtained by solving \eqref{optimization-Holz2} with the built-in functions of Wolfram Mathematica \cite{Mathematica} are reported in Fig.~\ref{fig:ABentropy-Holz}, together with our conjectured bound on $H(A_0 B_0|E)$. Our conjecture on the analytical expression of the entropy bound is given in Conjecture~\ref{conj:H2bound}.

\paragraph{Parity-CHSH inequality} The bound on $H(A_0 B_0|E)$ when three parties test the Parity-CHSH inequality is also obtained by direct numerical optimization, similarly to the bound for the Holz inequality.

As a matter of fact, note that the Parity-CHSH inequality \eqref{parity-chsh-ineq} is a particular case (upon relabeling the observables) of the Holz inequality \eqref{timo-ineq} for $c_0=c_1$, i.e., when Charlie's two measurements coincide. Thus, the optimization problem yielding the entropy bound for the Parity-CHSH inequality is equal to \eqref{optimization-Holz}, where we set $c_{-}=0$.

\paragraph{CHSH inequality} The bound on $H(A_0 B_0|E)$ when two parties test the CHSH inequality is again obtained by direct numerical optimization. In order to simplify the optimization, we apply the results of \cite{Grasselli-PRXQuantum} to the CHSH scenario and parametrize the state shared by Alice and Bob as a Bell-diagonal state:
\begin{equation}
    \rho=\sum_{i,j=0}^1 \lambda_{ij}\ketbra{\psi_{i,j}}{\psi_{i,j}} \label{CHSHstate},
\end{equation}
where $\ket{\psi_{i,j}}=(\ket{0j}+(-1)^i\ket{1\bar{j}})/\sqrt{2}$ are the states of the Bell basis. We also assume without loss of generality that the parties' observables are rank-one projective measurements in the $(x,y)$-plane, defined by the eigenstates:
\begin{align}
    \ket{a}&=\frac{1}{\sqrt{2}}\left(\ket{0}+(-1)^{a} e^{\im \varphi_{A_k}}\ket{1}\right) \label{keta}\\
    \ket{b}&=\frac{1}{\sqrt{2}}\left(\ket{0}+(-1)^{b} e^{\im \varphi_{B_k}}\ket{1}\right) \label{ketb}
\end{align}
where $a,b\in\{0,1\}$ are the outcomes of Alice's and Bob's observables $A_k$ and $B_k$ and $\varphi_{A_k},\varphi_{B_k}$ are the corresponding measurement directions, respectively. Then, one can compute the joint probability of obtaining outcomes $a$ and $b$ when Alice and Bob measured $A_k$ and $B_l$. We obtain:
\begin{align}
    p(a,b|k,l)=\frac{1}{4}\left[1+(-1)^{a+b}\cos(\varphi_{A_k}+\varphi_{B_l})(\lambda_{00}-\lambda_{10})+(-1)^{a+b}\cos(\varphi_{A_k}-\varphi_{B_l})(\lambda_{01}-\lambda_{11})\right], \label{p(a,b)}
\end{align}
and observe that $p(0,0|k,l)=p(1,1|k,l)$ and $p(0,1|k,l)=p(1,0|k,l)=1/2-p(0,0|k,l)$. We can thus express all the probabilities appearing in the conditional entropy $H(A_0 B_0|E)$ in terms of $p:=p(0,0|0,0)$.

We can now derive a simple expression for the conditional entropy of interest:
\begin{align}
    H(A_0 B_0|E)&= H(A_0 B_0) + H(E|A_0 B_0) - H(E) \nonumber\\
    &= -2p\log_2 p-2(1/2-p)\log_2(1/2-p) -H(\{\lambda_{ij}\}) \nonumber\\
    &= 1+h(2p) -H(\{\lambda_{ij}\}), \label{CHSHentropy}
\end{align}
where in the second equality we used the fact that the state shared by Alice, Bob and Eve is pure (hence $H(E)=H(\rho)$) and the conditional state $\rho^{a,b}_E$ of Eve, given that Alice and Bob obtained outcomes $a$ and $b$, is still pure (thus $H(E|A_0 B_0)=0$).

The CHSH Bell value, for the parametrization described above, reduces to:
\begin{align}
    v_{\mathrm{C}}&=\braket{A_0 B_0} + \braket{A_0 B_1} + \braket{A_1 B_0}-\braket{A_1 B_1} \nonumber\\
    &= (\lambda_{00}-\lambda_{10})(\cos(\varphi_{A_0}+\varphi_{B_0})+\cos(\varphi_{A_0}+\varphi_{B_1})+\cos(\varphi_{A_1}+\varphi_{B_0})-\cos(\varphi_{A_1}+\varphi_{B_1})) \nonumber\\
    &+(\lambda_{01}-\lambda_{11})(\cos(\varphi_{A_0}-\varphi_{B_0})+\cos(\varphi_{A_0}-\varphi_{B_1})+\cos(\varphi_{A_1}-\varphi_{B_0})-\cos(\varphi_{A_1}-\varphi_{B_1})) \label{CHSHvalue}.
\end{align}

We then numerically solved the following optimization problem with the built-in functions of Wolfram Mathematica \cite{Mathematica}:
\begin{align}
    &\min_{\{\lambda_{ij},\varphi_{A_0},\varphi_{B_0},\varphi_{A_1},\varphi_{B_1}\}} 1+h(2p) -H(\{\lambda_{ij}\}) \nonumber\\
    &\quad\quad\mbox{sub. to}\quad v_{\mathrm{C}}(\lambda_{ij},\varphi_{A_0},\varphi_{B_0},\varphi_{A_1},\varphi_{B_1})=\beta_{\mathrm{C}} \,;\,{\textstyle\sum_{ij}}\,\lambda_{ij}=1 \,;\,\lambda_{ij}\geq 0 \label{optimization-CHSH},
\end{align}
where $p$ is given by:
\begin{align}
    p=\frac{1}{4}\left[1+\cos(\varphi_{A_0}+\varphi_{B_0})(\lambda_{00}-\lambda_{10})+\cos(\varphi_{A_0}-\varphi_{B_0})(\lambda_{01}-\lambda_{11})\right]. \label{p}
\end{align}

The numerical solution of \eqref{optimization-CHSH} is the entropy bound reported in Fig.~\ref{fig:ABentropy-comparison}. We remark that the same bound has been independently computed in \cite{Colbeck2021} by combining an analytical simplification similar to the one reported here with numerical techniques.

\section{Tightness of one-outcome entropy bound for Parity-CHSH inequality} \label{app:tightness-parityCHSH-bound}

In this Appendix we demonstrate that the lower bound on the entropy of Alice's outcome $A_0$ when three parties test the Parity-CHSH inequality, reported in \eqref{entropybound-parity-app}, is tight.

\begin{Lmm} \label{lmm:tightness-parityCHSH-bound}
Let Alice, Bob and Charlie test the Parity-CHSH inequality \cite{JeremyParityCHSH} and obtain a Bell value of $\beta_{\mathrm{pC}}$. Then, the following lower bound on the von Neumann entropy of Alice's outcome $A_0$, conditioned on Eve's information $E$,
\begin{equation}
    H(A_0|E)\geq 1- h\de{\frac{1}{2}+\frac{1}{2}\sqrt{\de{\beta_{\mathrm{pC}}}^2-1}}, \label{entropybound-parity-proof}
\end{equation}
is tight. Namely, that there exists a quantum state and a set of measurements yielding a Bell value of $\beta_{\mathrm{pC}}$ with conditional entropy of Alice's outcome $A_0$ given by the rhs of \eqref{entropybound-parity-proof}.
\end{Lmm}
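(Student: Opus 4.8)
The plan is to prove tightness exactly as was done for the Holz bound at the end of Appendix~\ref{app:Holz-one-outcome-proof}: exhibit, for every attainable Bell value $\beta_{\mathrm{pC}}\in[1,\sqrt{2}]$, an explicit state and set of measurements whose conditional entropy $H(A_0|E)$ equals the right-hand side of \eqref{entropybound-parity-proof}. The key structural observation, already used in Appendix~\ref{app:two-outcome-numerical}, is that the Parity-CHSH inequality \eqref{parity-chsh-ineq-app} is, up to a relabelling of observables, the special case of the Holz inequality \eqref{timo-ineq-app} in which Charlie's two measurements coincide, i.e., $c_-=0$. Hence the reductions of Lemmas~\ref{lmm:simpl-bellvalue} and~\ref{lmm:simpl-state} apply: without loss of generality I may work with a three-qubit state block-diagonal in the GHZ basis, set $A_0=Z$, and use the reduced Bell expression obtained from \eqref{reduced-bellvalue} by putting $c_-=0$, namely $\beta_{\mathrm{pC}}=\left(\cos a_1\braket{ZXX}+\sin a_1\braket{XXX}\right)\cos b_-+\sin b_-\braket{ZZ\id}$.

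I would then take the same one-parameter family $\tau(\nu)=\nu\,\proj{\psi_{0,0,0}}+(1-\nu)\,\proj{\psi_{1,0,0}}$ with $\nu\in[1/2,1]$ from \eqref{taufam}, give its purification $\ket{\Psi_{ABCE}}$ to Eve, and let $A_0=Z$. The computation of Alice's classical--quantum state $\rho_{ZE}$ is verbatim the one in Appendix~\ref{app:Holz-one-outcome-proof}: the two conditional states $\rho_E^{z}$ of Eve are pure, so $H(Z|E)_{\tau(\nu)}=H(E|Z)+H(Z)-H(E)=1-h(\nu)$. For the Bell value, on $\tau(\nu)$ one has $\braket{XXX}=2\nu-1$, $\braket{ZXX}=0$ and $\braket{ZZ\id}=1$, so the reduced expression collapses to $\beta_{\mathrm{pC}}=(2\nu-1)\sin a_1\cos b_-+\sin b_-$. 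Maximising over the free angles with the elementary bound $A\cos\theta+B\sin\theta\leq\sqrt{A^2+B^2}$ gives the optimum $a_1=\pi/2$, $b_-=\arctan\bigl(1/(2\nu-1)\bigr)$, and $\beta^{\tau(\nu)}_{\mathrm{pC}}=\sqrt{(2\nu-1)^2+1}$; taking $A_1=\sigma_x$ and Charlie's measurement $C=\sigma_x$ pins down a concrete strategy attaining it. As $\nu$ sweeps $[1/2,1]$, $\beta^{\tau(\nu)}_{\mathrm{pC}}$ sweeps $[1,\sqrt{2}]$ bijectively, so every admissible violation is realised.

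To finish, I would invert $\beta_{\mathrm{pC}}=\sqrt{(2\nu-1)^2+1}$, obtaining $\nu=\tfrac{1}{2}\bigl(1+\sqrt{\beta_{\mathrm{pC}}^2-1}\bigr)$, and substitute it into $H(Z|E)_{\tau(\nu)}=1-h(\nu)$ to get precisely $1-h\bigl(\tfrac{1}{2}+\tfrac{1}{2}\sqrt{\beta_{\mathrm{pC}}^2-1}\bigr)$, the right-hand side of \eqref{entropybound-parity-proof}. Since this strategy also yields Bell value $\beta_{\mathrm{pC}}$, the lower bound is attained and hence tight. The only mildly delicate points are keeping track of the relabelling that identifies the Parity-CHSH functional with the $c_-=0$ slice of the reduced Holz form, and confirming that $\tau(\nu)$ saturates the angle optimisation for that functional; both are routine and rely only on machinery already developed in Appendix~\ref{app:Holz-one-outcome-proof}, so no genuine obstacle is expected.
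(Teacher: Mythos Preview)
Your proposal is correct and essentially identical to the paper's own proof: both use the same family $\tau(\nu)=\nu\,\proj{\psi_{0,0,0}}+(1-\nu)\,\proj{\psi_{1,0,0}}$, compute $H(A_0|E)_{\tau(\nu)}=1-h(\nu)$, show that the maximal Parity-CHSH value on $\tau(\nu)$ is $\sqrt{(2\nu-1)^2+1}$, and invert to match the bound. The only cosmetic difference is that you reach the reduced Bell expression via the $c_-=0$ slice of the Holz reduction (with $b_+=\pi/2$), whereas the paper parametrizes the Parity-CHSH operators directly with $b_+=0$, which merely swaps the roles of $\sin$ and $\cos$ in the angle optimisation.
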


\begin{proof}
Consider the same family of states used to prove the tightness of the bound in \eqref{Hbound}, that is:
\begin{equation}
    \tau(\nu) = \nu \ketbra{\psi_{0,0,0}}{\psi_{0,0,0}} + (1-\nu) \ketbra{\psi_{1,0,0}}{\psi_{1,0,0}} ,\label{taufam-proof}
\end{equation}
where $\nu\in[1/2,1]$. Then, the conditional entropy of Alice's outcome $A_0=Z$ can be computed in terms of the parameter $\nu$ and reads:
\begin{align}
    H(A_0|E)_{\tau(\nu)} =1-h(\nu) . \label{entropy-taufam-parityCHSH}
\end{align}
Now, we compute the maximal violation of the Parity-CHSH inequality \eqref{parity-chsh-ineq} achieved by the state $\tau(\nu)$. To do this, we first parametrize the parties' observables as in Subsec.~\ref{sec:ineq-reduction} and orient the reference frames such that $C=X$, $A_0=Z$ and $b_{+}=0$ (thus $B_{+}=\cos b_0 Z$ and $B_{-}=\sin b_0 X$). Then, the Bell value of the Parity-CHSH inequality reads:
\begin{align}
    \beta_{\rm pC}=\sin b_0 (\cos a_1 \braket{ZXX} + \sin a_1 \braket{XXX}) + \cos b_0 \braket{ZZ \id}.
\end{align}
By computing the expectation values on the state $\tau(\nu)$, we obtain:
\begin{align}
    \beta^{\tau(\nu)}_{\mathrm{pC}} =\sin b_0 \sin a_1 (2\nu -1) + \cos b_0.
\end{align}
The above expression can be maximized over the remaining measurement directions $a_1$ and $b_0$ yielding the following maximal Bell value:
\begin{align}
    \beta^{\tau(\nu)}_{\mathrm{pC}} = \sqrt{(2\nu -1)^2 + 1}.
\end{align}
By reverting the last expression we obtain:
\begin{align}
    \nu = \frac{1}{2} + \frac{1}{2}\sqrt{(\beta^{\tau(\nu)}_{\mathrm{pC}})^2 -1}
\end{align}
which substituted in \eqref{entropy-taufam-parityCHSH} returns exactly the lower bound in \eqref{entropybound-parity-proof}. Hence we proved that the bound is tight.
\end{proof}

\end{document}